\newtheorem{definition}{Definition}
\newtheorem{example}{Example}
\newtheorem*{example*}{Example}
\newtheorem{lemma}{Lemma}
\newtheorem{theorem}{Theorem}
\newtheorem{remark}{Remark}
\newtheorem{corollary}{Corollary}
\newtheorem{note}{Note}
\newcommand\myeq{\stackrel{\mathclap{\normalfont\mbox{$*$}}}{=}}
\begin{document}

\title{Improved Hotplug Caching Schemes Using PDAs and t-Designs}

\author{Charul Rajput and B. Sundar Rajan \\ Department of Electrical Communication Engineering, Indian Institute of Science, Bengaluru \\
E-mails: charulrajput@iisc.sc.in, bsrajan@iisc.ac.in}



\maketitle

\begin{abstract}
We consider a coded caching system in which some users are offline at the time of delivery. Such systems are called hotplug coded caching systems. A placement delivery array (PDA) is a well-known tool for constructing a coded caching scheme for dedicated caches. In this paper, we introduce the concept of PDAs for hotplug coded caching schemes and refer to it as a hotplug placement delivery array (HpPDA). We give an algorithm to describe the placement and the delivery phase of a hotplug coded caching scheme using HpPDA. We show that an existing hotplug coded caching scheme given by Y. Ma and D. Tuninetti in 2022 corresponds to a class of HpPDAs and then propose a method to further improve the rate of that scheme. Additionally, we construct a class of HpPDAs using $t$-designs, which corresponds to a scheme for hotplug coded caching systems. We further improve the rate of this scheme and prove that the cut-set bound is achieved in some higher memory range for a hotplug coded caching system with three active users.
	\end{abstract}
	
	\begin{IEEEkeywords}
		Hotplug coded caching systems, Placement delivery array, Combinatorial designs 
	\end{IEEEkeywords}

\section{Introduction}
Coded caching is an emerging tool to reduce transmission load during peak traffic hours in the network. The concept of coded caching was first introduced in by Maddah-Ali and Niesen in \cite{MAN2014}. In a general setup of coded caching systems, there is a server with $N$ files and $K$ users connected to the server by an error-free shared link. Each user is equipped with a cache of size equivalent to $M$ files. A coded caching scheme consists of two phases: the placement phase and the delivery phase. In the placement phase, the server stores some content in each user’s cache when the network is not busy and the users have not presented their demands yet. In the delivery phase, when all users have revealed their demands, the transmission from the server will be made in such a way that each user will get their demanded files using transmissions and the cache content. The aim of any coded caching scheme is to jointly design such placement and delivery phases that minimize transmission load at the time of delivery.

Several coded caching schemes have been proposed in the literature \cite{MAN2014, AYG2016, CFL2016, YMA2017, GR2017, MKR2021}. The first and popular scheme presented in \cite{MAN2014} is referred to as the Maddah-Ali and Niesen (MAN) scheme. In \cite{WTP2020}, this scheme was proved to achieve the converse bound on the rate when the placements are uncoded, and the number of files is not less than the number of users. For the case when the number of files is less than the number of users, the authors of \cite{YMA2017} have proposed an improved version of the MAN scheme, referred to as the YMA (Yu, Maddah-Ali and Avestimehr) scheme that achieves the converse bound on the rate given in \cite{YMA2017}. There are many variations of the coded caching model, 
such as coded caching systems with decentralized server \cite{MN2014}, hierarchical coded caching \cite{KNMD2016}, coded caching with nonuniform demands \cite{NM2016}, multi-antenna coded caching \cite{SCK2017}, coded caching for multi-level popularity and access \cite{HKD2017}, coded caching with private demands \cite{WC2020}. 

Another variation of the basic coded caching model is the hotplug coded caching system \cite{MT2022}, in which some users are offline at the time of delivery. Clearly, this is more practical than the original setup because some users may fail to convey their demands to the server; however, in the MAN scheme, the demands of all users are needed to generate the transmissions. In a hotplug coded caching system, there are a total of $K$ users, but only $K'$ users are active at the time of delivery. During the placement, only the number of active users should be known to the server, not their identity. That means placements in the users' cache will be made in such a way that in the delivery phase, the server can send the transmissions as soon as the demands of any $K'$ active users are received. In \cite{MT2022}, authors have provided a scheme for hotplug coded caching systems based on the MAN scheme, in which coded placements were made using Maximum Distance Separable (MDS) codes \cite{LX2004}. An $[n, k]$ MDS code has the property that any $k$ symbols out of $n$ symbols are information symbols, i.e., the information can be recovered from any set of $k$ code symbols. Also, in \cite{LT2021}, authors have considered the case of user inactivity for centralized and decentralized coded caching systems. Recently, in \cite{MT2023}, the hotplug coded caching systems were studied with the demand privacy condition.

The high subpacketization was the problem of the MAN scheme when it came to the implementation. To tackle that problem, in \cite{YCTC2017}, authors proposed the concept of placement delivery array (PDA), which provides both placement and delivery strategies in a single array. They proposed a new centralized coded caching scheme using PDAs and also proved that the MAN scheme corresponds to a class of PDA referred to as MAN PDA. There are many coded caching schemes proposed in the literature based on the PDAs \cite{YTCC2017, SZG2018, CJYT2019, MW2020, ZCJ2020, CJTY2020}.

\subsection{Contributions}

The contributions of this paper are listed as follows.
\begin{enumerate}
\item We introduce the concept of placement delivery arrays for hotplug coded caching schemes and refer to it as HpPDA.
\item Corresponding to each HpPDA, we define the placement and delivery phases of a hotplug coded caching scheme in Algorithm \ref{algo1}.
\item We provide a class of HpPDA that corresponds to the first new scheme given in \cite{MT2022}. Since the placement and delivery phases of the first new scheme were similar to the MAN scheme but with coded placements, the corresponding class of HpPDA consists of MAN PDA. Therefore, we refer to this class as MAN HpPDA.
\item We provide a method to further reduce the rate of a hotplug coded caching scheme using HpPDA, and using that method, we reduce the rate of the scheme that corresponds to MAN HpPDA.
\item We propose a class of HpPDAs based on $t$-designs, and the hotplug coded caching scheme corresponds to that is referred to as $t$-Scheme in this paper. $t$-Scheme provides multiple non-trivial memory rate points using one $t$-design. This is in contrast to the work in \cite{ASK2019, LC2022, CWEC2023, KMR2021} using designs where the schemes corresponds to only one non-trivial point. The rate of $t$-Scheme is further improved.
\item Further, we present the comparison of $t$-Scheme with the existing schemes for several examples of $t$-designs, which show that $t$-Scheme is better than the existing schemes in certain higher memory ranges.
\item Using two classes of $3$-$(v,k,\lambda)$ design, we show that $t$-Scheme for a hotplug coded caching system with three active users achieves the converse lower bound for some memory points.
\end{enumerate}

\subsection{Paper organization}
This paper focuses on the hotplug coded caching systems \cite{MT2022}. In the next section, we describe the model of a hotplug coded caching system and briefly review the scheme of \cite{MT2022}. The definitions of PDA and some combinatorial designs are given in Section \ref{pre} with some results on $t$-designs, which will be useful in the proposed construction of HpPDAs. In Section \ref{HpPDA}, we define HpPDA, and corresponding to each HpPDA, we define a hotplug coded caching scheme whose placement and delivery phases are defined in Algorithm \ref{algo1} given in Section \ref{HpPDA}. Then, we provide a class of HpPDA that corresponds to the scheme given in \cite{MT2022}. We summarize our results in Section \ref{MR}. In Section \ref{improved}, we provide a method to reduce the rate of a hotplug coded caching scheme using HpPDAs. Based on that method, we reduce the rate of the scheme given in \cite{MT2022}. In Section \ref{tdesignPDA}, we construct a class of HpPDA based on a class of combinatorial designs called $t$-designs, and then we improve the rate of that scheme. The comparison between the rate of the proposed scheme with the existing schemes is given in Section \ref{num}. In the last, we prove that the cut-set bound is achieved in some higher memory ranges for a hotplug coded caching system with three active users in Section \ref{OPTIMAL}.
Finally, we conclude the paper in Section \ref{conclu}.

\subsection{Notations}
For a positive integer $n$, $[n]$ denotes the set $\{1,2,\ldots,n\}$ and $[a:b]$ denotes the set $\{a, a+1,\ldots, b\}$. For two sets $A$ and $B$, the notation $A \setminus B$ denotes the set of elements in A which are not in B. For a set $A$, the number of elements in it is represented by $|A|$. The binomial coefficient represented as $\binom{n}{k}$ is equal to $\frac{n!}{k!(n-k)!}$, where $n$ and $k$ are positive integers such that $k \leq n$. For a set $S$ and a positive integer $t$, the notation ${S \choose t}$ represents the collection of all the subsets of $S$ of size $t$. 

\section{Hotplug coded caching system}\label{sec2}
In a $(K,K',N)$ hotplug coded caching system, $K' \leq K$, a server stores $N$ files, denoted by $W_1, W_2, \ldots, W_N$ each of size $B$ bits. The $K$ number of users are connected to the server via an error-free shared link. Each user has a cache of size of $M$ files.

In placement phase, server is unware of which $K'$ users will be active and what will be their demands. Assumption is that server knows that there will be $K'$ number of users active at the time of delivery. In delivery phase, the active users will send their demand to the server, the server will send the transmissions on shared link in such a way that the demand of each active user is statisfied using transmissions and cache content.

In \cite{MT2022}, authors proposed three schemes named as \textit{baseline scheme}, \textit{first new scheme} and \textit{second new scheme}. We briefly describe the \textit{ first new scheme} in the following subsection. 

\subsection{Existing scheme for hotplug caching system based on MAN scheme  \cite{MT2022}}
We will refer to this scheme as MT (Ma and Tuninetti) scheme in the rest of the paper. Fix $t \in [K']$ and partition each file into ${K'\choose t}$ equal-size subfiles as
$$W_i = \{W_{i,S} : S \subset [K'], |S|=t\}, \ \text{for all} \ i \in [N].$$
Then, for every $i \in [N]$, we treat the subfiles of each file as the information symbols of an MDS code with generator matrix $G$ of dimension ${K'\choose t} \times {K\choose t}$. The
MDS coded symbols are
$$ \begin{bmatrix}
C_{i, S'_1} \\
C_{i, S'_2}\\
\vdots \\
C_{i, S'_{{K\choose t}}}
\end{bmatrix} = G^{T}  \begin{bmatrix}
W_{i, S_1} \\
W_{i, S_2}\\
\vdots \\
W_{i, S_{{K'\choose t}}}
\end{bmatrix}, \ \forall i \in [N].$$

\noindent Placement Phase: The cache content of user $j$ is
$$Z_j = \{C_{i, S'} : i \in [N], S' \subseteq [K], |S'|=t, j \in S' \}, \ \forall j \in [K].$$

\noindent Delivery phase: Let $I$ denote the set of active users. Clearly, $I\subseteq [K]$ and $|I|=K'$. The demands of active $K'$ users are denoted by $D[I]=[d_{i_1}, d_{i_2}, \ldots, d_{i_{K'}}]$. Then for all $S \subseteq I, |S|=t+1$ server will broadcast the following subfile
$$X_S = \sum_{k \in S} C_{d_k, S\backslash \{k\}}.$$
In case of repeated demands, there are ${K'-r' \choose t+1}$ out of ${K' \choose t+1}$ redundant transmissions, which need not be sent (according to the YMA delivery \cite{YMA2017}), where $r'=\min(N,K')$.

This scheme achieves the rate memory pair
$$(M_t, R_t)=\left( N \frac{{K-1 \choose t-1}}{{K' \choose t}}, \frac{{K' \choose t+1} - {K'-r' \choose t+1}}{{K' \choose t}} \right), \ \forall t \in [K'],$$
where $r'=\min(N,K')$.

\subsection{Converse bounds}
For a $(K,K',N)$ hotplug coded caching system, we can utilize any converse result from the classical coded caching system with $K'$ users and $N$ files. Clearly, the rate of a hotplug system cannot be better than the rate of a system where the server has prior knowledge of the set of $K'$ active users. The following bound was given in \cite{MAN2014} on the rate of the classical coded caching system and is called cut-set bound.

\begin{lemma}[\cite{MAN2014}]\label{cutset}
For $N$ files and $K'$ users each with cache of size $0\leq M\leq N$, the rate $R$ of a classical coded caching system is bounded by 
$$R \geq \max_{s\in\{1,\ldots,\min\{N,K'\}\}} \left( s-\frac{s}{\lfloor N/s\rfloor}M \right).$$
\end{lemma}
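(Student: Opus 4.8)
The plan is to establish the bound by a cut-set (genie-aided) argument from network information theory, exactly as in \cite{MAN2014}. First I would fix an integer $s \in \{1,\ldots,\min\{N,K'\}\}$ and focus attention on the first $s$ users together with their caches $Z_1,\ldots,Z_s$, each of size $MB$ bits. The key observation is that the placement is performed \emph{before} the demands are revealed, so these $s$ caches form a single fixed object that must work simultaneously for every demand vector.

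Next I would construct a sequence of $\lfloor N/s\rfloor$ demand rounds for these $s$ users so that the files requested across all rounds are pairwise distinct: in round $j$ the $s$ users request $s$ fresh files, and over the $\lfloor N/s\rfloor$ rounds this consumes $s\lfloor N/s\rfloor \le N$ distinct files. In round $j$ the server sends a transmission $X_j$ of size at most $RB$ bits, and by the correctness of the scheme users $1,\ldots,s$ recover their $s$ requested files from $(Z_1,\ldots,Z_s,X_j)$. Hence the single collection $(Z_1,\ldots,Z_s,X_1,\ldots,X_{\lfloor N/s\rfloor})$ determines all $s\lfloor N/s\rfloor$ of these distinct files.

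The heart of the argument is then an entropy inequality. Since the files are independent and uniform of $B$ bits each, the $s\lfloor N/s\rfloor$ decoded files carry entropy $s\lfloor N/s\rfloor B$, while the available description has entropy at most $sMB + \lfloor N/s\rfloor RB$; decodability (with a vanishing-error/Fano correction as $B\to\infty$, or exactly in the zero-error setting) therefore gives
$$s\left\lfloor \frac{N}{s}\right\rfloor B \;\le\; s M B + \left\lfloor \frac{N}{s}\right\rfloor R B.$$
Dividing by $B$ and rearranging to isolate $R$ yields
$$R \;\ge\; s - \frac{s}{\lfloor N/s\rfloor}\,M,$$
and since $s$ was an arbitrary admissible index, taking the maximum over $s\in\{1,\ldots,\min\{N,K'\}\}$ gives the claim.

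The step I expect to be the main obstacle is the bookkeeping that legitimizes the cut: one must use the \emph{same} caches $Z_1,\ldots,Z_s$ across all $\lfloor N/s\rfloor$ rounds (justified by demand-independent placement) while allowing a distinct transmission $X_j$ per round, and one must verify that the chosen demands genuinely exhaust $s\lfloor N/s\rfloor$ distinct files so that their joint entropy is additive. Once this cut and the distinctness of demands are set up correctly, the entropy inequality and the final rearrangement are routine.
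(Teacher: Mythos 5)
Your argument is exactly the cut-set bound proof from \cite{MAN2014} (Theorem 2 there): fix $s$ users, reuse their demand-independent caches over $\lfloor N/s\rfloor$ rounds of pairwise-distinct demands, and apply the entropy/Fano counting argument to the cut $(Z_1,\ldots,Z_s,X_1,\ldots,X_{\lfloor N/s\rfloor})$, which is correct as stated. The paper itself gives no proof of this lemma --- it is imported by citation --- so there is nothing to compare beyond noting that your reconstruction faithfully matches the original source's argument.
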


The following converse bound was given by Yu et al. in \cite{YMA2018}.
\begin{lemma}[\cite{YMA2018}]\label{bound2}
For $N$ files and $K'$ users each with cache of size $0\leq M\leq N$, the rate $R$ of a classical coded caching system is lower bounded by 
$$R \geq s-1+\alpha- \frac{s(s-1)-\ell(\ell-1)+2\alpha s}{2(N-\ell+1)}M,$$
for any $s \in [\min{(N, K')}], \alpha \in [0, 1]$, where $\ell \in \{1, \ldots, s\}$ is the minimum value such that
$$\frac{s(s-1)-\ell(\ell-1)}{2}+\alpha s \leq (N-\ell+1)\ell.$$
\end{lemma}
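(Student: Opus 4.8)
The plan is to establish this as a genie-aided cut-set bound, following the entropy method that is standard for coded caching converses. Since the statement concerns a classical system with $K'$ users, I would fix a target index $s \in [\min(N,K')]$ and concentrate on the first $s$ caches $Z_1, \ldots, Z_s$, whose total size is at most $sMB$ bits. The core idea is to exhibit a short sequence of demand vectors whose transmissions, together with these $s$ caches, jointly determine as many file bits as possible, and then to compare the entropy of what is decoded against the number of bits that were handed to the decoder.

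Concretely, I would build a \emph{staircase} of demand vectors $\mathbf{d}^{(1)}, \mathbf{d}^{(2)}, \ldots$ in which each successive vector asks the $s$ users for files not yet seen, so that appending one more transmission $X_{\mathbf{d}^{(j)}}$ (of entropy at most $RB$) reveals a fresh batch of files to the group $\{1, \ldots, s\}$. By the decodability requirement, $Z_1, \ldots, Z_s$ together with the transmissions used so far must determine all the demanded files; hence
\begin{equation*}
H\bigl(\text{decoded files}\bigr) \le H(Z_1, \ldots, Z_s) + \sum_j H\bigl(X_{\mathbf{d}^{(j)}}\bigr) \le sMB + (\#\text{transmissions})\,RB.
\end{equation*}
Choosing the demands so that the decoded set is maximised converts this entropy inequality into a linear relation between $R$ and $M$.

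The refinement that produces the exact coefficients comes from counting the new file content revealed as the rounds progress. Summing the per-round gains yields triangular quantities of the form $\binom{s}{2} = \tfrac{s(s-1)}{2}$, which is the origin of the terms $s(s-1)$ and $\ell(\ell-1)$ in the bound; the integer $\ell$ marks the break point at which extracting further \emph{full} files stops being efficient and one must instead account for only partially decoded content, and it is pinned down precisely by the displayed inequality $\tfrac{s(s-1)-\ell(\ell-1)}{2} + \alpha s \le (N-\ell+1)\ell$. The continuous parameter $\alpha \in [0,1]$ I would introduce by admitting one extra \emph{fractional} transmission, i.e.\ by taking a convex combination of two adjacent integer configurations, so that the output is a whole family of supporting hyperplanes of the optimal rate–memory region rather than finitely many points.

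The hard part, I expect, is not the entropy inequality itself but the combinatorial optimisation behind it: proving that the staircase of demands is information-theoretically optimal (that no cleverer demand sequence decodes more files per transmitted bit) and that the break point $\ell$ defined by the inequality is exactly the one that minimises the right-hand side. This requires a careful symmetrisation over demand permutations — averaging the entropy bound over all orderings of the $s$ users and over relabelings of the files so that the per-round counts become rigorous rather than heuristic — together with verifying the regime switch at $\ell$. Since the statement is quoted verbatim from \cite{YMA2018}, I would ultimately defer the full accounting to that reference and invoke only the inequality as stated.
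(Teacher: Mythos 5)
The paper does not prove this lemma at all: it is quoted verbatim as a known converse from \cite{YMA2018}, and your proposal ultimately does the same thing, deferring the rigorous accounting to that reference after sketching (accurately, at the level of a plausibility argument) the entropy-counting machinery behind it — fixing $s$ caches, a sequence of demands, symmetrization, the break point $\ell$, and the interpolation parameter $\alpha$. So your approach is essentially the paper's own (citation), with a correct high-level outline of the cited proof added; nothing in your sketch contradicts the stated bound, but none of it is, or needs to be, verified here.
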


\section{Preliminaries}\label{pre}
In this section, we review the definition of PDAs for dedicated coded caching systems, and PDA construction of MAN coded caching scheme. In Section \ref{tdesignPDA}, we will use combinatorial designs to construct PDAs for hotplug coded caching scheme, for which we review some basic definitions and results related to designs in this section.
 
\begin{definition}[Placement Delivery Array \cite{YCTC2017}]
For positive integers $K, F, Z$ and $S$, an $F \times K$ array $P = (p_{j,k})_{j \in[F], k \in [K]}$, composed of a specific symbol ``$*$" and  non-negative integers from $[1:S]$,
is called a $[K, F, Z, S]$- PDA if it satisfies the following conditions:
\begin{enumerate}[{C}1.]
\item The symbol ``$*$" appears $Z$ times in each column,
\item Each integer occurs at least once in the array,
\item For any two distinct entries $p_{j_1,k_1}$ and $p_{j_2, k_2}$, $p_{j_1, k_1} =
p_{j_2, k_2} = s$ is an integer only if
\begin{enumerate}
\item $j_1 \neq j_2, k_1 \neq k_2$, i.e., they lie in distinct rows and distinct columns; and
\item $p_{j_1, k_2} = p_{j_2, k_1} = *$, i.e., the corresponding $2 \times 2$ sub-array formed by rows $j_1, j_2$ and columns $k_1, k_2$ must be of the following form
$$\begin{bmatrix}
s & * \\
* & s
\end{bmatrix} \quad
\text{or} \quad
\begin{bmatrix}
* & s \\
s & *
\end{bmatrix}.$$
\end{enumerate}
\end{enumerate}

\end{definition}

A $[K, F, Z, S]$-PDA $P$ corresponds to a $(K, N, M)$ coded caching scheme with $K$ users, $N$ files, cache memory of size of $M$ files, subpacketization level $F$, $\frac{M}{N}=\frac{Z}{F}$ and rate $R=\frac{S}{F}$. 

\begin{remark}
A PDA $P$ is called a $g$-$[K, F, Z, S]$ regular PDA if each integer $s\in [S]$ appears exactly $g$ times in $P$.  
\end{remark}

\begin{example}
A $3$-$(6, 4, 2, 4)$ PDA,
\[
B=\begin{bmatrix}
* & 2 & * & 3 & * & 1 \\
1 & * & * & 4 & 2 & *\\
* & 4 & 1 & * & 3 & * \\
3 & * & 2 & * & * & 4
\end{bmatrix}.
\]
\end{example}

\noindent \textbf{Construction of MAN PDA \cite{YCTC2017}:}
Let $K$ be an positive integer, $t \in [K]$ and $F={K \choose t}$. Now arrange all the subsets of size $t+1$ of $[K]$ in a some specific order and define $f(S)$ to be its order  
for a set $S\subseteq [K], |S|=t+1$. Then MAN PDA $P=(p_{\tau,j})$, whose rows are indexed by the elements $\tau$ in ${[K]\choose t}$ and columns are index by the elements $j$ in $[K]$, is defined as
$$p_{\tau, j}= \begin{cases} f(\tau \cup \{j\}) & \text{if} \ j \notin \tau \\
* & \text{otherwise} \end{cases}.$$
Here, $P$ is a $(t+1)$-$\left[K, {K \choose t}, {K-1 \choose t-1}, {K \choose t+1}\right]$  regular PDA, which corresponds to a $(K, N, M)$ coded caching scheme with $K$ users, $N$ files, cache memory of size of $M$ files, $\frac{M}{N}=\frac{t}{K}$ and rate $R=\frac{K-t}{t+1}$.

\begin{example}
A $4$-$[4,6,3,4]$ MAN PDA for $t=2$,
\[
B=\begin{blockarray}{ccccc}
& 1 & 2 & 3 &4   \\
\begin{block}{c[cccc]}
\{1,2\} & * & * & 1 & 2 \\
\{1,3\} & * & 1 & * & 3\\
\{1,4\} & * & 2 & 3 & * \\
\{2,3\} & 1 & * & * & 4\\
\{2,4\} & 2 & * & 4 & * \\
\{3,4\} & 3 & 4 & * & *\\
\end{block}
\end{blockarray}.
\]
\end{example}

Now we present some basic definitions of designs. For more details of these results,  refer to \cite{S2004}.   
\begin{definition}[Design]
A design is a pair $(X, \mathcal{A})$ such that the following properties are satisfied:
\begin{enumerate}
\item $X$ is a set of elements called points, and
\item $\mathcal{A}$ is a collection of nonempty subsets of $X$ called blocks.
\end{enumerate}
\end{definition}

\begin{definition}[$t$-design]
Let $v, k, \lambda$ and $t$ be positive integers such that $v > k \geq t$. A $t$-$
(v, k, \lambda)$ design is a design $(X, \mathcal{A})$ such that the following properties are satisfied:
\begin{enumerate}
\item $|X| = v$,
\item each block contains exactly $k$ points, and
\item every set of $t$ distinct points is contained in exactly $\lambda$ blocks.
\end{enumerate}
\end{definition}
For the sake of  simplified presentation, the parantheses and commas have been dropped from the blocks in the following examples and in some other examples later, i.e., a block $\{a, b, c\}$ is written as $abc$.
\begin{example}
Let $X=\{1,2,3,4,5,6,7\}$, and $\mathcal{A}=\{127,145,136,467,256,357, 234\}$. This is a $2$-$(7,3,1)$ design.
\end{example}

\begin{example}
Let $X=\{1,2,3,4,5,6\}$, and $\mathcal{A}=\{1456, 2356, 1234, 1256, 1346, 2345, 1236, 2456, 1345, 2346,$ $1356, 1245, 3456, 1246, 1235\}$. This is a $3$-$(6,4,3)$ design.
\end{example}

Corresponding to a $t$-$(v,k,\lambda)$ design $(X, \mathcal{A})$ there exists a $(t-i)$-$(v-i,k-i,\lambda)$ design $(X\backslash Z, \{ A \backslash Z : Z \subseteq A \in \mathcal{A}\})$, where $Z \subseteq X, |Z|=i<t$.

\begin{definition}[Incidence Matrix]
Let $(X, \mathcal{A})$ be a design where $X = \{x_1, x_2, \ldots, x_v\}$ and $\mathcal{A} =\{A_1, A_2, \ldots , A_b\}$. The incidence matrix of $(X, \mathcal{A})$ is the $v \times b$ matrix $M = (m_{i, j})$ such that
$m_{i,j} =1 \ \text{if}\ x_i \in A_j$, and $m_{i,j} =0 \ \text{if}\ x_i \notin A_j$.
\end{definition}

\begin{theorem}[\cite{S2004}]\label{thm1.1}
Suppose that $(X, \mathcal{A})$ is a $t$-$(v, k, \lambda)$ design. Suppose that $Y \subseteq X$, where $|Y| = s \leq t$. Then there are exactly
\begin{equation}\label{lambda-s}
\lambda_s =\frac{\lambda{v-s \choose t-s}}{{k-s \choose t-s}}
\end{equation}
blocks in $\mathcal{A}$ that contain all the points in $Y$.
\end{theorem}

Therefore, the number of blocks in a $t$-design is $b=\lambda_0 = \frac{\lambda{v \choose t}}{{k \choose t}}$, and each point belongs to exactly $\lambda_1=\frac{\lambda{v-1 \choose t-1}}{{k-1 \choose t-1}}$ blocks. Clearly, a $t$-$(v, k, \lambda)$ design
 $(X, \mathcal{A})$ is also an $s$-$(v, k, \lambda_s)$ design, for all $1 \leq s \leq t$.

Theorem \ref{thm1.1} can be generalized as follows.

\begin{theorem}[\cite{S2004}] \label{thm1.2}
Suppose that $(X, \mathcal{A})$ is a $t$-$(v, k, \lambda)$ design. Suppose that $Y, Z \subseteq X$, where $Y \cap Z = \emptyset, |Y| = i, |Z| = j$, and $i + j \leq t$. Then there are exactly
$$\lambda_i^{i+j} =\frac{\lambda{v-i-j \choose k-i}}{{v-t \choose k-t}}$$
blocks in $\mathcal{A}$ that contain all the points in $Y$ and none of the points in $Z$.
\end{theorem}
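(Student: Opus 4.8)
The plan is to prove the statement by induction on $j=|Z|$, keeping $i=|Y|$ free and ranging over all values with $i+j\le t$. For a disjoint pair $(Y,Z)$, write $N(Y,Z)$ for the number of blocks of $\mathcal{A}$ containing every point of $Y$ and no point of $Z$; the goal is to show $N(Y,Z)=\lambda\binom{v-i-j}{k-i}\big/\binom{v-t}{k-t}$, so that in particular it depends only on $i$ and $j$.

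\textbf{Base case ($j=0$).} Here $Z=\emptyset$, so $N(Y,\emptyset)$ is simply the number of blocks containing the $i$-subset $Y$, which by Theorem \ref{thm1.1} equals $\lambda_i=\lambda\binom{v-i}{t-i}\big/\binom{k-i}{t-i}$ (admissible since $i\le t$). I would then verify the elementary factorial identity
$$\frac{\binom{v-i}{t-i}}{\binom{k-i}{t-i}}=\frac{\binom{v-i}{k-i}}{\binom{v-t}{k-t}},$$
which confirms the claimed formula when $j=0$.

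\textbf{Inductive step.} Assume the formula holds for $j-1$ (for every admissible $i$), and fix disjoint $Y,Z$ with $|Y|=i$, $|Z|=j\ge 1$, and $i+j\le t$. Choose a point $z\in Z$ and set $Z'=Z\setminus\{z\}$, so $|Z'|=j-1$. The blocks that contain $Y$ and avoid $Z'$ split according to whether they avoid or contain $z$: the former are counted by $N(Y,Z)$ and the latter by $N(Y\cup\{z\},Z')$, giving the recursion
$$N(Y,Z)=N(Y,Z')-N(Y\cup\{z\},Z').$$
Both terms on the right are covered by the induction hypothesis at parameter $j-1$: the first with sizes $(i,j-1)$, the second with sizes $(i+1,j-1)$, which is admissible since $(i+1)+(j-1)=i+j\le t$. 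Substituting the inductive formulas yields
$$N(Y,Z)=\frac{\lambda}{\binom{v-t}{k-t}}\left[\binom{v-i-j+1}{k-i}-\binom{v-i-j}{k-i-1}\right],$$
and Pascal's rule $\binom{v-i-j+1}{k-i}=\binom{v-i-j}{k-i}+\binom{v-i-j}{k-i-1}$ collapses the bracket to $\binom{v-i-j}{k-i}$, completing the induction.

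The argument is essentially routine, and the main obstacle is bookkeeping rather than any genuine difficulty. I must run the induction on $j$ while letting $i$ vary, because the recursion lowers $j$ by one but invokes the hypothesis at two different values of $i$ (namely $i$ and $i+1$); at each use I should check that the constraint $i+j\le t$ is preserved so that Theorem \ref{thm1.1} (through the $j=0$ base) stays applicable. The only computational points are the base-case binomial identity and the single Pascal simplification, both one-line manipulations, but they are precisely where an index slip would corrupt the whole evaluation, so I would verify them explicitly.
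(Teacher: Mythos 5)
Your proof is correct: the splitting recursion $N(Y,Z)=N(Y,Z')-N(Y\cup\{z\},Z')$ is valid, the admissibility check $(i+1)+(j-1)=i+j\le t$ is exactly what is needed to invoke the hypothesis at both parameter pairs, and both the base-case identity $\binom{v-i}{t-i}\big/\binom{k-i}{t-i}=\binom{v-i}{k-i}\big/\binom{v-t}{k-t}$ and the Pascal collapse verify by direct factorial computation. The paper states this theorem without proof, citing \cite{S2004}, and your induction on $j$ anchored at Theorem \ref{thm1.1} is precisely the standard argument from that reference, so your attempt matches the intended proof.
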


The next corollary directly follows from Theorem \ref{thm1.2}.
\begin{corollary}
Suppose that $(X, \mathcal{A})$ is a $t$-$(v, k, \lambda)$ design and  $ Y \subseteq T \subseteq X$, where $|T| = t, |Y| = i$ with $i \leq t$. Then there are exactly
\begin{equation}\label{lambda-ts}
\lambda_i^{t} =\frac{\lambda{v-t \choose k-i}}{{v-t \choose k-t}}.
\end{equation}
blocks in $\mathcal{A}$ that contain all the points in $Y$ and none of the points in $T \backslash Y$.
\end{corollary}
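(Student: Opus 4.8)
The plan is to obtain this corollary as a direct specialization of Theorem \ref{thm1.2}, so the entire argument reduces to choosing the right sets and verifying that the hypotheses of that theorem are met. The natural choice is to set $Y$ to be the same distinguished subset in both statements and to take the ``forbidden'' set to be $Z = T \setminus Y$. Since $Y \subseteq T$, these two sets are disjoint, which is exactly the condition $Y \cap Z = \emptyset$ required by Theorem \ref{thm1.2}. Counting sizes, I have $|Y| = i$ and $|Z| = |T| - |Y| = t - i$, so if I put $j = t - i$ in the notation of Theorem \ref{thm1.2}, then $i + j = t$, and in particular $i + j \leq t$, so the hypothesis $i + j \leq t$ holds (indeed with equality).

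With these identifications in place, Theorem \ref{thm1.2} counts precisely the blocks of $\mathcal{A}$ that contain every point of $Y$ and no point of $Z = T \setminus Y$, which is exactly the quantity the corollary asks for. The remaining step is purely a substitution into the formula
$$\lambda_i^{i+j} = \frac{\lambda \binom{v-i-j}{k-i}}{\binom{v-t}{k-t}}.$$
Because $i + j = t$, the top binomial coefficient simplifies via $v - i - j = v - t$, yielding $\frac{\lambda \binom{v-t}{k-i}}{\binom{v-t}{k-t}}$, which is the claimed value of $\lambda_i^{t}$.

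I do not anticipate a genuine obstacle here, since the result is a clean specialization rather than a new combinatorial identity; the only thing worth stating carefully is the size bookkeeping that gives $j = t - i$ and hence $i + j = t$, as this is what forces the simplification $v - i - j = v - t$ and makes the count depend only on $i$ and $t$ (not on the ambient sizes of $Y$ and $Z$ separately). If one wished to be fully self-contained rather than invoke Theorem \ref{thm1.2}, the same count could be derived directly by inclusion--exclusion over the points of $T \setminus Y$ using the $\lambda_s$ values from Theorem \ref{thm1.1}, but that would merely reprove the special case of Theorem \ref{thm1.2} and is unnecessary given the hypotheses already established in the excerpt.
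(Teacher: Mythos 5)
Your proposal is correct and matches the paper exactly: the paper offers no separate proof, stating only that the corollary ``directly follows from Theorem \ref{thm1.2},'' and the intended specialization is precisely yours --- take $Z = T \setminus Y$, note $|Z| = t - i$ so $i + j = t$, and substitute $v - i - j = v - t$ into the formula of Theorem \ref{thm1.2}. Your explicit bookkeeping of the disjointness and size conditions is a faithful (and slightly more careful) write-up of the same argument.
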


\section{PDA for hotplug coded caching system } \label{HpPDA}

In this section, we introduce a PDA structure for hotplug coded caching system, and then provide an algorithm to  characterize the placement and delivery of a hotplug coded caching scheme.

\begin{definition}[Hotplug placement delivery array (HpPDA)]
Let $K, K', F, F`, Z, Z'$ and $S$ be integers such that $K \geq K', F \geq F'$ and $Z<F'$. Consider two arrays given as follows
\begin{itemize}
\item $P=(p_{f,k})_{f \in [F], k \in [K]}$ which is an array containing `$*$' and null. Each column contains $Z$ number of `$*$'s.
\item $B=(b_{f,k})_{f \in [F'], k \in [K']}$ which is a $[K', F', Z', S]$-PDA.

\end{itemize}

For each $\tau \subseteq [K], |\tau| = K'$, there exists a subset $\zeta \subseteq [F], |\zeta|=F'$ such that 
\begin{equation}\label{c}
[P]_{\zeta \times \tau} \myeq B,
\end{equation}
where $[P]_{\zeta \times \tau}$ denotes the subarray of $P$ whose rows corresponds to the set $\zeta$ and columns corresponds to the set $\tau$, and $\myeq$ denotes that the positions of all `$*$'s are same in both the arrays. We call it a $(K,K',F, F',Z,Z',S)$- HpPDA $(P, B)$.

\end{definition}
\begin{example}\label{ex1}
For the parameters $K=6, K'=3, F=6, F'=3, Z=Z'=1$ and $S=3$, a HpPDA $(P, B)$ is given by
\begin{align*}
B&=\begin{bmatrix}
* & 1 & 2 \\
 1 & * & 3\\
 2 & 3 & * 
\end{bmatrix}, \ 
P=\begin{bmatrix}
* & & & & &   \\
 & * & & & & \\
 & & * & & & \\
 & & & * & & \\
 & & & & * & \\
 & & & & & *
\end{bmatrix}.
\end{align*}
Clearly, for each $\tau \subseteq [6], |\tau|=3$, there exists $\zeta \subseteq [6], |\zeta|=3$ such that $[P]_{\zeta \times \tau} \myeq B$. In this example, $\zeta=\tau$ for all $\tau \subseteq [6]$.

\end{example}

\begin{example}\label{ex2}
For the parameters $K=6, K'=4, F=15, F'=6, Z=5, Z'=3, S=4$, a HpPDA $(P, B)$ is given as
\[
B=\begin{bmatrix}
* & * & 1 & 2 \\
* & 1 & * & 3\\
* & 2 & 3 & * \\
1 & * & * & 4\\
2 & * & 4 & * \\
3 & 4 & * & *
\end{bmatrix}, \ 
P=\begin{blockarray}{ccccccc}
& 1 & 2 & 3 &4 & 5& 6  \\
\begin{block}{c[cccccc]}
1& * &* & & & &   \\
2 & * &  & * & & & \\
3 & * & &   &* & & \\
4 & * & & & &* & \\
5 & * & & & &  &* \\
6 & &* & *& & & \\
7 & & * & & * & &\\
8 & & * & & & * & \\
9 & & * & & & & * \\
10 & & & * & * & & \\
11 & & & * & & * & \\
12 & & & * & & & * \\
13 & & & & * & * & \\
14 & & & & * & & * \\
15 & & & & & * & *\\
\end{block}
\end{blockarray}.
\]
Here, for each $\tau \subseteq [6], |\tau|=4$, there exists $\zeta =\{S \subseteq \tau \ |\ |S|=2\}, |\zeta|=6$ such that $[P]_{\zeta \times \tau} \myeq B$. 
\end{example}

\begin{theorem}\label{thm2.1}
Given a $(K,K',F, F',Z,Z',S)$-HpPDA $(P, B)$, there exists a $(K, K', N)$ hotplug coded caching scheme with the following parameters,
\begin{itemize}
\item subpacketization level is $F'$,
\item $\frac{M}{N}=\frac{Z}{F'},$ where $M$ denotes the number of files each user can store in it's cache and $M<N$,
\item rate, $R=\frac{S}{F'}$,
\end{itemize}
which can be obtained by Algorithm \ref{algo1}, in which $[F, F']$ MDS code is used for encoding the subfiles of each file,.
\end{theorem}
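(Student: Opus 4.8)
The plan is to make explicit the placement and delivery phases encoded by Algorithm \ref{algo1} and then to verify the three parameter claims together with decodability. \textbf{Placement.} First I would split each file $W_i$ into $F'$ equal subfiles $W_{i,1},\ldots,W_{i,F'}$ and apply the $[F,F']$ MDS code to obtain $F$ coded symbols $C_{i,1},\ldots,C_{i,F}$ for every $i\in[N]$. The array $P$ then dictates the caches: user $k$ stores $\{C_{i,f}:i\in[N],\ p_{f,k}=*\}$. Since each column of $P$ carries exactly $Z$ stars, each user caches $Z$ coded symbols per file, each of size $B/F'$ bits, so that $M=NZ/F'$ and hence $M/N=Z/F'$ with $M<N$; the subpacketization is $F'$ because that is the number of subfiles per file. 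Crucially, this placement uses only $F$, $Z$, and the code, never the identities of the eventual active users, so it is a legitimate hotplug placement.

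\textbf{Delivery.} Given an active set $\tau$ with $|\tau|=K'$, I would invoke the defining relation \eqref{c} to choose $\zeta\subseteq[F]$, $|\zeta|=F'$, together with orderings $\zeta=(f_1,\ldots,f_{F'})$ and $\tau=(k_1,\ldots,k_{K'})$ for which $p_{f_i,k_j}=*$ exactly when $b_{i,j}=*$. For each integer $s\in[S]$ the server broadcasts the PDA-style combination
$$X_s=\sum_{(i,j):\,b_{i,j}=s} C_{d_{k_j},\,f_i}.$$
Because every integer of $[S]$ occurs in $B$, there are $S$ such transmissions, each of size $B/F'$ bits, giving $R=S/F'$ as claimed.

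\textbf{Decodability (the crux).} This is the step I expect to be the main obstacle, since it is where the PDA structure of $B$ and the MDS property must combine. Fix an active user $k_{j_0}$. From the rows $\zeta$ it already caches the $Z'$ symbols $C_{d_{k_{j_0}},f_i}$ for which $b_{i,j_0}=*$. For each integer $s=b_{i_0,j_0}$ in column $j_0$, I claim $k_{j_0}$ recovers $C_{d_{k_{j_0}},f_{i_0}}$ from $X_s$: any other term $C_{d_{k_j},f_i}$ of $X_s$ has $b_{i,j}=s=b_{i_0,j_0}$ with $(i,j)\neq(i_0,j_0)$, so PDA condition C3 forces $b_{i,j_0}=*$, whence $p_{f_i,k_{j_0}}=*$ by \eqref{c}; since caches store symbols for \emph{all} files, $k_{j_0}$ holds $C_{d_{k_j},f_i}$ and can subtract it. As the $Z'$ star positions and the $F'-Z'$ integer positions partition the rows of column $j_0$, user $k_{j_0}$ ends up with all $F'$ symbols $\{C_{d_{k_{j_0}},f_i}:f_i\in\zeta\}$, i.e.\ $F'$ coordinates of its MDS codeword for $W_{d_{k_{j_0}}}$.

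\textbf{Conclusion.} Since $|\zeta|=F'$ equals the number of information symbols, the $[F,F']$ MDS property guarantees that these $F'$ coordinates determine the whole codeword, so $k_{j_0}$ reconstructs all $F'$ subfiles and hence $W_{d_{k_{j_0}}}$. The same argument applies verbatim to every active user and, by the HpPDA definition, a suitable $\zeta$ exists for \emph{every} $\tau$, which is exactly what certifies delivery for an arbitrary active set. The two delicate points to handle carefully in the full write-up are that interference cancellation legitimately relies on cached symbols of the \emph{other} users' demanded files (valid because the caches are demand-oblivious), and that $|\zeta|=F'$ is precisely the MDS recovery threshold, so that no active user ever needs more than $F'$ coded symbols.
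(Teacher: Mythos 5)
Your proposal is correct and follows essentially the same route as the paper's proof: placement via the star pattern of $P$ after $[F,F']$ MDS encoding, one transmission per integer of $B$, interference cancellation justified by PDA condition C3 together with the relation $[P]_{\zeta\times\tau}\myeq B$, and final recovery from the MDS property. If anything, your accounting is marginally cleaner---you collect exactly the $F'$ coded symbols indexed by $\zeta$ (namely $Z'$ from the cache and $F'-Z'$ from transmissions), whereas the paper counts all $Z$ cached symbols plus $F'-Z'$ delivered ones and invokes $Z\geq Z'$ to conclude the user holds at least $F'$ symbols---but this is a difference of bookkeeping, not of approach.
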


\begin{proof}
In the placement phase of Algorithm \ref{algo1}, array $P$ is used to fill the cache of each user. The columns of $P$ corresponds to $K$ users and rows corresponds to $F$ coded subfiles $C_{n,f}$ of each file $W_{n}, n \in [N]$. The cache of the user $k$ contains the subfile $C_{n,f}$ if $p_{f,k}=*$. Since each column of $P$ constains $Z$ number of *'s and the size of each coded subfile is $B/F'$, we have $\frac{M}{N}=\frac{Z}{F'}$. Since $Z<F'$, we have $M<N$.

In the delivery phase of Algorithm \ref{algo1}, array $B$ is used for the transmissions. Let $I=\{i_1, i_2, \ldots, i_{K'}\},$ $1 \leq i_k \leq K, \forall k \in [K']$, be the set of active users with demands $D[I]=(d_{i_1}, d_{i_2}, \ldots, d_{i_{K'}})$. By the property of HpPDA, for set $I$ there exists a subset $\zeta \subseteq [F], |\zeta|=F'$ such that $[P]_{\zeta, I} \myeq B$.

Then a PDA $\overline{P}$ is constructed for users with the help of $B$, and corresponding to each integer $s \in [S]$ in $\overline{P}$, a $X_s$ transmission is made. Since $\overline{P}$ is a PDA, each integer in $i_k$-th column will provide user $i_k$ a coded subfile of the desired file $d_{i_k}$. There are total $Z'$ stars and $F' - Z'$ integers in a column. Hence using the transmissions, each user will get $F'-Z'$ coded subfiles of the desired file, and there are already $Z$ coded subfiles of every file in each user’s cache. Therefore, the user have total $F'-Z'+Z \geq F'$ (as $Z-Z' \geq 0$) number of coded subfiles of the desired file, and using those subfiles, the user can recover the desired file.
Since the total number of transmissions is equal to the total number of integers in PDA $B$, which is equal to $S$, therefore, the rate is $R=\frac{S}{F'}$.

\end{proof}

\begin{algorithm}
	\caption{Hotplug coded caching scheme based on HpPDA $(P, B)$ }
	
	\begin{algorithmic}[1]
		
\Function{Placement phase}{$P; W_n, n \in[N]; G$}
\State Divide each file $W_n, n \in [N]$ into $F'$ subfiles, i.e., 
$W_n=\{W_{n, f'} | f' \in [F']\}.$

\State Encode $F'$ subfiles of each file $W_n, n \in [N]$, into $F$ coded subfiles using $[F, F']$ MDS code with generator matrix $G$ of order $F' \times F$, i.e.,
$$\begin{bmatrix}
C_{n,1} \\ C_{n,2} \\ \vdots \\ C_{n, F} 
\end{bmatrix} =  G^{T}\begin{bmatrix}
W_{n,1} \\ W_{n,2} \\ \vdots \\ W_{n, F'} 
\end{bmatrix} .$$
\For{$k \in [K]$}
\State $Z_k \gets \{C_{n,f} \ | \ p_{f,k}=*, n \in [N], f \in [F]\}$.
\EndFor
\EndFunction 

\Function{Delivery phase}{$B; P; W_n, n \in[N]; I; D[I]$}
\State Let $I$ be the set of active users with demands $D[I]=(d_{i_1}, d_{i_2}, \ldots, d_{i_{K'}})$. 
\State By the property of $(P, B)$ HpPDA, for $I \subseteq [K], |I| = K'$, there exists a subset $\zeta \subseteq [F], |\zeta|=F'$ such that 
$[P]_{\zeta \times I} \myeq B.$
\State Make a new array $\overline{P}=(\overline{p}_{f,k})_{f \in \zeta, k \in I}$ by filling $s \in [S]$ integers in null spaces of the subarray $[P]_{\zeta \times I}$ in such a way that $\overline{P}=B$.
 \For{$s \in [S]$}
\State Server sends the following coded subfiles:
\State $$X_s = \sum_{\overline{p}_{f,k} = s, f \in \zeta, k \in I } C_{d_k, f}.$$
\EndFor
\EndFunction 
		
	\end{algorithmic}
	\label{algo1}
\end{algorithm}

\subsection{HpPDA for the MT scheme}\label{subsec}
In this section, we present a class of HpPDA for a $(K, K', N)$ hotplug coded caching scheme given in Section \ref{sec2}. We refer to HpPDAs in this class as MAN HpPDAs. For $t \in [K']$, let
\begin{align}
F'&={K' \choose t}, \qquad \ \ F={K \choose t}, \nonumber \\ 
Z'&={K'-1 \choose t-1}, \quad  Z={K-1 \choose t-1}, \label{para} \\
S&={K' \choose t+1}.  \nonumber 
\end{align}
Further, let $B$ be an $[K', F', Z', S]$-PDA for MAN scheme for $K'$ users, and $P$ can be obtained by replacing all the integers by null in an $[K, F, Z, S_1]$-PDA for MAN scheme for $K$ users. Therefore, we have $B=(b_{\tau, k})_{\tau \in {[K'] \choose {t}}, k \in [K']}, \ \text{where} $
\begin{equation}\label{B}
 b_{\tau, k}=\begin{cases}
* & \text{if} \ k \in \tau \\
f(\tau \cup \{k\}) & \text{if} \ k \notin \tau
\end{cases},
\end{equation}
and $f$ is an one to one map from ${[K'] \choose {t+1}}$ to $[{K' \choose {t+1}}]$, and the array $P=(p_{\tau, k})_{\tau \in {[K] \choose {t}}, k \in [K]}, \ \text{where} $
\begin{equation}\label{P}
 p_{\tau, k}=\begin{cases}
* & \text{if} \ k \in \tau \\
\text{null} & \text{if} \ k \notin \tau
\end{cases}.
\end{equation}

\begin{note}
The HpPDA given in this subsection corresponds to the MT scheme without YMA delivery, i.e., the rate is $R=\frac{S}{F'}=\frac{{K' \choose t+1}}{{K' \choose t}}$. However, if $N<K'$, some redundant transmissions can be reduced using YMA delivery.
\end{note}

\begin{theorem}
For the parameters defined in \eqref{para}, $(P, B)$ is a $(K,K',F, F',Z,Z',S)$-HpPDA.
\end{theorem}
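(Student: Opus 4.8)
The plan is to verify that the pair $(P, B)$ defined via \eqref{B} and \eqref{P} satisfies the defining property of an HpPDA: first that $B$ is a genuine $[K', F', Z', S]$-PDA, and second, the crucial embedding condition that for every $\tau \subseteq [K]$ with $|\tau| = K'$ there is a row-index set $\zeta \subseteq {[K] \choose t}$ of size $F' = {K' \choose t}$ with $[P]_{\zeta \times \tau} \myeq B$. Since $B$ is exactly the MAN PDA on $K'$ users, the fact that it is a $(t{+}1)$-$\left[K', {K' \choose t}, {K'-1 \choose t-1}, {K' \choose t+1}\right]$ PDA is already recorded in the construction of MAN PDA recalled in Section \ref{pre}, so I would cite that and move on. I would also quickly confirm the dimensional sanity checks $K \geq K'$, $F = {K \choose t} \geq {K' \choose t} = F'$, and $Z = {K-1 \choose t-1} \geq {K'-1 \choose t-1} = Z'$, together with $Z < F'$ when $t < K'$, so that the hypotheses of the HpPDA definition hold.

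The heart of the argument is the embedding. Fix $\tau \subseteq [K]$ with $|\tau| = K'$, and identify $\tau$ with the ground set of a ``sub-MAN-PDA.'' The natural choice is to let $\zeta = {\tau \choose t}$, i.e., take as rows exactly those $t$-subsets of $[K]$ that happen to lie entirely inside $\tau$; there are ${K' \choose t} = F'$ of these, as required. I would then restrict $P$ to this $\zeta \times \tau$ subarray and compare it entrywise with $B$ after relabelling the elements of $\tau$ by $[K']$ via an order-preserving bijection $\phi : \tau \to [K']$ (which induces a bijection on the $t$-subsets and on the $(t{+}1)$-subsets). For a row indexed by a $t$-set $S \in {\tau \choose t}$ and a column $k \in \tau$, the entry $p_{S,k}$ is $*$ precisely when $k \in S$ by \eqref{P}, and the corresponding entry $b_{\phi(S), \phi(k)}$ is $*$ precisely when $\phi(k) \in \phi(S)$, i.e., when $k \in S$, by \eqref{B}. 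Thus the star patterns coincide, which is all that $\myeq$ requires.

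The main obstacle, and the step I would treat most carefully, is making sure the $\myeq$ relation really only demands agreement of the star positions and not of the integer labels, because the integer in $P$'s restriction is always null (it carries no integers at all) while $B$ carries genuine integers in the non-star cells. The definition of HpPDA and of $\myeq$ states exactly this: $\myeq$ means ``the positions of all `$*$'s are the same in both arrays,'' and Algorithm \ref{algo1} is the mechanism that later fills the null cells of $[P]_{\zeta \times \tau}$ with the integers of $B$. So the verification reduces entirely to the star-pattern matching just described, and no integer-level compatibility needs to be checked here. I would close by noting that because $\phi$ is a bijection and $\zeta$ has the right size, the restricted star pattern is an exact copy of the star pattern of the MAN PDA $B$, establishing $[P]_{\zeta \times \tau} \myeq B$ for the chosen $\zeta$, and since $\tau$ was arbitrary this completes the proof that $(P,B)$ is a $(K,K',F,F',Z,Z',S)$-HpPDA.
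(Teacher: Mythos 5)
Your proposal is correct and takes essentially the same route as the paper's proof: the identical choice $\zeta={\tau \choose t}$, the same relabelling bijection $\phi:\tau\to[K']$ together with its induced map on $t$-subsets, and the same observation that $\myeq$ demands only matching star positions, the integers of $B$ being filled into the null cells later by Algorithm \ref{algo1}. The one inaccuracy is your side remark that $Z<F'$ holds whenever $t<K'$, which fails when $K$ is large relative to $K'$ (e.g., $K=100$, $K'=4$, $t=2$ gives $Z={99 \choose 1}=99 > 6={4 \choose 2}=F'$), but the paper's own proof does not verify that condition either (it is implicitly guaranteed only in the memory regime $M<N$ where the scheme is applied), so this does not separate your argument from theirs.
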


\begin{proof}
Clearly, $B$ is an $[K', F'={K' \choose t}, Z'={K'-1 \choose t-1}, S={K' \choose t+1}]$ PDA obtained from MAN scheme for $K'$ users, and $P$ is a ${K \choose t} \times K$ array whose rows are indexed by the elements in the set ${[K]\choose t}$ and columns are indexed by the elements in $[K]$. The array $P$ contains exactly ${K-1 \choose t-1}$ ``$*$"\textquotesingle s in each column. Now we need to prove that the pair $(P, B)$ satisfy the property \eqref{c}.

Let $\tau=\{i_1,i_2, \ldots, i_{K'}\} \subseteq [K]$. Consider a set $\zeta = \{ S \subseteq {[K] \choose t} \mid S \subseteq \tau\}=\{ S \subseteq \tau \mid |S|=t\}={\tau \choose t}$. Clearly, $|\zeta|={K' \choose t}=F'$. Corresponding to the sets $\tau$ and $\zeta$, we have a subarray $[P]_{ \zeta\times \tau}$ of $P$ such that $[P]_{\zeta\times \tau}(S, i_j)=*$ if and only if $i_j \in S$, $S\in \zeta$. Also, $B(S,j)=*$ if and only if $j \in S$, $S \in {[K'] \choose t}$. There is an one to one correspondence $\phi : \tau \to [K']$ such that $\phi(i_j)=j, \forall i_j \in \tau$. Also, there is an one to one correspondence $\psi: \zeta={\tau \choose t} \to {[K'] \choose t}$ such that $\psi(\{i_{j_1}, i_{j_2}, \ldots, i_{j_t}\})=\{j_1, j_2, \ldots, j_t\}$, for all $1 \leq j_1, j_2, \ldots, j_t \leq K'$. Clearly, $i_j \in S$, $S\in \zeta$, if and only if $\phi(i_j) \in \psi(S)$. Therefore, we have $[P]_{\zeta\times \tau}(S, i_j)=B(\psi(S),\phi(i_j))=*$ if and only if $i_j \in S$, $S\in \zeta$. Hence $[P]_{\zeta\times \tau} \myeq B$.
\end{proof}

\begin{example*}[Example \ref{ex2} continued]
Consider the HpPDA $(P, B)$ given in Example \ref{ex2} which corresponds to a hotplug coded caching system with $K=6$ users and $K'=4$ active users. This HpPDA can be achieved by the method given in Subsection \ref{subsec} for $t=2$.

Now using HpPDA $(P, B)$ in Algorithm \ref{algo1}, we get a $(6,4,6)$ hotplug coded caching scheme in the following way.

\noindent \textbf{Placement Phase:} From Line 2 in Algorithm \ref{algo1}, each file is divided into $F'=6$ subfiles, i.e., $W_n=\{W_{n, 1}, W_{n, 2}, \ldots, W_{n,6}\}, n \in [6]$. Encode all the subfiles of file $W_n$ into $F=15$ coded subfile $C_{n,1}, C_{n,2}, \ldots, C_{n,15}$, using a $[15,6]$ MDS code with generator matrix $G_{6\times 15}$, i.e.,
$$\begin{bmatrix}
C_{n,1} & C_{n,2} & \cdots & C_{n, 15} 
\end{bmatrix} =  \begin{bmatrix}
W_{n,1} & W_{n,2} & \cdots & W_{n, 6} 
\end{bmatrix}G.$$
From Lines 4-6 in Algorithm \ref{algo1}, the caches of all users are filled as follows:
\begin{align*}
Z_1&=\{C_{n,1}, C_{n,2}, C_{n,3}, C_{n,4}, C_{n,5} \ | \ n \in [6]\} \\
Z_2&=\{C_{n,1}, C_{n,6}, C_{n,7}, C_{n,8}, C_{n,9} \ | \ n \in [6]\} \\
Z_3&=\{C_{n,2}, C_{n,6}, C_{n,10}, C_{n,11}, C_{n,12} \ | \ n \in [6]\} \\
Z_4&=\{C_{n,3}, C_{n,7}, C_{n,10}, C_{n,13}, C_{n,14} \ | \ n \in [6]\} \\
Z_5&=\{C_{n,4}, C_{n,8}, C_{n,11}, C_{n,13}, C_{n,15} \ | \ n \in [6]\} \\
Z_6&=\{C_{n,5}, C_{n,9}, C_{n,12}, C_{n,14}, C_{n,15} \ | \ n \in [6]\}. 
\end{align*} 
 
\noindent \textbf{Delivery Phase:} Let the set of active users be $I=\{1,4,5,6\}$ with demands $D[I]=\{2,3,1,5\}$. From Line 11 in Algorithm \ref{algo1}, we get
\begin{align*}
&\begin{matrix}
& 1 & 4 & 5 & 6 
\end{matrix}\\
\overline{P}=  \begin{matrix}
3 \\ 4 \\ 5 \\ 13 \\ 14 \\ 15 
\end{matrix} & \begin{bmatrix}
* & * & 1 & 2 \\
* & 1 & * & 3\\
* & 2 & 3 & * \\
1 & * & * & 4\\
2 & * & 4 & * \\
3 & 4 & * & *
\end{bmatrix}.
\end{align*}

From Lines 12-15 in Algorithm \ref{algo1}, the server trasmits the following coded files:
\begin{align*}
X_1&= C_{2,13} + C_{3, 4} +C_{1, 3}    \\
X_2 &= C_{2,14} + C_{3, 5} +C_{5, 3}      \\
X_3 &=   C_{2,15} + C_{1, 5} +C_{5, 4}     \\
X_4 &= C_{3,15} + C_{1, 14} +C_{5, 13}  
\end{align*}
To recover a file, only six coded subfiles are required because the MDS codes used in this example has dimension $6$. The demand of user $1$ is $W_2$. The user $1$ gets $C_{2,13}, C_{2,14}, C_{2,15}$ from $X_1, X_2, X_3$, respectively. Since user $1$ already has $C_{2,3}, C_{2,4}, C_{2,5}$ in its cache, it can recover file $W_2$. Similarly, user $4,5$ and $6$ will get their desired files.
We have
$$\frac{M}{N}=\frac{Z}{F'}=\frac{5}{6} \ \text{and} \ R=\frac{S}{F'}=\frac{2}{3}.$$

\end{example*}

There exist HpPDAs other than MAN HpPDAs given in subsection \ref{subsec} as shown in the following example.
\begin{example}\label{ex3}
Consider a hotplug coded caching system with $K=6$ users and $K'=5$ active users. Then for $F=12, F'=5, Z=4, Z'=2, S=9$, we have HpPDA $(P, B)$, where
\[
B=\begin{bmatrix}
* & 1 & 4 & 6 & * \\
* & * & 2 & 5 & 7\\
1 & * & * & 3 & 8\\
4 & 2 & * & * & 9\\
6 & 5 & 3 & * & *
\end{bmatrix}
P=\begin{blockarray}{ccccccc}
& 1 & 2 & 3 &4 & 5& 6  \\
\begin{block}{c[cccccc]}
1 & * &  & & & * & \\
2 & * & & & & & *  \\
3 & *&  &* & & &  \\
4 &* & * &  & & & \\
5 & & *&  & * & & \\
6 & & * & &  & & *\\
7 & &* & *  & & & \\
8 & & & * & & * & \\
9 & & &* &* & & \\
10 & & & &* & * & \\
11 & & & & & * & *\\
12 & & &  & *& & * \\
\end{block}
\end{blockarray}.
\]
For each $\tau \subseteq [6], |\tau|=5$, there exists $\zeta \subseteq [12], |\zeta|=5$ such that $[P]_{\zeta \times \tau} \myeq B$. A choice of $\zeta$ for each set $\tau$ of size $5$ is given in Table \ref{tab1}.

\begin{table}[!t]
\caption{Example \ref{ex3}: A choice of $\zeta$ for each set $\tau$ of size $5$. \label{tab1}}
\centering
\begin{tabular}{|c|c|} 
 \hline
 $\tau$ & $\zeta$ \\ 
 \hline
 $\{1,2,3,4,5\}$ & $\{1,4,7,9,10\}$ \\ 
 $\{1,2,3,4,6\}$ & $\{2,4,7,9,12\}$ \\ 
 $\{1,2,3,5,6\}$ & $\{2,4,7,8,11\}$ \\ 
 $\{1,2,4,5,6\}$ & $\{2,4,5,10,11\}$ \\ 
 $\{1,3,4,5,6\}$ & $\{2,3,9,10,11\}$ \\ 
$\{2,3,4,5,6\}$ & $\{6,7,9,10,11\}$ \\ 
\hline
\end{tabular}
\end{table}

Now using HpPDA $(P, B)$ in Algorithm \ref{algo1}, we get an $(6,5,6)$ hotplug coded caching scheme in the following way.

\noindent \textbf{Placement Phase:} From Line 2 in Algorithm \ref{algo1}, each file is divided into $F'=6$ subfiles, i.e., $W_n=\{W_{n, 1}, W_{n, 2}, \ldots, W_{n,5}\}, n \in [6]$. Encode all the subfiles of file $W_n$ into $F=12$ coded subfile $C_{n,1}, C_{n,2}, \ldots, C_{n,12}$, using an $[12,5]$ MDS code with generator matrix $G_{5\times 12}$, i.e.,
$$\begin{bmatrix}
C_{n,1} & C_{n,2} & \cdots & C_{n, 12} 
\end{bmatrix} =  \begin{bmatrix}
W_{n,1} & W_{n,2} & \cdots & W_{n, 5} 
\end{bmatrix}G.$$
From Lines 4-6 in Algorithm \ref{algo1}, the caches of all users are filled as follows:
\begin{align*}
Z_1&=\{C_{n,1}, C_{n,2}, C_{n,3}, C_{n,4} \ | \ n \in [6]\} \\
Z_2&=\{C_{n,4}, C_{n,5}, C_{n,6}, C_{n,7} \ | \ n \in [6]\} \\
Z_3&=\{C_{n,3}, C_{n,7}, C_{n,8}, C_{n,9} \ | \ n \in [6]\} \\
Z_4&=\{C_{n,5}, C_{n,9}, C_{n,10}, C_{n,12} \ | \ n \in [6]\} \\
Z_5&=\{C_{n,1}, C_{n,8}, C_{n,10}, C_{n,11} \ | \ n \in [6]\} \\
Z_6&=\{C_{n,2}, C_{n,6}, C_{n,11}, C_{n,12} \ | \ n \in [6]\}. 
\end{align*} 
\noindent \textbf{Delivery Phase:} Let the set of active users be $I=\{1,2,4,5,6\}$ with demands $D[I]=\{6,3,1,2,5\}$. From Line 11 in Algorithm \ref{algo1}, we get

\begin{align*}
&\begin{matrix}
& 1 & 2 & 4 & 5 & 6 
\end{matrix}\\
\overline{P}=  \begin{matrix}
2 \\ 4 \\ 5 \\ 10 \\ 11  
\end{matrix} & \begin{bmatrix}
* & 1 & 4 & 6 & * \\
* & * & 2 & 5 & 7\\
1 & * & * & 3 & 8 \\
4 & 2 & * & * & 9\\
6 & 5 & 3 & * & * 
\end{bmatrix}.
\end{align*}

From Lines 12-15 in Algorithm \ref{algo1}, the server trasmits the following coded files:
$
X_1= C_{6,5} + C_{3, 2}, X_2 = C_{3,10} + C_{1, 4}, X_3 = C_{1,11} + C_{2, 5}, X_4 = C_{6,10} + C_{1, 2}, X_5=C_{3,11}+C_{2,4}, X_6=  C_{6,11}+C_{2,2}, X_7=C_{5,4}, X_8=C_{5, 5}$ and  $X_9=C_{5,10}$.
To recover a file, only five coded subfiles are required because the MDS codes used in this example has dimension $5$. The demand of the user $4$ is $W_1$. The user $4$ gets $C_{1,4}, C_{1,11}, C_{1,2}$ from $X_2, X_3, X_4$, respectively. Since the user $4$ already has $C_{1,5}, C_{1,10}$ in its cache, it can recover file $W_2$. Similarly, the users $1,2,5$ and $6$ will get their desired files. Here, we have
$\frac{M}{N}=\frac{4}{5} \ \text{and} \ R=\frac{9}{5}.$
\end{example}

\section{Main results} \label{MR}
In this section we summarise our main results, which will be proved in the subsequent sections.
In the following theorem, we give the memory-rate points achievable by the improved version of MAN HpPDA given in Section \ref{improved}.

\begin{theorem}\label{MR1}
For $(K, K', N)$ hotplug coded caching system, the lower convex envelope of the following points is achievable
$$\left( \frac{M}{N}, R \right) = \left( \frac{{K-1 \choose t-1}}{{K' \choose t}}, \frac{{K' \choose t+1}-\left \lfloor \frac{K'}{t+1} \right \rfloor ({K-1 \choose t-1}-{K'-1 \choose t-1})}{{K' \choose t}} \right), $$
for all $t \in [0:K']$.
\end{theorem}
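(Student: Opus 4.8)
The plan is to start from the MAN HpPDA $(P,B)$ with the parameters in \eqref{para} and improve only its delivery phase, leaving the placement (and hence the memory point) untouched. By Theorem \ref{thm2.1} this HpPDA already yields a $(K,K',N)$ scheme at memory $\frac{M}{N}=\frac{Z}{F'}=\frac{\binom{K-1}{t-1}}{\binom{K'}{t}}$ using $S=\binom{K'}{t+1}$ transmissions, hence rate $\frac{\binom{K'}{t+1}}{\binom{K'}{t}}$. Since $F'$ is unchanged, the claimed rate follows as soon as one shows that exactly $\lfloor K'/(t+1)\rfloor\,(Z-Z')$ of these transmissions can be discarded without destroying decodability, where $Z-Z'=\binom{K-1}{t-1}-\binom{K'-1}{t-1}$.

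First I would quantify the decoding slack created by the coded MDS placement. Fix an active set $I$ with its witness $\zeta$, so that $[P]_{\zeta\times I}\myeq B$. Re-examining the decoding argument of Theorem \ref{thm2.1}, an active user holds $Z'$ of its cached coded subfiles in rows indexed by $\zeta$ and the remaining $Z-Z'$ in rows outside $\zeta$, while the delivery supplies $F'-Z'$ further subfiles indexed by $\zeta$. Thus after the full delivery the user possesses $F'+(Z-Z')$ coded subfiles of its demanded file, i.e. $Z-Z'$ more than the $F'$ needed to invert the $[F,F']$ MDS code. Consequently each active user can tolerate the loss of up to $Z-Z'$ of its delivered subfiles and still recover its file. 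Because each integer $s\in[S]$ of $B$ corresponds to a $(t+1)$-subset $W\subseteq[K']$ and the transmission $X_s$ hands exactly one (distinct) subfile to each user in $W$, omitting $X_s$ costs every such user exactly one subfile. Hence it suffices to exhibit a family $\mathcal E$ of distinct $(t+1)$-subsets of $[K']$, one per omitted transmission, in which every point of $[K']$ lies in at most $Z-Z'$ members; note that $\mathcal E$ lives on the abstract index set $[K']$, so the same choice works for every active set $I$.

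The construction I would use is to take $Z-Z'$ pairwise block-disjoint partial parallel classes of the complete $(t+1)$-uniform hypergraph on $[K']$, each class being a set of $\lfloor K'/(t+1)\rfloor$ mutually disjoint $(t+1)$-subsets, and let $\mathcal E$ be the union of their blocks. Then $|\mathcal E|=\lfloor K'/(t+1)\rfloor(Z-Z')$, the blocks are distinct by block-disjointness of the classes, and each point occurs in at most one block per class, hence in at most $Z-Z'$ blocks of $\mathcal E$, as required. The real work lies in the existence of $Z-Z'$ such classes, i.e. in verifying that the complete $(t+1)$-uniform hypergraph on $K'$ points admits at least $Z-Z'$ pairwise disjoint (near-)parallel classes. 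This holds in the relevant regime because the HpPDA constraint $Z<F'=\binom{K'}{t}$ gives $Z-Z'<\binom{K'}{t}-\binom{K'-1}{t-1}=\binom{K'-1}{t}$, and $\binom{K'-1}{t}$ is exactly the number of parallel classes produced by a Baranyai-type decomposition when $(t+1)\mid K'$ (with a near-perfect-matching analogue otherwise); alternatively an explicit cyclic-shift construction on $\mathbb{Z}_{K'}$ yields the required distinct classes. Deleting the transmissions indexed by $\mathcal E$ then leaves $S-|\mathcal E|=\binom{K'}{t+1}-\lfloor K'/(t+1)\rfloor(Z-Z')$ transmissions, and by the slack argument every active user still retains at least $F'$ distinct coded subfiles and decodes, giving the stated rate.

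Finally I would record the boundary cases $t=0$, which yields the point $(0,K')$, and $t=K'$, which yields rate $0$, note that $Z=Z'$ (e.g. for $t=1$ or $K=K'$) forces no reduction and recovers the base MAN HpPDA rate, and conclude by standard memory sharing that the lower convex envelope of the points for $t\in[0:K']$ is achievable. The main obstacle is the hypergraph-decomposition step of the preceding paragraph; once the omitted family $\mathcal E$ is in hand, the per-user accounting and the final rate computation are routine.
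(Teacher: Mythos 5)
Your proposal is correct and follows essentially the same route as the paper: Section \ref{improved} derives the same per-user slack of $Z-Z'$ surplus MDS-coded subfiles and removes a set $T\subseteq[S]$ with $|T\cap C_i|\leq Z-Z'$, and its MAN specialization obtains $|T|=\left\lfloor \frac{K'}{t+1}\right\rfloor(Z-Z')$ by exactly your idea of deleting $Z-Z'$ rounds of $\left\lfloor \frac{K'}{t+1}\right\rfloor$ pairwise disjoint $(t+1)$-subsets of $[K']$ (the same greedy partial-parallel-class construction the paper makes explicit in Lemma \ref{lemma2} and Function 1). If anything, you are more careful than the paper on the existence step, which the paper merely asserts; your Baranyai-type justification is heavier than needed, since $Z-Z'<\binom{K'-1}{t}$ (from $Z<F'$) lets a simple greedy choice of fresh disjoint $(t+1)$-subsets in each round go through.
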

In the following theorem, we give the memory-rate points achievable by HpPDAs constructed using $t$-designs in Section \ref{tdesignPDA}. For a $t$-$(v,k,\lambda)$ design, $\lambda_1$ is defined in \eqref{lambda-s} and $\lambda_s^t$ is defined in \eqref{lambda-ts}, for $1\leq s < t$. 
\begin{theorem}\label{MR2}[$t$-Scheme]
For $s \in [t-1], a_s\in \{0,1, \ldots, \lambda_s^t\}$, such that $\sum_{s=1}^{t-1} a_s {t \choose s} > \lambda_1$, 
a $(v, t, N)$ hotplug coded caching scheme exist with the following memory-rate points
$$\left( \frac{M}{N}, R \right) = \left( \frac{\lambda_1}{\sum_{s=1}^{t-1} a_s {t \choose s}}, \frac{\sum_{s=1}^{t-1} a_s {t \choose s+1}}{\sum_{s=1}^{t-1} a_s {t \choose s}} \right).$$
\end{theorem}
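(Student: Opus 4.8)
The plan is to realize each claimed point by building an explicit $(K,K',F,F',Z,Z',S)$-HpPDA $(P,B)$ from the given $t$-$(v,k,\lambda)$ design $(X,\mathcal{A})$ and then invoking Theorem \ref{thm2.1}. I would set $K=v$, $K'=t$, take $P$ to be the incidence array of the design with rows indexed by the blocks $\mathcal{A}$ and columns by the points $X$, putting a ``$*$'' in cell $(A,x)$ exactly when $x\in A$ and leaving it null otherwise. Then $F=\lambda_0$, and by Theorem \ref{thm1.1} each point lies in exactly $\lambda_1$ blocks, so every column of $P$ carries $Z=\lambda_1$ stars. The target parameters are $F'=\sum_{s=1}^{t-1}a_s\binom{t}{s}$ and $S=\sum_{s=1}^{t-1}a_s\binom{t}{s+1}$, chosen precisely so that Theorem \ref{thm2.1} returns $M/N=Z/F'$ and $R=S/F'$ equal to the two coordinates in the statement.

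Next I would construct the companion PDA $B$ on $t$ columns as a ``stack'' of scaled MAN PDAs. Index its rows by pairs $(Y,c)$ where $Y\subseteq[t]$ with $1\le|Y|\le t-1$ and $c\in[a_{|Y|}]$, so that $B$ has $\sum_{s=1}^{t-1}a_s\binom{t}{s}=F'$ rows. Place a ``$*$'' in row $(Y,c)$, column $j$ iff $j\in Y$; this makes column $j$ carry $Z'=\sum_{s=1}^{t-1}a_s\binom{t-1}{s-1}$ stars. Within each fixed layer $(s,c)$ the $\binom{t}{s}$ rows with $|Y|=s$ form a MAN PDA on $[t]$: to cell $(Y,j)$ with $j\notin Y$ assign a label determined by the $(s+1)$-set $Y\cup\{j\}$, drawing the $\binom{t}{s+1}$ labels of each layer from a range disjoint across layers. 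I would then verify the PDA axioms directly: C1 holds by the star count, C2 is immediate, and C3 holds because two equal integers occur only inside a single layer and there the MAN argument (two cells $Y_1\cup\{j_1\}=Y_2\cup\{j_2\}$ force the required $2\times2$ star pattern) applies verbatim. Counting distinct labels gives $S=\sum_{s=1}^{t-1}a_s\binom{t}{s+1}$ as needed.

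The crux is checking the defining HpPDA property \eqref{c}: for every active set $T\subseteq X$ with $|T|=t$ I must exhibit $\zeta\subseteq[F]$, $|\zeta|=F'$, with $[P]_{\zeta\times T}\myeq B$. The idea is to classify each block $A$ by its trace $A\cap T\subseteq T$, noting that the restriction of row $A$ of $P$ to the columns $T$ is a star pattern supported exactly on $A\cap T$. For a fixed $Y\subseteq T$ with $|Y|=s\le t-1$, equation \eqref{lambda-ts} of the Corollary says there are exactly $\lambda_s^t$ blocks $A$ with $A\cap T=Y$, and since the hypothesis gives $a_s\le\lambda_s^t$, I may select $a_s$ of them. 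Taking $\zeta$ to be the union over all such $Y$ of these selections yields $|\zeta|=\sum_{s=1}^{t-1}a_s\binom{t}{s}=F'$ blocks whose traces realize, with the correct multiplicities, exactly the star patterns of the rows $(Y,c)$ of $B$; matching selected blocks of trace $Y$ to the rows $(Y,c)$, $c\in[a_{|Y|}]$, and applying the natural bijection $T\to[t]$ on columns gives $[P]_{\zeta\times T}\myeq B$. This counting/availability step is the main obstacle, and it is exactly what the $t$-design structure (via \eqref{lambda-ts}) is engineered to supply.

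Finally I would close the argument by checking the standing hypotheses of the HpPDA framework. The condition $Z<F'$ required in the definition is precisely $\lambda_1<\sum_{s=1}^{t-1}a_s\binom{t}{s}$, i.e.\ the assumption $\sum_{s=1}^{t-1}a_s\binom{t}{s}>\lambda_1$; and $Z-Z'\ge0$, which Theorem \ref{thm2.1} needs for decodability, holds automatically since the stars of any column of $[P]_{\zeta\times T}$ form a subset of the $Z=\lambda_1$ stars of the corresponding column of $P$, giving $Z'\le Z$. With $(P,B)$ a valid $(v,t,\lambda_0,F',\lambda_1,Z',S)$-HpPDA, Theorem \ref{thm2.1} produces a $(v,t,N)$ hotplug scheme (any $N$ with $M<N$) of subpacketization $F'$, memory ratio $\lambda_1/\sum_{s=1}^{t-1}a_s\binom{t}{s}$, and rate $\sum_{s=1}^{t-1}a_s\binom{t}{s+1}/\sum_{s=1}^{t-1}a_s\binom{t}{s}$, which is the asserted memory-rate point.
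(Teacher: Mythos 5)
Your proposal is correct and takes essentially the same route as the paper: you construct $P$ from the design's incidence matrix and $B$ as a stack of disjoint MAN-type layers indexed by pairs $(Y,c)$ with labels $Y\cup\{j\}$ per layer (the paper's Lemma \ref{lemma1}), verify the HpPDA property by selecting, for each trace $Y\subseteq\tau$ of size $s$, $a_s$ of the $\lambda_s^t$ blocks satisfying $A\cap\tau=Y$ via \eqref{lambda-ts} (the paper's Theorem \ref{thm3}), and then invoke Theorem \ref{thm2.1}. Your explicit check that $Z'\le Z$ follows automatically from the star-pattern matching is a detail the paper leaves implicit, but otherwise the argument coincides with the paper's proof.
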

The rate given in Theorem \ref{MR2} can be further improved which is described in Subsection \ref{improvescheme2}.
Further, in Section \ref{num}, we give a comparison of improved $t$-Scheme with other existing scheme for multiple examples of $t$-designs and show that we are getting better rate in some higher memory ranges. Also, we get some memory-rate points which achieves the cut-set bound. 

The following two clasess of $t$-designs are given in \cite[Theorem 9.14]{S2004} and \cite[Theorem 9.27]{S2004}, respectively. 
\begin{itemize}
\item For all even integers $v \geq 6$, there exists a $3$-$(v,4,3)$ design. 
\item For all prime powers $q$, there exists a $3$-$(q^2+1, q+1, 1)$ design.
\end{itemize}
Using these clasess of designs, we get some optimal memory-rate points achieving cut-set bound for a $(K, 3, N)$ hotplug coded caching system.
 
\begin{theorem}\label{MR3}
For a $(K, 3, N)$ hotplug coded caching system, where $K$ is an even integer and $K \geq 6$, we get the optimal memory-rate points 
$$\left(\frac{M}{N}, R \right)=\left(\frac{{K-1 \choose 2}}{3(a_1+a_2)}, \frac{3(a_1+a_2)-{K-1 \choose 2}}{3(a_1+a_2)}\right),$$
where $0\leq a_1 \leq {K-4 \choose 2}$, $0 \leq a_2 \leq \frac{3}{2}(K-4)$ and 
${K-1 \choose 2}=a+3a_1+2a_2$ for some $0 \leq a < a_2$. 
\end{theorem}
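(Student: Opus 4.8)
The plan is to specialize the improved $t$-Scheme of Theorem \ref{MR2} to the $3$-$(K,4,3)$ design, which exists for every even $K \ge 6$, and then to match the resulting memory-rate pair against the cut-set bound of Lemma \ref{cutset}. First I would set $v=K$, $k=4$, $\lambda=3$, $t=3$ and evaluate the design parameters entering Theorem \ref{MR2}. From \eqref{lambda-s}, $\lambda_1 = 3\binom{K-1}{2}/\binom{3}{2} = \binom{K-1}{2}$, and since $\binom{3}{1}=\binom{3}{2}=3$ the denominator $\sum_{s=1}^{2} a_s\binom{3}{s}$ equals $3(a_1+a_2)$. Hence the memory value produced by Theorem \ref{MR2} is $M/N = \lambda_1/(3(a_1+a_2)) = \binom{K-1}{2}/(3(a_1+a_2))$, exactly as stated. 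Using \eqref{lambda-ts} I would further check $\lambda_1^3 = 3\binom{K-3}{3}/(K-3) = \binom{K-4}{2}$ and $\lambda_2^3 = 3\binom{K-3}{2}/(K-3) = \tfrac{3}{2}(K-4)$, so that the admissibility conditions $0 \le a_s \le \lambda_s^t$ of Theorem \ref{MR2} become precisely the ranges $0 \le a_1 \le \binom{K-4}{2}$ and $0 \le a_2 \le \tfrac{3}{2}(K-4)$ appearing in the statement.

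Next I would apply the rate reduction of Subsection \ref{improvescheme2}. Writing $a := \binom{K-1}{2} - 3a_1 - 2a_2$, the hypothesis $0 \le a < a_2$ is equivalent to $3a_1+2a_2 \le \binom{K-1}{2} < 3(a_1+a_2)$; the right inequality is the feasibility condition $\sum_{s=1}^{2} a_s\binom{3}{s} > \lambda_1$ of Theorem \ref{MR2} (equivalently $Z < F'$, so $M<N$), while the left inequality $a\ge 0$ is the condition under which the improvement saturates. The substance of this step is to show that, in this regime, the number of transmissions after the improvement drops to $3(a_1+a_2) - \binom{K-1}{2}$, that is, from the unimproved count $3a_1+a_2$ down to $a_2 - a$. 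Dividing by the subpacketization $F' = 3(a_1+a_2)$ gives the rate $R = (3(a_1+a_2)-\binom{K-1}{2})/(3(a_1+a_2))$ of the statement, and reading the two coordinates off directly one sees $R + M/N = 1$, i.e.\ $R = 1 - M/N$.

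Finally I would invoke the converse: taking $s=1$ in Lemma \ref{cutset} gives $R \ge 1 - \tfrac{1}{N}M = 1 - M/N$ for any achievable pair, and this single term already lower-bounds the maximum over $s$. Since the constructed scheme attains $R = 1 - M/N$, the pair lies on the cut-set bound and is therefore optimal. I expect the main obstacle to be the middle step: establishing, inside the improvement of Subsection \ref{improvescheme2}, the exact bookkeeping by which the excess cached coded subfiles (the slack $Z - Z' \ge 0$ that always exists for an HpPDA) allow one to suppress exactly $3a_1 + a$ of the $3a_1+a_2$ transmissions, so that the count collapses to $a_2 - a$ precisely when $0\le a<a_2$. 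Everything else --- the evaluation of $\lambda_1$, $\lambda_1^3$, $\lambda_2^3$, the identity $R = 1 - M/N$, and the $s=1$ cut-set comparison --- is routine once that transmission count is in hand.
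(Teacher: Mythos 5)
Your proposal is correct and follows essentially the same route as the paper: specialize the improved $t$-Scheme to the $3$-$(K,4,3)$ design with $\lambda_1={K-1 \choose 2}$, $\lambda_1^3={K-4 \choose 2}$, $\lambda_2^3=\frac{3}{2}(K-4)$, reduce the rate via the removable-transmission set $T$ of Subsection \ref{improvescheme2}, and match the result against the $s=1$ cut-set bound $R \geq 1-\frac{M}{N}$. The ``bookkeeping'' you flag as the main obstacle is precisely the paper's Theorem \ref{thm6} applied with $W=\{1,2\}$ and $j=2$, which gives $|T|=3a_1+a=\lambda_1-2a_2$, so the transmission count collapses from $3a_1+a_2$ to $a_2-a=3(a_1+a_2)-{K-1 \choose 2}$ exactly under your condition $\lambda_1=a+3a_1+2a_2$ with $0\leq a<a_2$, as you predicted.
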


\begin{theorem}\label{MR4}
For a $(q^2+1, 3, N)$ hotplug coded caching system, where $q$ is a prime power and $q \geq 3$, we get the optimal memory-rate points 
$$\left(\frac{M}{N}, R \right)=\left(\frac{q(q+1)}{3(a_1+a_2)}, \frac{3(a_1+a_2)-q(q+1)}{3(a_1+a_2)}\right),$$
where $0\leq a_1 \leq q^2-q-1$, $0 \leq a_2 \leq q$ and $q(q+1)=a+3a_1+2a_2$ for some $0 \leq a < a_2$. 
\end{theorem}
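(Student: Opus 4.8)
The plan is to specialise the $t$-Scheme of Theorem \ref{MR2}, together with its rate improvement from Subsection \ref{improvescheme2}, to the $3$-$(q^2+1, q+1, 1)$ design supplied by \cite[Theorem 9.27]{S2004}, and then to observe that the resulting memory-rate point already sits on the cut-set bound of Lemma \ref{cutset}. Throughout, $K = v = q^2+1$ and the number of active users is $K' = t = 3$.

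First I would compute the design parameters that enter Theorem \ref{MR2}. With $t = 3$, $v = q^2+1$, $k = q+1$, $\lambda = 1$, equation \eqref{lambda-s} gives $\lambda_1 = \binom{q^2}{2} / \binom{q}{2} = q(q+1)$, while equation \eqref{lambda-ts} gives $\lambda_1^3 = \binom{q^2-2}{q} / \binom{q^2-2}{q-2} = q^2-q-1$ and $\lambda_2^3 = \binom{q^2-2}{q-1} / \binom{q^2-2}{q-2} = q$. These three evaluations justify the admissible ranges $0 \leq a_1 \leq q^2-q-1$ and $0 \leq a_2 \leq q$ in the statement, since in Theorem \ref{MR2} each $a_s$ ranges over $\{0, 1, \ldots, \lambda_s^t\}$.

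Next I would substitute $t = 3$ into the memory-rate point of Theorem \ref{MR2}. Because $\binom{3}{1} = \binom{3}{2} = 3$, the subpacketization $F' = \sum_{s=1}^{2} a_s \binom{3}{s}$ equals $3(a_1+a_2)$, so each user caches $Z = \lambda_1 = q(q+1)$ coded subfiles and $\frac{M}{N} = \frac{q(q+1)}{3(a_1+a_2)}$, as claimed. Invoking the improvement of Subsection \ref{improvescheme2} in this high-memory regime, I would argue that the transmissions can be arranged so that each one is simultaneously useful to all three active users; the number of transmissions then drops to exactly the number of subfiles a user still needs after placement, namely $F' - Z = 3(a_1+a_2) - q(q+1)$. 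This yields $R = \frac{F' - Z}{F'} = \frac{3(a_1+a_2)-q(q+1)}{3(a_1+a_2)}$, which equals $1 - \frac{M}{N}$.

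The optimality is then immediate: for $K' = 3$ active users, the cut-set bound of Lemma \ref{cutset} with $s = 1$ reads $R \geq 1 - \frac{M}{N}$, so the achieved point meets this converse and is optimal. The bookkeeping constraint $q(q+1) = a + 3a_1 + 2a_2$ with $0 \leq a < a_2$ records precisely when this maximal reduction is attainable: the inequality $a < a_2$ is equivalent to $3(a_1+a_2) > q(q+1) = \lambda_1$, which is the hypothesis $\sum_{s} a_s \binom{t}{s} > \lambda_1$ needed to apply Theorem \ref{MR2}, while $a \geq 0$ fixes the residue that governs how many transmissions remain. The main obstacle I anticipate is exactly this improvement count: I must verify from the explicit HpPDA structure of Section \ref{tdesignPDA} that, under this constraint, one can merge the transmissions of the basic $t$-Scheme down to $3(a_1+a_2) - q(q+1)$ without destroying the PDA property, so that every active user still recovers its file from the $[F, F']$ MDS code. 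Once that combinatorial merging is established, the remaining algebra and the cut-set comparison follow at once.
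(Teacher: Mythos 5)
Your proposal is correct and follows essentially the same route as the paper: specialize the $t$-Scheme and its rate improvement to the $3$-$(q^2+1,q+1,1)$ design, compute $\lambda_1 = q(q+1)$, $\lambda_1^3 = q^2-q-1$, $\lambda_2^3 = q$, and observe that the improved rate $1 - \frac{M}{N}$ meets the cut-set bound of Lemma \ref{cutset} at $s=1$. The one step you flag as an obstacle is exactly what the paper's Theorem \ref{thm6} discharges: under the residue condition $q(q+1) = a + 3a_1 + 2a_2$ with $0 \leq a < a_2$, it produces a removable set $T$ of size $\lambda_1 - 2a_2 = 3a_1 + a$ (all integers of the $B_{(1,i)}$ blocks plus $a$ of the $B_{(2,i)}$ blocks), leaving precisely the $a_2 - a = F' - Z$ transmissions that, as you anticipated, each serve all three active users---and note this is transmission \emph{removal} justified by surplus MDS-coded cache content, not a merging that could threaten the PDA property.
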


The proofs of Theorem \ref{MR3} and Theorem \ref{MR4} are given in Section \ref{OPTIMAL}.

\section{Hotplug coded caching scheme with improved rate}\label{improved}

The rate of the scheme given in Theorem \ref{thm2.1} can be improved further if $Z'<Z$. All the coded subfiles $C_{n,1}, C_{n, 2}, \ldots, C_{n, F}$ of a file $W_n, n \in [N]$ are obtained using $[F,F']$ MDS code. Therefore, to recover a file $W_n$, only $F'$
coded files are needed (by the property of MSD codes).

In the placement phase of Algorithm \ref{algo1}, $Z$ number of coded subfiles of each file are placed in each user's cache. In the delivery phase of Algorithm \ref{algo1}, each active user is getting $(F'-Z')$ coded subfiles of desired file. But only $(F'-Z)$ coded subfiles of the desired file are needed for the recovery as $Z$ coded subfiles of every file are already placed in user's cache. That means each user is getting $(Z-Z')$ more coded subfiles of the desired file than required. Since the $j$-th active user gets one coded subfile corresponding to each integer in $j$-th column of array $B$, we can remove $(Z-Z')$ integers from each column of array $B$.

Let $C_i$ denote the set of integers appearing in the $i$-th column of array $B$, where $i \in [K']$. Clearly, $C_i \subseteq [S]$ and $|C_i|=F'-Z'$ for all $i \in [K']$. Choose a subset $T \subseteq [S]$ such that 
\begin{equation}\label{T}
|T \cap C_i| \leq Z-Z', \ \forall i \in [K'].
\end{equation}
Make a new array $\overline{B}$ by replacing all $t \in T$ by null in array $B$. So the array $\overline{B}$ has only $S-|T|$ number of integers. By running Algorithm \ref{algo1} for HpPDA $(\overline{B},P)$, we get a hotplug coded caching scheme with rate,
$$R=\frac{S-|T|}{F'}.$$
Since this can be done for any set $T\subseteq [S]$ which statisfy the condition in \eqref{T}, choose the largest possible set $T$, so that the reduction in rate will be large.

Clearly, If $Z=Z'$, then there will be no improvement in the rate.

\subsection{Improved version of the MT scheme}
Consider the MAN HpPDA $(P, B)$ for the MT scheme, where $B$ and $P$ are defined in (\ref{B}) and (\ref{P}). There are total $Z'={K'-1 \choose t-1}$ number of $*$'s and $F'-Z'={K'-1 \choose t}$ number of integers in each column of array $B$, and an integer $s \in [S]$ appears exactly in $t+1$ number of columns. Therefore, $\left \lfloor \frac{K'}{t+1} \right \rfloor (Z-Z')$ number of integers can be removed from the array $B$ in such a way that no column will have less than $F'-Z$ number of integer, i.e., there exists a set $T \subseteq [S]$ which satisfies (\ref{T}) and $|T|=\left \lfloor \frac{K'}{t+1} \right \rfloor (Z-Z')$. Hence we have
$$R=\frac{S-\left \lfloor \frac{K'}{t+1} \right \rfloor (Z-Z')}{F'}.$$  

\begin{remark}
\begin{itemize}
\item In Example \ref{ex2}, the set $T=\{1,2\}$ satisfies the condition given in (\ref{T}), and therefore we have
$$\overline{B}=\begin{bmatrix}
* & * &  &  \\
* &  & * & 3\\
* &  & 3 & * \\
 & * & * & 4\\
 & * & 4 & * \\
3 & 4 & * & *
\end{bmatrix}.$$
Now by using Algorithm \ref{algo1} for HpPDA $(\overline{B}, P)$, we get rate $R=\frac{1}{3}$.
That means in the delivery phase, the server only transmits $X_3$ and $X_4$. The user $1$ gets coded subfile $C_{2,15}$ using $X_3$. Then user $1$ gets its desired file $W_2$ using $C_{2,1}, C_{2,2}, C_{2,3}, C_{2,4}, C_{2,5}$ from its cache and $C_{2,15}$.  The comparison with the existing schemes \cite{MT2022} is given in Fig. \ref{EX4}.
 \begin{figure}
  \includegraphics[width=\linewidth]{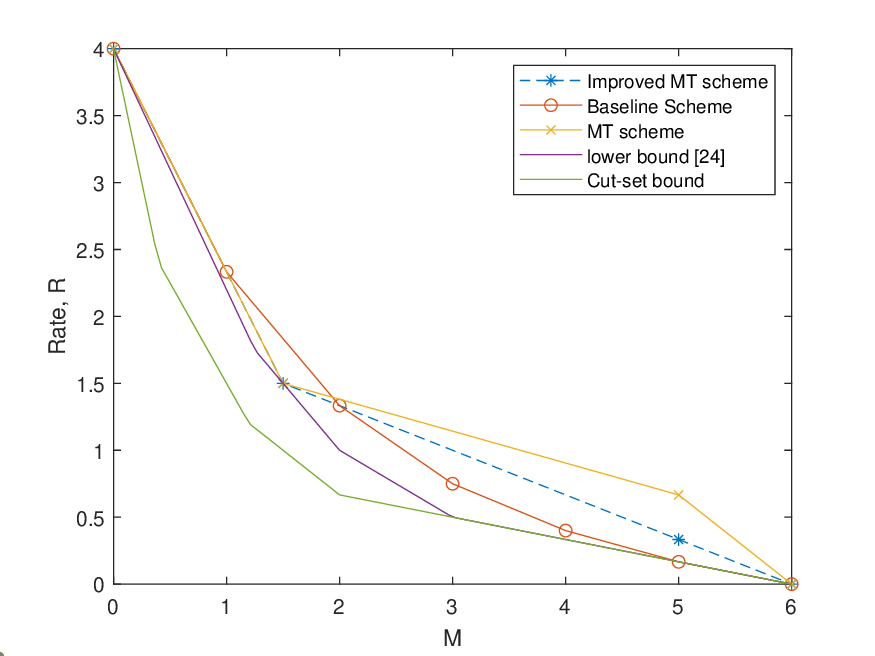}
  \caption{Example 4: (6, 4, 6) hotplug coded caching system.}
  \label{EX4}
\end{figure}

\item In Example \ref{ex3}, the set $T=\{1,3,4,5,7,8\}$ satisfies the condition given in (\ref{T}), and therefore we have
$$\overline{B}=\begin{bmatrix}
* &  &  & 6 & * \\
* & * & 2 &  & \\
 & * & * &  & \\
& 2 & * & * & 9\\
6 &  &  & * & *
\end{bmatrix},  $$
Now by using Algorithm \ref{algo1} for HpPDA $(\overline{B}, P)$, we get rate $R=\frac{3}{5}$.
That means in the delivery phase, the server only transmits $X_2, X_6$ and $X_9$. The user $4$ gets coded subfile $C_{1,4}$ using $X_2$. Then the user $4$ gets its desired file $W_1$ using $C_{1,5}, C_{1,9}, C_{1,10}, C_{1,12}$ from its cache and $C_{1,4}$. 
\end{itemize}

\end{remark}

For some other examples, the comparison of the improved MT scheme with existing schemes is given in Fig. \ref{OEX1} and Fig. \ref{OEX2}.

\begin{figure}
  \includegraphics[width=\linewidth]{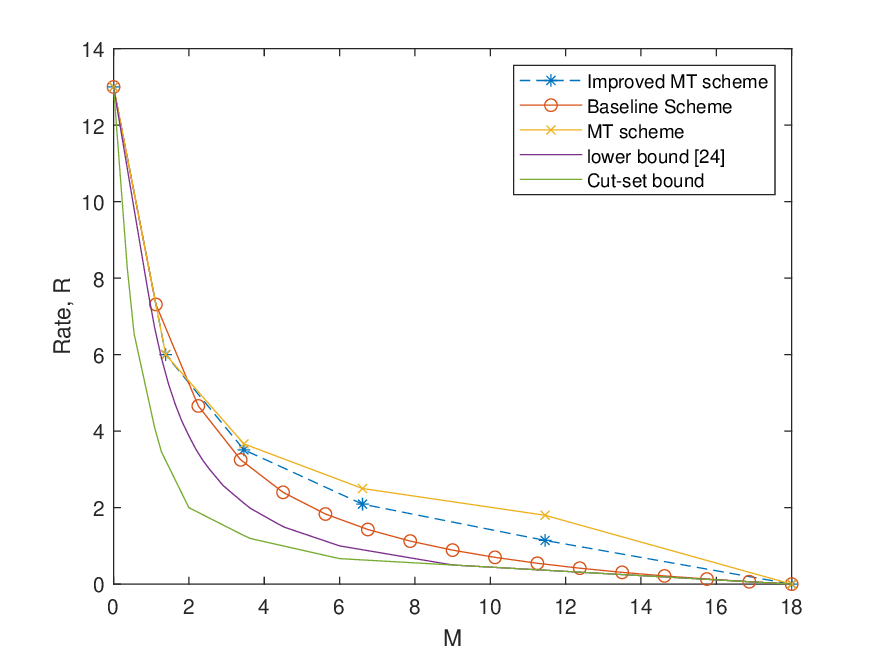}
  \caption{ (16, 13, 18) hotplug coded caching system.}
  \label{OEX1}
\end{figure}
\begin{figure}
  \includegraphics[width=\linewidth]{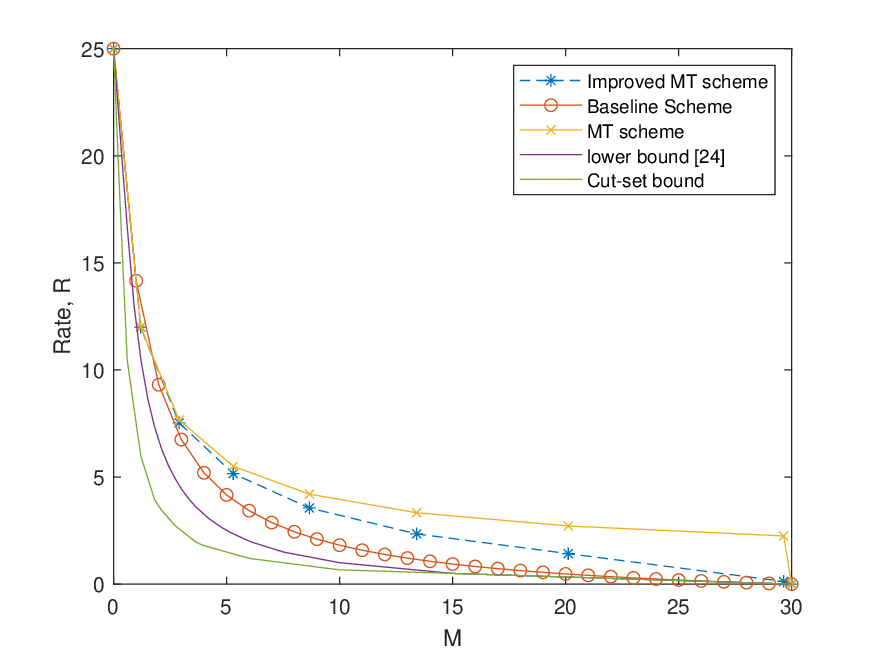}
  \caption{(30, 25, 30) hotplug coded caching system.}
  \label{OEX2}
\end{figure}

\section{Construction of HpPDA from $t$-designs} \label{tdesignPDA}


Let $(X,\mathcal{A})$ be a $t$-$(v,k,\lambda)$ design with non-repeated blocks, where $X=\{1,2,\ldots, v\}$, $\mathcal{A}=\{A_1, A_2, \ldots, A_b\}$ and $|A_i|=k$ for all $i \in \{1,2, \ldots, b\}$. Consider an array $P$ whose rows are indexed by the blocks in $\mathcal{A}$ and columns are indexed by the points in $X$. The array $P=(P(A, i))_{A \in \mathcal{A}, i \in X}$ is a $b \times v$ array containing ``$*$" and null, and defined as
\begin{equation} \label{arrayP}
P(A,i)=\begin{cases}
* & \text{if} \ \ i \in A, \\
null & \text{if} \ \ i \notin A
\end{cases}.
\end{equation}
In other words, $P$ can be obtained by the transpose of the incidence matrix of design $(X,\mathcal{A})$ after replacing $1$ by $*$, and $0$ by $null$.

For $1 \leq s \leq t$, consider the parameters $\lambda_s$ and $\lambda_s^{t}$ 
given in \eqref{lambda-s} and \eqref{lambda-ts}, respectively. Now for $1 \leq s \leq t-1$, let $0 \leq a_s \leq \lambda_s^{t}$, and consider a set 
$$\mathcal{\mathcal{R}}=\bigcup_{s=1}^{t-1} \left\{ (Y, i) \mid Y \in {[t] \choose s}, i \in [a_s] \right\}.$$

Clearly, $|\mathcal{R}|=\sum_{s=1}^{t-1} a_s {t \choose s}$. The integers $a_s\in \{0,1, \ldots, \lambda_s^t\}, s \in [t-1]$ are choosen in such a way that $|\mathcal{R}| > \lambda_1$. 
Consider an array $B$ whose rows are indexed by the elements in $\mathcal{R}$ and columns are indexed by the points in $[t]$. The array $B=(B((Y, i), j))$ is a $|\mathcal{R}| \times t$ array, defined as
\begin{equation} \label{arrayB}
B((Y, i), j)=\begin{cases}
* & \text{if} \ \ j \in Y, \\
\left(Y\cup\{j\},i\right) & \text{if} \ \ j \notin Y
\end{cases}.
\end{equation}

\noindent \textbf{Note:} Further, in this section, non star entries in an array (or PDA) are refered as integers. For example, in array $B$ defined in \eqref{arrayB}, the non star entries are of the form $\left(Y', i\right)$, where $Y'$ is a subset of $[t]$ of size $s+1$ and $i \in [t]$, which will be refered as the integer entries.

\begin{example}\label{ex6}
Consider the $3$-$(8,4,1)$ design $(X, \mathcal{A})$, where \\
$
X=\{1,2,3,4,5,6,7,8\},$ \\
$\mathcal{A}=\{ 1256, 3478, 2468, 1357, 1458,2367,1234, 5678, 1278,$ $3456, 1368, 2457, 1467, 2358\}.$\\
The $14 \times 8$ array $P$ can be obtained as follows.

\[
P=\begin{blockarray}{ccccccccc}
& 1 & 2 & 3 & 4 & 5 & 6 & 7 & 8 \\
\begin{block}{c[cccccccc ]}
1234 & *& * & * & * &  &  &  & \\
1256 &* & * &  &  & * & * & & \\
1278 & *& * &  &  &  &  & * & *\\
1357 & * &  & * &  & * &  & * & \\
1368 & *&  &*  &  &  &*  &  & *\\
1458 & *&  &  & * & * &  &  & *\\
1467 & *&  &  & * &  & * & * & \\
3478 & &  & * & * &  &  & * & *\\
2468 & & * &  & * &  & * &  & *\\
2358 & & * & * &  & * &  &  & *\\
2367 & & * & * &  &  &*  & * & \\
2457 & & * &  &*  & * &  & * & \\
3456 & &  & * & * & * &*  &  & \\
5678 & &  &  &  & * & * & * & *\\
\end{block}
\end{blockarray}.
 \]

In this example, we have $\lambda_2^3=\lambda_1^3=2$. By choosing $a_2=2$ and $a_1=1$, we have $|\mathcal{R}|=9$ and
\begin{align*}
\mathcal{R} & =  \left\{ (Y, i) \mid Y \in {[3] \choose 2}, i \in [2] \right\} \\
&\cup  \left\{ (Y, i) \mid Y \in {[3] \choose 1}, i \in [1] \right\} \\
& =\left\{ (\{1,2\}, 1), (\{1,3\}, 1), (\{2,3\}, 1), (\{1,2\}, 2), \right.\\
&  \left. (\{1,3\}, 2),  (\{2,3\}, 2), (\{1\}, 1), (\{2\}, 1), (\{3\}, 1) \right\}.
\end{align*}
Therefore, the $9 \times 3$ array $B$ can be obtained as follows.
\[
B=\begin{blockarray}{cccc}
& 1 & 2 & 3  \\
\begin{block}{c[ccc ]}
(12, 1) &* & * &  (123, 1) \\
 (13, 1) & * & (123, 1) & *\\
 (23, 1) &(123, 1) & * & *\\
(12, 2) &* & * & (123, 2)  \\
 (13, 2) & * & (123, 2) & *\\
 (23, 2) & (123, 2) & * & *\\
(1, 1) &* & (12, 1) &  (13, 1) \\
 (2, 1) & (12, 1) & * & (23, 1) \\
 (3, 1) &(13, 1) & (23, 1) & *\\
\end{block}
\end{blockarray}
 \]
or equivalently,
\[
B=
\begin{blockarray}{ccc}
 & &  \\
\begin{block}{[ccc]}
* & * &  1 \\
  * & 1 & *\\
 1 & * & *\\
* & * & 2  \\
 * &2 & *\\
  2 & * & *\\
* & 3 &  4 \\
 3 & * & 5 \\
4 & 5 & *\\
\end{block}
\end{blockarray}.
\]

\end{example}

To show that the arrays $P$ and $B$ constructed above using a $t$-design form a HpPDA, first we prove that $B$ is a PDA in the following lemma.
\begin{lemma}\label{lemma1}
For $s \in [t-1]$, and $a_s\in \{0,1, \ldots, \lambda_s^t\}, $, the array $B$ defined in \eqref{arrayB} is a $[K', F', Z', S]$ PDA, where 
\begin{align}\label{parameters1}
 \nonumber K'&=t, F'=|\mathcal{R}|=\sum_{s=1}^{t-1} a_s {t \choose s}, 
Z'=\sum_{s=1}^{t-1} a_s {t-1 \choose s-1},\\ 
 &\text{and} \ S=\sum_{s=1}^{t-1} a_s {t \choose s+1}.
\end{align}
\end{lemma}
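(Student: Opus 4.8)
The goal is to verify that the array $B$ defined in \eqref{arrayB} satisfies the three PDA axioms (C1, C2, C3) with the claimed parameters. First I would establish the dimensions and the parameter count, which are essentially bookkeeping: the rows are indexed by $\mathcal{R}$, so $F'=|\mathcal{R}|=\sum_{s=1}^{t-1}a_s\binom{t}{s}$; the columns are indexed by $[t]$, so $K'=t$. For the number of stars per column $Z'$, I would fix a column $j\in[t]$ and count the rows $(Y,i)$ with $j\in Y$; for each fixed size $s$ there are $\binom{t-1}{s-1}$ subsets $Y$ of size $s$ containing $j$, and $a_s$ choices of $i$, giving $Z'=\sum_{s=1}^{t-1}a_s\binom{t-1}{s-1}$. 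The total count of distinct integer values $S$ is the number of distinct entries of the form $(Y',i)$ with $|Y'|=s+1$; for each $s$ these are indexed by $Y'\in\binom{[t]}{s+1}$ and $i\in[a_s]$, yielding $S=\sum_{s=1}^{t-1}a_s\binom{t}{s+1}$, which also gives C2 once I argue each such value actually appears.

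\textbf{Verifying the axioms.} Condition C1 is immediate from the star-count above. For C2, I would show each integer $(Y',i)$ with $|Y'|=s+1$ occurs in the array: for any $j\in Y'$, the row $(Y'\setminus\{j\},\,i)$ has entry $(Y',i)$ in column $j$, since $Y'\setminus\{j\}$ has size $s\le t-1$ and $i\le a_s$ so this is a valid row index. In fact this shows each integer $(Y',i)$ appears exactly $|Y'|=s+1$ times, one in each column $j\in Y'$, which additionally establishes that $B$ is a $g$-regular PDA only within each fixed level $s$. The crux is C3: I must show that two equal integer entries force the $2\times 2$ subarray to have the diagonal/antidiagonal star pattern. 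Suppose $B((Y_1,i_1),j_1)=B((Y_2,i_2),j_2)=(Y',i)$ is an integer. Then by the definition in \eqref{arrayB}, $i_1=i_2=i$ and $Y_1\cup\{j_1\}=Y_2\cup\{j_2\}=Y'$, with $j_1\notin Y_1$ and $j_2\notin Y_2$. The key observation is that $Y_1$ and $Y_2$ are both size-$s$ subsets of the size-$(s+1)$ set $Y'$, each obtained by deleting a single element; since $j_1\ne j_2$ would be needed, I would argue that $Y_1=Y'\setminus\{j_1\}$ and $Y_2=Y'\setminus\{j_2\}$, so the two rows are distinct ($Y_1\ne Y_2$) and the two columns distinct ($j_1\ne j_2$), giving C3(a). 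For C3(b), I check the cross entries: $B((Y_1,i),j_2)$ — since $j_2\in Y'\setminus\{j_1\}=Y_1$, this equals $*$; symmetrically $B((Y_2,i),j_1)=*$ because $j_1\in Y_2$. This produces exactly the required star pattern.

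\textbf{Main obstacle.} The only genuinely delicate point is the C3 argument, specifically ruling out the degenerate possibilities: I must confirm that two entries equal to the same integer $(Y',i)$ cannot lie in the same row or the same column. Same row is impossible because within a fixed row $(Y,i)$ the nonstar entries $(Y\cup\{j\},j\notin Y)$ all have distinct first coordinates $Y\cup\{j\}$. Same column is impossible because in a fixed column $j$, an integer entry $(Y',i)$ determines $Y=Y'\setminus\{j\}$ and $i$ uniquely, so the row is uniquely determined. Once these uniqueness facts are pinned down, C3 follows cleanly. I would therefore organize the proof as: (i) dimension and parameter count; (ii) C1 and C2; (iii) the within-row and within-column uniqueness lemmas; (iv) deduce C3. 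I expect steps (i)--(ii) to be routine counting, and the careful handling of the uniqueness claims in (iii) to be where the real content lies.
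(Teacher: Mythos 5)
Your proof is correct, but it is organized differently from the paper's. The paper does not verify the PDA axioms globally on $B$: it first partitions the row-index set $\mathcal{R}$ into the classes $\mathcal{R}_{(s,i)}$, $s\in[t-1]$, $i\in[a_s]$, splits $B$ accordingly into $\sum_{s=1}^{t-1}a_s$ subarrays $B_{(s,i)}$, shows each $B_{(s,i)}$ is an $(s+1)$-$\left[t,\binom{t}{s},\binom{t-1}{s-1},\binom{t}{s+1}\right]$ regular (MAN-type) PDA, and then observes that no integer label is shared between two subarrays (the second coordinate $i$ separates $B_{(s,i)}$ from $B_{(s,i')}$, and the cardinality $|U|=s+1$ of the first coordinate separates levels $s\neq s'$), so the stacked array is a PDA with the summed parameters. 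You instead run a single global verification: the star and integer counts are the same bookkeeping, and your C3 argument makes the cross-level and cross-index disjointness automatic, since equality of two integer entries $(Y_1\cup\{j_1\},i_1)=(Y_2\cup\{j_2\},i_2)$ forces $i_1=i_2$ and $|Y_1|=|Y_2|$ directly from the label; your within-row and within-column uniqueness observations correctly rule out the degenerate cases, exactly paralleling the paper's in-block C3 check. What the global route buys you is a shorter, self-contained proof with no need for a separate disjointness step; what the paper's decomposition buys is the explicit identification of the regular MAN-PDA blocks $B_{(s,i)}$, which is not incidental: that block structure is reused later (the remark following the lemma, Lemma \ref{lemma2}, Theorem \ref{thm6}, and the proof of Theorem \ref{thm3} all argue blockwise on $B_{(s,i)}$), so if you adopted your version you would still need to introduce the decomposition elsewhere. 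Also note your regularity aside should be read per level: each integer at level $s$ appears exactly $s+1$ times, so $B$ as a whole is generally not $g$-regular for a single $g$.
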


\begin{proof}
The array $B$ is a $|\mathcal{R}| \times t$ array whose rows are indexed by the elements in $\mathcal{R}$ and columns are indexed by the points in $[t]$. We can rewite $\mathcal{R}$ as 
$$\mathcal{R}=\bigcup_{s=1}^{t-1} \bigcup_{i \in [a_s]}\left\{ (Y, i) \mid Y \in {[t] \choose s} \right\}=\bigcup_{s=1}^{t-1} \bigcup_{i \in [a_s]}\mathcal{R}_{(s,i)},$$
where $\mathcal{R}_{(s,i)} =\left\{ (Y, i) \mid Y \in {[t] \choose s} \right\}$ for all $s\in [t-1]$ and $i \in [a_s]$. Clearly, $|\mathcal{R}_{(s,i)}|={t \choose s}$. Now break the array $B$ into $\sum_{s=1}^{t-1}a_s$ subarrays denoted by $B_{(s,i)}$, where $s\in [t-1], i \in [a_s]$. The subarray $B_{(s,i)}$ is a ${t \choose s} \times t$ array whose rows are indexed by the elements in $\mathcal{R}_{(s,i)}$ and columns are indexed by the elements in $[t]$.

To prove that $B$ is a PDA, we will prove that $B_{(s,i)}$ is a PDA, for all $s\in [t-1], i \in [a_s]$, and there is no integer common in two different subarrays $B_{(s,i)}$ and $B_{(s',i')}$, where either $s \neq s'$ or $i \neq i'$. 
For all $s\in [t-1], i \in [a_s]$, we have 
\begin{equation*} \label{arrayBsi}
B_{(s,i)}((Y, i), j)=\begin{cases}
* & \text{if} \ \ j \in Y, \\
\left(Y\cup\{j\},i\right) & \text{if} \ \ j \notin Y
\end{cases}.
\end{equation*}

\begin{enumerate}[{C}1.]
\item The number of ``$*$"'s appear in $j$-th column of $B_{(s,i)}$ is 
$$\left| \left\{Y \in {[t] \choose s} \mid j \in Y \right\} \right|={t-1 \choose s-1},$$ for all $j \in [t].$
\item The set of integers appearing in array $B_{(s,i)}$ is
$$S=\left\{ (Y \cup \{j\}, i) \mid Y \in {[t] \choose s}, j \in [t] \right\}$$
 (here, integers are denoted by a pair $(U,i)$, where $U$ is set of size $s+1$ and $i \in [t]$). Clearly, $|S|={t \choose s+1}$, and each integer appears exactly $s+1$ times.
\item Consider two distinct integer entries $B_{(s,i)}((Y_1,i), j_1)$ and $B_{(s,i)}((Y_2,i), j_2)$ such that $B_{(s,i)}((Y_1,i), j_1)=B_{(s,i)}((Y_2,i), j_2) = Q$, where $Q \in S$. We have
$Q=(Y_1\cup\{j_1\},i)=(Y_2\cup\{j_2\},i)$ which implies that 
\begin{equation}\label{leq1}
Y_1\cup\{j_1\}=Y_2\cup\{j_2\}.
\end{equation} 
Since both entries are different, we have $(Y_1, i) \neq (Y_2, i)$ and $j_1 \neq j_2$. Using \eqref{leq1}, and the fact that $j_1 \neq j_2$, we have $j_1 \in Y_2$ and $j_2 \in Y_1$, which implies that $B_{(s,i)}((Y_1,i), j_2)=B_{(s,i)}((Y_2,i), j_1) = *$.

\end{enumerate}
Therefore, $B_{(s,i)}$ is a $(s+1)$-$[t, {t \choose s}, {t-1 \choose s-1}, {t \choose s+1}]$ regular PDA for all $s\in [t-1], i \in [a_s]$.
Since every integer entry $(U,i)$ in PDA $B_{(s,i)}$ depends on $i$, there is no integer common in PDA $B_{(s,i)}$ and $B_{(s,i')}$, where $i \neq i'$. Also in an integer entry $(U,i)$ in PDA $B_{(s,i)}$, the cardinality of $U$ is $s+1$, hence no integer entry will be common in PDA $B_{(s,i)}$ and $B_{(s',i)}$, where $s \neq s'$.

The array $B$ is made of $\sum_{s=1}^{t-1}a_s$ disjoint subarrays $B_{(s,i)}$, $s\in [t-1], i \in [a_s]$, and each subarrays $B_{(s,i)}$ is a $[t, {t \choose s}, {t-1 \choose s-1}, {t \choose s+1}]$ PDA. Therefore, $B$ is a $\left[ K'=t, F'=\sum_{s=1}^{t-1} a_s {t \choose s}, Z'=\sum_{s=1}^{t-1} a_s {t-1 \choose s-1}, \right. $ $\left. S=\sum_{s=1}^{t-1} a_s {t \choose s+1} \right] \text{PDA}.$

\end{proof}

\begin{remark}
For some $s \in [t-1]$ and $i \in [a_s]$, the PDA $B_{(s, i)}$ is a $(s+1)$-$[t, {t \choose s}, {t-1 \choose s-1}, {t \choose s+1}]$ MAN PDA.
\end{remark}

We identify all PDAs $B_{(s, i)}, s \in [t-1], i \in [a_s]$ in Example \ref{ex6} as follows.
\begin{example*}[Example \ref{ex6} continued]

For $s \in [2], i \in [a_s]$ we have the following PDAs.
\[
B_{(1, 1)}=\begin{blockarray}{cccc}
& 1 & 2 & 3  \\
\begin{block}{c[ccc ]}
(1, 1) &* & (12, 1) &  (13, 1) \\
 (2, 1) & (12, 1) & * & (23, 1) \\
 (3, 1) &(13, 1) & (23, 1) & *\\
\end{block}
\end{blockarray},
 \]
\[
B_{(2, 1)}=\begin{blockarray}{cccc}
& 1 & 2 & 3  \\
\begin{block}{c[ccc ]}
(12, 1) &* & * &  (123, 1) \\
 (13, 1) & * & (123, 1) & *\\
 (23, 1) &(123, 1) & * & *\\
\end{block}
\end{blockarray},
 \]
\[
B_{(2,2)}=\begin{blockarray}{cccc}
& 1 & 2 & 3  \\
\begin{block}{c[ccc ]}
(12, 2) &* & * & (123, 2)  \\
 (13, 2) & * & (123, 2) & *\\
 (23, 2) & (123, 2) & * & *\\
\end{block}
\end{blockarray}.
 \]

\end{example*}

\begin{theorem}\label{thm3}
For $a_s\in \{0,1, \ldots, \lambda_s^t\}, s \in [t-1]$, such that $|\mathcal{R}|> \lambda_1$, the pair  $(P, B)$ forms a $(K,K',F, F',Z,Z',S)$-HpPDA, where $K', F', Z', S$ are defined in \eqref{parameters1} and $K=v, F=b, Z=\lambda_1$.
\end{theorem}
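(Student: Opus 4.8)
The plan is to verify the HpPDA definition directly: I must establish three things about the pair $(P,B)$. First, that $P$ has the correct dimensions $F=b$ rows and $K=v$ columns with exactly $Z=\lambda_1$ stars per column; second, that $B$ is a valid $[K',F',Z',S]$-PDA with the stated parameters; and third, the essential embedding property \eqref{c}, namely that for every $\tau\subseteq[v]$ with $|\tau|=t$ there exists $\zeta\subseteq\mathcal{A}$ with $|\zeta|=F'$ such that $[P]_{\zeta\times\tau}\myeq B$. Lemma \ref{lemma1} already supplies the second item, so I would quote it immediately. For the first item, the star count per column is exactly the number of blocks containing a fixed point, which by Theorem \ref{thm1.1} equals $\lambda_1=\frac{\lambda\binom{v-1}{t-1}}{\binom{k-1}{t-1}}$, giving $Z=\lambda_1$.

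The heart of the proof is the embedding property, and the main obstacle is constructing, for an arbitrary $t$-subset $\tau=\{\tau_1,\dots,\tau_t\}$ of points, a set $\zeta$ of $F'=|\mathcal{R}|$ blocks whose restriction to columns $\tau$ reproduces the star-pattern of $B$. The natural strategy is to build $\zeta$ from the structure of $\mathcal{R}$: recall $\mathcal{R}=\bigcup_{s=1}^{t-1}\bigcup_{i\in[a_s]}\mathcal{R}_{(s,i)}$, where each $\mathcal{R}_{(s,i)}$ indexes a MAN-type subPDA $B_{(s,i)}$ on rows $Y\in\binom{[t]}{s}$. For each such $Y$ (an $s$-subset of the column index set $[t]$), the corresponding row of $B$ has stars exactly in the positions of $Y$. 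Via the bijection $\phi:[t]\to\tau$, $\phi(j)=\tau_j$, the set $Y$ corresponds to an $s$-subset $\phi(Y)\subseteq\tau$, and I need a block $A\in\mathcal{A}$ whose intersection with $\tau$ is exactly $\phi(Y)$, i.e. $A$ contains the $s$ points $\phi(Y)$ and none of the remaining $t-s$ points of $\tau$. By the Corollary following Theorem \ref{thm1.2}, the number of such blocks is precisely $\lambda_s^t=\frac{\lambda\binom{v-t}{k-s}}{\binom{v-t}{k-t}}$.

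This is exactly why the constraint $a_s\le\lambda_s^t$ appears: for each $s$-subset $Y$ of $[t]$ I must assign $a_s$ distinct blocks (one for each copy $i\in[a_s]$), each restricting to $\phi(Y)$ on $\tau$, and availability of $a_s\le\lambda_s^t$ such blocks is guaranteed by the Corollary. I would therefore define $\zeta$ by, for each $s\in[t-1]$, each $Y\in\binom{[t]}{s}$, and each $i\in[a_s]$, selecting the $i$-th block (in a fixed ordering) that contains $\phi(Y)$ and avoids $\tau\setminus\phi(Y)$; collecting all these blocks gives $|\zeta|=\sum_{s=1}^{t-1}a_s\binom{t}{s}=F'$ rows. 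Two technical points must be checked: that these chosen blocks are distinct across different $Y$ (immediate, since distinct $Y$ force distinct intersections with $\tau$, hence distinct blocks) and across different $s$ (also immediate, as intersection sizes $s$ differ), so $|\zeta|=F'$ genuinely; and that the star-pattern matches. For a chosen block $A$ sitting in the row indexed by $(Y,i)$, we have $P(A,\tau_j)=*$ iff $\tau_j\in A$, which by construction holds iff $\tau_j\in\phi(Y)$, i.e. iff $j\in Y$, which is exactly the star-condition $B((Y,i),j)=*$ from \eqref{arrayB}. Hence $[P]_{\zeta\times\tau}\myeq B$ after applying the bijections $\phi$ on columns and the identity-labelling on rows, completing the verification. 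The inequality $Z<F'$ required by the HpPDA definition follows from the hypothesis $|\mathcal{R}|>\lambda_1$, and $Z\ge Z'$ holds since $Z=\lambda_1\ge Z'$ is implied by the same parameter constraints.
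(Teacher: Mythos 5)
Your proposal is correct and follows essentially the same route as the paper's proof: invoking Lemma \ref{lemma1} for $B$, counting stars per column of $P$ via Theorem \ref{thm1.1}, and constructing $\zeta$ by selecting, for each $s$-subset $U=\phi(Y)$ of $\tau$ and each $i\in[a_s]$, the $i$-th block meeting $\tau$ exactly in $U$ (availability guaranteed by the Corollary to Theorem \ref{thm1.2}, which is precisely where $a_s\le\lambda_s^t$ enters), then matching star patterns through the bijections on rows and columns. Your added checks that the chosen blocks are pairwise distinct and that $Z<F'$ follows from $|\mathcal{R}|>\lambda_1$ are consistent with, and slightly more explicit than, the paper's argument.
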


\begin{proof}
From Lemma \ref{lemma1}, we know that $B$ is a $\left[ K'=t, F'=\sum_{s=1}^{t-1} a_s {t \choose s}, Z'=\sum_{s=1}^{t-1} a_s {t-1 \choose s-1}, S=\sum_{s=1}^{t-1} \right.$ $\left. a_s {t \choose s+1} \right]$ PDA. The array $P$ defined in \eqref{arrayP} is a $b \times v$ array containing ``$*$" and null, whose rows are indexed by the blocks in $\mathcal{A}$ and columns are indexed by the points in $X$, i.e., $F=b, K=v$. From Theorem \ref{thm1.1}, we know that in a $t$-$(v,k,\lambda)$ design each point appears in exactly $\lambda_1$ blocks. Hence there are exactly $\lambda_1$ number of ``$*$"\textquotesingle s in each column of $P$, i.e., $Z=\lambda_1$.
 
Now we need to prove that the pair $(P, B)$ satisfies the property \eqref{c}.

For a given subset $\tau \subseteq X, |\tau|=t$ and $U \subseteq \tau, |U|=s<t$, there are exactly $\lambda_s^t$ blocks which contains $U$ and does not contain any element from $\tau \backslash U$ (from Theorem \ref{thm1.2}). Let the set of those blocks be denoted by $\mathcal{A}_{U}^{\tau}$, i.e., $\mathcal{A}_{U}^{\tau} = \{A\in \mathcal{A} \mid A \cap \tau =U\}$. Clearly, $|\mathcal{A}_{U}^{\tau}|=\lambda_s^t$. Let the elements in $\mathcal{A}_{U}^{\tau}$ are denoted as $\{\mathcal{A}_{U}^{\tau} (1), \mathcal{A}_{U}^{\tau} (2), \ldots, \mathcal{A}_{U}^{\tau}(\lambda_s^t)\}$. In other words, we can say that $\mathcal{A}_{U}^{\tau} (i)$ is a block which contains $U$ and does not contain any element from $\tau \backslash U$.

Let $\tau=\{i_1,i_2, \ldots, i_{t}\} \subseteq X$. Consider a set $\zeta = \bigcup_{s=1}^{t-1} \bigcup_{i \in [a_s]}\left\{ \mathcal{A}_{U}^{\tau} (i) \mid U \in {\tau \choose s} \right\}$ $=\bigcup_{s=1}^{t-1}$ $\bigcup_{i \in [a_s]} \zeta_{(s,i)}$, where $\zeta_{(s,i)}=\left\{ \mathcal{A}_{U}^{\tau} (i) \mid U \in {\tau \choose s} \right\}$. Clearly, $|\zeta|=\sum_{s=1}^{t-1} a_s {t \choose s}=F'$. 
Corresponding to the sets $\tau$ and $\zeta$, we have a subarray $\overline{P}=[P]_{ \zeta\times \tau}$ of $P$ such that $\overline{P}(A, i_j)=*$ if and only if $i_j \in A, A\in \zeta$. Now divide array $\overline{P}$ into $\sum_{s=1}^{t-1}a_s$ subarrays denoted by $\overline{P}_{(s,i)}$, for $s\in [t-1], i \in [a_s]$. The subarray $\overline{P}_{(s,i)}$ is a ${t \choose s} \times t$ array whose rows are indexed by the elements in $\zeta_{(s,i)}$ and columns are indexed by the elements in $\tau$ such that $\overline{P}_{(s,i)}(A, i_j)=*$ if and only if $i_j \in A, A\in \zeta_{(s,i)}$, or we can say $\overline{P}_{(s,i)}(\mathcal{A}_{U}^{\tau} (i), i_j)=*$ if and only if $i_j \in \mathcal{A}_{U}^{\tau} (i), U \in {\tau \choose s}$. Since $\mathcal{A}_{U}^{\tau} (i)$ denotes a block which contains $U$ and does not contain any element from $\tau \backslash U$, we can say that $\overline{P}_{(s,i)}(\mathcal{A}_{U}^{\tau} (i), i_j)=*$ if and only if $i_j \in U, U \in {\tau \choose s}$.

To prove that $\overline{P} \myeq B$, we will prove that $\overline{P}_{(s,i)} \myeq B_{(s,i)}$ for all $s\in [t-1]$ and $i \in[a_s]$. 

The array $B_{(s,i)}$ is a ${t \choose s} \times t$ array whose rows are indexed by the elements in $\mathcal{R}_{(s,i)}$ and columns are indexed by the elements in $[t]$ such that $B_{(s,i)}((Y, i), j)=*$ if and only if $j \in Y$, where $\mathcal{R}_{(s,i)} =\left\{ (Y, i) \mid Y \in {[t] \choose s} \right\}$ for all $s\in [t-1]$ and $i \in [a_s]$. We prove $\overline{P}_{(s,i)} \myeq B_{(s,i)}$ in the following three steps. 
\begin{enumerate}
\item First, we show that there is an one to one correspondence between the sets which are used to index the columns of arrays $\overline{P}_{(s,i)}$ and $B_{(s,i)}$. Clearly, there is an one to one correspondence $\phi : \tau \to [t]$ such that $\phi(i_j)=j, \forall i_j \in \tau$.
\item In this step, we show that there is an one to one correspondence between the sets which are used to index the rows of arrays $\overline{P}_{(s,i)}$ and $B_{(s,i)}$. For given $s\in [t-1]$ and $i \in[a_s]$, there is an one to one correspondence $\psi: \zeta_{(s,i)} \to \mathcal{R}_{(s,i)} $ such that $\psi(\mathcal{A}_{U}^{\tau} (i))=(\{j_1, j_2, \ldots, j_s\},i)$, for all $U=\{i_{j_1}, i_{j_2}, \ldots, i_{j_s}\}\subseteq \tau, |U|=s$.
\item In the last step, we show that the conditions to have a "*" entry in $\overline{P}_{(s,i)}$ and $B_{(s,i)}$ corresponds to each other under the defined maps $\phi$ and $\psi$. Clearly, for $1 \leq j \leq t$, we have $i_j \in U$, $U \in {\tau \choose s}$ if and only if $\phi(i_j) \in Y$, where $\psi(\mathcal{A}_{U}^{\tau} (i))=(Y,i)$. Therefore, we have 
$\overline{P}_{(s,i)}(\mathcal{A}_{U}^{\tau} (i), i_j)=B_{(s,i)}(\psi(\mathcal{A}_{U}^{\tau} (i)), \phi(i_j))=*$ if and only if $i_j \in U, U \in {\tau \choose s}$. 
\end{enumerate}
Hence $\overline{P}_{(s,i)} \myeq B_{(s,i)}$.
\end{proof}

\begin{remark}
The users are represented by the points in $X$ and subpackitization level is $F'$. All $F'$ subfiles of each file are coded with $[F, F']$ MDS code to get $F$ coded subfiles. 
\end{remark}

The following result can be obtained directly from Theorem \ref{thm3} and Theorem \ref{thm2.1}.

\begin{corollary}\label{thm4}[$t$-Scheme]
For $a_s\in \{0,1, \ldots, \lambda_s^t\}, s \in [t-1]$, such that $|\mathcal{R}| > \lambda_1$, HpPDA $(P, B)$ gives a $(K, K', N)$ hotplug coded caching scheme with cache memory $M$ and rate ${R}$ as follows
$$\frac{M}{N}=\frac{Z}{F'}=\frac{\lambda_1}{\sum_{s=1}^{t-1} a_s {t \choose s}} \ \text{and} \ R=\frac{S}{F'}=\frac{\sum_{s=1}^{t-1} a_s {t \choose s+1}}{\sum_{s=1}^{t-1} a_s {t \choose s}}.$$
\end{corollary}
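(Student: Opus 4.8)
The plan is to obtain the statement by directly chaining the two results named in it, Theorem~\ref{thm3} and Theorem~\ref{thm2.1}; no new construction is required, only a transport of parameters. First I would invoke Theorem~\ref{thm3}, under the standing hypotheses $a_s \in \{0, 1, \ldots, \lambda_s^t\}$ for $s \in [t-1]$ and $|\mathcal{R}| > \lambda_1$, to conclude that the pair $(P, B)$ built from the $t$-$(v, k, \lambda)$ design is a genuine $(K, K', F, F', Z, Z', S)$-HpPDA, with the identifications $K = v$, $F = b$, $Z = \lambda_1$, and $K', F', Z', S$ as given in \eqref{parameters1}.

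Before applying Theorem~\ref{thm2.1} I would check that its hypotheses hold for these particular parameters. The inequality demanded by the HpPDA definition is $Z < F'$, which also secures $M < N$; here this is exactly the assumed condition $|\mathcal{R}| > \lambda_1$, since $F' = |\mathcal{R}|$ and $Z = \lambda_1$. I would also note that the decoding step in the proof of Theorem~\ref{thm2.1} relies on $Z - Z' \geq 0$, so as a sanity check I would confirm $Z' \leq Z$ for the design-derived values in \eqref{parameters1}. This verification of the two auxiliary inequalities is the only place any care is needed; it is the closest thing to an obstacle, and a minor one.

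With the hypotheses in hand, applying Theorem~\ref{thm2.1} to the HpPDA $(P, B)$ produces a $(K, K', N)$ hotplug coded caching scheme, realized by Algorithm~\ref{algo1} with an $[F, F']$ MDS code, having subpacketization level $F'$, memory ratio $\frac{M}{N} = \frac{Z}{F'}$, and rate $R = \frac{S}{F'}$. Substituting $Z = \lambda_1$ together with the expressions for $F'$ and $S$ from \eqref{parameters1} yields
$$\frac{M}{N} = \frac{\lambda_1}{\sum_{s=1}^{t-1} a_s \binom{t}{s}} \quad \text{and} \quad R = \frac{\sum_{s=1}^{t-1} a_s \binom{t}{s+1}}{\sum_{s=1}^{t-1} a_s \binom{t}{s}},$$
which is precisely the claimed memory-rate pair, completing the argument.
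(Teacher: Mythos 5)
Your proposal is correct and follows exactly the paper's route: the paper derives this corollary directly by chaining Theorem~\ref{thm3} (which certifies $(P,B)$ as a $(K,K',F,F',Z,Z',S)$-HpPDA with $K=v$, $F=b$, $Z=\lambda_1$ and the parameters of \eqref{parameters1}) with Theorem~\ref{thm2.1} and substituting, precisely as you do. Your auxiliary check that $Z'\leq Z$ is harmless and in fact automatic once $(P,B)$ is known to be an HpPDA, since each column of a subarray $[P]_{\zeta\times\tau}$ contains at most the $Z$ stars of the corresponding column of $P$ while the matching condition forces it to contain exactly the $Z'$ stars of the corresponding column of $B$.
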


\begin{example*}[Example \ref{ex6} continued]

Consider a subset of $[8]$ of size $3$, say $\tau=\{2,6,8\}$. For all $U \subseteq \tau, |U|=2$, we have $\mathcal{A}_{\{2,6\}}^\tau=\{1256, 2367\}$, $\mathcal{A}_{\{2,8\}}^\tau=\{1278, 2358\}$ and $\mathcal{A}_{\{6,8\}}^\tau=\{5678, 1368\}$. For all $U \subseteq \tau, |U|=1$, we have $\mathcal{A}_{\{2\}}^\tau=\{1234, 2457\}$, $\mathcal{A}_{\{6\}}^\tau=\{3456, 1467\}$ and $\mathcal{A}_{\{8\}}^\tau=\{3478, 1458\}$.

Therefore, there exists a subset of $\mathcal{A} $, 
\begin{align*}
\zeta &= \bigcup_{s=1}^{t-1} \bigcup_{i \in [a_s]}\left\{ \mathcal{A}_{U}^{\tau} (i) \mid U \in {\tau \choose s} \right\} \\
&=\{1256, 1278, 5678, 2367,  2358, 1368, 1234, 3456, 3478\} 
\end{align*}
 such that we have a subarray $[P]_{\zeta \times \tau}$ of $P$ as follows.

\[
[P]_{\zeta \times \tau}=\begin{blockarray}{ccccc}
& 2 &   6  & 8 & \\
\begin{block}{c[ccc ]c}
 1256 &* & * & & (26,1)\\
 1278 & *&  &  * & (28,1) \\
  5678 & & * & * & (68,1)\\
  2367 &* & * & & (26,2)\\
 2358 &* & &*  & (28,2) \\
  1368 & & * & * & (68,2)\\
 1234 & *  &  & & (2,1)  \\
3456 & & * & & (6,1)    \\
 3478 & &  & * & (8,1) \\
\end{block}
\end{blockarray} \myeq B.
 \]

Here, $(P, B)$ is an $(8,3, 14, 9, 7, 5, 5)$ HpPDA which corresponds to $(8,3,N)$ hotplug coded caching system with 
$\frac{M}{N}=\frac{7}{9} \ \text{and} \ R=\frac{5}{9}.$
Since $Z-Z'=2$, the rate can be reduced further to $\frac{2}{9}$ by using the method given in Section \ref{improved} (explained in deatil in the next section).

For conditions, $0 \leq a_2 \leq 2, 0 \leq a_1 \leq 2$, and $|\mathcal{R}|>\lambda_1=7$, we have following three choices for $a_1$ and $a_2$,
\begin{enumerate}
\item $a_2=2, a_1=1$ with $|\mathcal{R}|=9$,
\item $a_2=1, a_1=2$ with $|\mathcal{R}|=9$,
\item $a_2=2, a_1=2$ with $|\mathcal{R}|=12$.
\end{enumerate}

First case (Case 1)) has already been discussed. Now we explain other two cases.

\noindent Case 2) For $a_2=1$ and $a_1=2$, we have $|\mathcal{R}|=9$ and
\begin{align*}
\mathcal{R} & =  \left\{ (\{1,2\}, 1), (\{1,3\}, 1), (\{2,3\}, 1),  (\{1\}, 1), (\{2\}, 1),\right.\\
&  \left.  (\{3\}, 1),  (\{1\}, 2), (\{2\}, 2), (\{3\}, 2) \right\}.
\end{align*}
Therefore, the $9 \times 3$ array $B'$ can be obtained as follows.
\[
B'=\begin{blockarray}{cccc}
& 1 & 2 & 3  \\
\begin{block}{c[ccc ]}
(12, 1) &* & * &  (123, 1) \\
 (13, 1) & * & (123, 1) & *\\
 (23, 1) &(123, 1) & * & *\\
(1, 1) &* & (12, 1) &  (13, 1) \\
 (2, 1) & (12, 1) & * & (23, 1) \\
 (3, 1) &(13, 1) & (23, 1) & *\\
(1, 2) &* & (12, 2) &  (13, 2) \\
 (2, 2) & (12, 2) & * & (23, 2) \\
 (3, 2) &(13, 2) & (23, 2) & *\\
\end{block}
\end{blockarray}.
 \]
Then we get $(8,3, 14, 9, 7, 4, 7)$ HpPDA $(P, B')$ which corresponds to $(8,3,N)$ hotplug coded caching system with 
$\frac{M}{N}=\frac{7}{9} \ \text{and} \ R=\frac{7}{9}.$
Since $Z-Z'=3$, the rate can be reduced further to $\frac{3}{9}$ by using the method given in Section \ref{improved}.

\noindent Case 3) For $a_2=2$ and $a_1=2$, we have $|\mathcal{R}|=12$ and
\begin{align*}
\mathcal{R} & =  \left\{ (\{1,2\}, 1), (\{1,3\}, 1), (\{2,3\}, 1), (\{1,2\}, 2), (\{1,3\}, 2), \right.\\
&  \left.  \qquad (\{2,3\}, 2),  (\{1\}, 1), (\{2\}, 1), (\{3\}, 1),  (\{1\}, 2), (\{2\}, 2), \right.\\
&  \left. \qquad \quad  (\{3\}, 2) \right\}.
\end{align*}
Therefore, the $12 \times 3$ array $B''$ can be obtained as follows.
\[
B''=\begin{blockarray}{cccc}
& 1 & 2 & 3  \\
\begin{block}{c[ccc ]}
(12, 1) &* & * &  (123, 1) \\
 (13, 1) & * & (123, 1) & *\\
 (23, 1) &(123, 1) & * & *\\
(12, 2) &* & * &  (123, 2) \\
 (13, 2) & * & (123, 2) & *\\
 (23, 2) &(123, 2) & * & *\\
(1, 1) &* & (12, 1) &  (13, 1) \\
 (2, 1) & (12, 1) & * & (23, 1) \\
 (3, 1) &(13, 1) & (23, 1) & *\\
(1, 2) &* & (12, 2) &  (13, 2) \\
 (2, 2) & (12, 2) & * & (23, 2) \\
 (3, 2) &(13, 2) & (23, 2) & *\\
\end{block}
\end{blockarray}.
 \]
Here we will get $(8,3, 14, 12, 7, 6, 8)$ HpPDA $(P, B'')$ which corresponds to $(8,3,N)$ hotplug coded caching system with 
$\frac{M}{N}=\frac{7}{12} \ \text{and} \ R=\frac{8}{12}=\frac{2}{3}.$ Again, since $Z-Z'=1$, the rate can be reduced further to $\frac{7}{12}$ by using the method given in Section \ref{improved}.

The parameters of hotplug coded caching scheme for all three choices of $a_i$'s using $3$-$(8,4,1)$ design are given in Table \ref{tab2}.

\begin{table}[!t]
\caption{Example \ref{ex6}:  The parameters of $(8,3,N)$ hotplug coded caching scheme for all three choices of $a_i$'s. \label{tab2}}
\centering
\begin{tabular}{|c|c|c|c|c|} 
 \hline
 $(a_2, a_1)$ & $|\mathcal{R}|$ & $M/N$ & Rate & Improved rate \\ 
 \hline
$(2, 1)$ & $9$ & $7/9$ & $5/9$ & $2/9$ \\

$(1, 2)$ & $9$ & $7/9$ & $7/9$ & $3/9$ \\

$(2, 2)$ & $12$ & $7/12$ & $8/12$ & $7/12$ \\
\hline
\end{tabular}
\end{table}

\end{example*}

For a given $t$-$(v,k,\lambda)$ design, we get a $(v,t,N)$ hotplug coded caching scheme. We get different cache memory points for the different choice of $a_s, s \in [t-1]$ such that $1 \leq a_s \leq \lambda_s^t$ and $|\mathcal{R}| > \lambda_1$. In Example \ref{ex6}, we get $(8,3,N)$ hotplug coded caching scheme for $\frac{M}{N}=\frac{7}{9}$ (from Case 1) and Case 2)) and $\frac{M}{N}=\frac{7}{12}$ (from Case 3)). Suppose we get the same value of $|\mathcal{R}|$ for different choices of $a_s, s \in [t-1]$, i.e., $\frac{M}{N} =\frac{Z}{F'}$ is same as in Case 1) and Case 2) of Example \ref{ex6}. Then we simply consider the choice of $a_s, s \in [t-1]$ with minimum value of $\sum_{s=1}^{t-1} a_s {t \choose s+1}$ which will correspond to the minimum rate.

\begin{remark}\label{rem6}
Since a $t$-$(v, k, \lambda)$ design $(X, \mathcal{A})$ is also an $s$-$(v, k, \lambda_s)$ design, for all $1 \leq s \leq t$, therefore, a $(v, s, N)$ hotplug coded caching scheme can be constructed with $s$ number of active users using the proposed scheme for $s$-$(v, k, \lambda_s)$ design.
\end{remark}

\begin{remark}
Since there exists a $(t-i)$-$(v-i,k-i,\lambda)$ design if there is a $t$-$(v,k,\lambda)$ design $(X, \mathcal{A})$, for $i<t$, therefore, a $(v-i, t-i, N)$ hotplug coded caching scheme can be constructed with $v-i$ users and $t-i$ active users using the proposed $t$-Scheme for $(t-i)$-$(v-i,k-i,\lambda)$ design.
\end{remark}

\subsection{Improved version of $t$-Scheme}\label{improvescheme2}

Consider the HpPDA $(P, B)$ for $t$-Scheme constructed in Section \ref{tdesignPDA}, where $B$ and $P$ are defined in (\ref{arrayB}) and (\ref{arrayP}). Now, we will reduce the rate of $t$-Scheme using the method given in Section \ref{improved} in which we choose a subset $T \subseteq S$ such that 
\begin{equation}\label{T'}
|T \cap C_i| \leq Z-Z', \ \forall i \in [K'],
\end{equation} 
where $C_i$ denotes the set of integers appearing in the $i$-th column of array $B$, $i \in [K']$. 

There are total $Z'=\sum_{s=1}^{t-1} a_s {t-1 \choose s-1}$ number of $*$'s and $F'-Z'=\sum_{s=1}^{t-1} a_s {t \choose s} - \sum_{s=1}^{t-1} a_s {t-1 \choose s-1}$ number of integers in each column of array $B$.
Since for all $s\in [t-1]$ and $i \in [a_s]$, $B_{(s,i)}$ is a $[t, {t \choose s}, {t-1 \choose s-1}, {t \choose s+1}]$ regular PDA in which each integer occurs exactly $s+1$ times. So to find a set $T\subset[S]$, which satisfies the above condition, we start choosing integers from subarray $B_{(s,i)}$ for the minimum value of $s$ for which $a_s\neq 0$. Since the occurrence of an integer in $B_{(s, i)}$ will be less than the occurrence of an integer in $B_{(s', i)}$ if $s<s'$, we can choose a set $T$ with larger cardinality.

The following lemma will help to find a subset of the set of integers in $B_{(s,i)}$ which satisfies condition \eqref{T'}.

\begin{lemma} \label{lemma2}
For a $(s+1)-\left[ t, {t \choose s}, {t-1 \choose s-1}, {t \choose s+1} \right]$ regular PDA $B_{(s,i)}$, for some $s\in [t-1]$ and $i \in [a_s]$, defined in the proof of Lemma \ref{lemma1}, and a positive integer $z$, there exist a set $T \subseteq S$ such that
$$|T \cap C_{B_{(s,i)}}(j)| \leq z, \ \forall j \in [t],$$
where $S$ is the set of all integers in $B_{(s, i)}$ and $C_{B_{(s,i)}}(j)$ denotes the set of integers appearing in $j$-th column of $B_{(s,i)}$. The cardinality of the set $T$ is
$$|T|=\begin{cases}
\left \lfloor \frac{t}{s+1} \right \rfloor z, & \text{if} \ z < {t-1 \choose s}, \\
{t \choose s+1}, & \text{if} \ z \geq {t-1 \choose s}.
\end{cases}$$
(A way to find a set $T$ with given cardinality, when $z < {t-1 \choose s}$, is presented in Function 1.)
\end{lemma}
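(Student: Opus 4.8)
The plan is to reinterpret the integer entries of $B_{(s,i)}$ combinatorially and then exhibit the set $T$ directly. By the description of $B_{(s,i)}$ in the proof of Lemma~\ref{lemma1}, its integer entries are in bijection with the $(s+1)$-subsets $U$ of $[t]$ (an integer is a pair $(U,i)$ with $|U|=s+1$), and the integer $(U,i)$ occupies exactly the columns $j$ with $j\in U$. Hence the requirement $|T\cap C_{B_{(s,i)}}(j)|\le z$ for every column $j$ is nothing but the statement that, viewing $T$ as a family of $(s+1)$-subsets of $[t]$, every point $j\in[t]$ lies in at most $z$ members of $T$. So the lemma reduces to: find a family $T$ of distinct $(s+1)$-subsets of $[t]$ in which each point has degree at most $z$ and whose size equals the claimed value. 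A double count of incidences shows that any such family satisfies $(s+1)|T|=\sum_{j\in[t]}\deg(j)\le tz$, which motivates the target value $\lfloor t/(s+1)\rfloor\,z$.

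First I would dispose of the case $z\ge\binom{t-1}{s}$. The number of $(s+1)$-subsets of $[t]$ containing a fixed point is exactly $\binom{t-1}{s}$, so in the full family $S$ every point already has degree $\binom{t-1}{s}\le z$. Thus $T=S$ meets the constraint with $|T|=\binom{t}{s+1}$, giving the second branch immediately.

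For the main case $z<\binom{t-1}{s}$ I would build $T$ in $z$ \emph{layers}. Each layer is a partial parallel class: a collection of $m:=\lfloor t/(s+1)\rfloor$ pairwise disjoint $(s+1)$-subsets of $[t]$ (only a partial partition, since $s+1$ need not divide $t$). Within one layer every point is covered at most once, so across $z$ mutually block-disjoint layers every point is covered at most $z$ times, and the union contains exactly $mz=\lfloor t/(s+1)\rfloor z$ blocks, matching the first branch. Both the degree bound and the exact count then follow at once from this layered structure; the only thing that must be guaranteed is that no $(s+1)$-subset is repeated across the $z$ layers. This distinctness is exactly what the explicit procedure in Function~1 supplies (equivalently, it is furnished by a Baranyai-type decomposition of the complete $(s+1)$-uniform hypergraph on $t$ points into near-parallel classes, of which $\binom{t-1}{s}$ regular layers are available).

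The main obstacle is precisely this last point: producing $z$ partial parallel classes that are simultaneously internally disjoint (for the degree bound) and mutually block-disjoint (for distinctness). Naive cyclic shifts of a single base class do not work in general, since a shift can return an already-used block, so the layers must be chosen more carefully, which is the role of Function~1; the hypothesis $z<\binom{t-1}{s}$ is exactly the budget certifying that enough distinct layers exist before any point is forced past degree $z$. Once the $z$ block-disjoint partial parallel classes are in hand, taking $T$ to be the union of their blocks completes the argument.
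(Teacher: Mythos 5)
Your proposal is correct and follows essentially the same route as the paper's proof: take $T=S$ when $z\geq\binom{t-1}{s}$, and otherwise take $T$ to be the union of $z$ partial parallel classes of pairwise disjoint $(s+1)$-subsets, each of size $\left\lfloor \frac{t}{s+1}\right\rfloor$, with Function~1 supplying the explicit construction. If anything, you are slightly more careful than the paper in flagging that the $z$ layers must be mutually block-disjoint for the cardinality count to be exact, a point the paper's phrase ``union of different $z$ number of sets like $T'$'' leaves implicit.
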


\begin{proof}
If $z \geq {t-1 \choose s}$ then take $T=S$. Since the number of integers in a column of $B_{(s,i)}$ is ${t \choose s} - {t-1 \choose s-1}={t-1 \choose s}$, we have
\begin{equation}\label{T''}
|T \cap C_{B_{(s,i)}}(j)| = {t-1 \choose s} \leq z, \ \forall j \in [t],
\end{equation}
and clearly, $|T|={t \choose s+1}$. Now consider the case when $z < {t-1 \choose s}$.
Each element $(Y \cup \{j\},i)$ in $S$ corresponds to a subset $Y'=Y \cup \{j\}$ of $[t]$ of size $s+1$, and corresponding to each subset $Y' \subseteq [t]$ of size $s+1$, there is an element $(Y', i)$ in the set $S$, and $(Y', i)$ will appear in $B_{(s,i)}$ at the positions
$B_{(s,i)} ((Y'\backslash \{y\}, i), y) \ \text{for all} \ y \in Y',$
i.e., integers corresponding to a set $Y'$ of size $s+1$ will appear in the column $y$, for all $y \in Y'$. Consider a set $T'= \{(Y', i) \mid Y' \subseteq [t], |Y'| =s+1\}$ such that $Y'_1 \cap Y'_2 =\emptyset$ for $Y'_1, Y'_2 \in T'$ and $Y'_1 \neq Y'_2$. The maximum possible cardinality of such a set is $\left \lfloor \frac{t}{s+1} \right \rfloor$; let $|T'|=\left \lfloor \frac{t}{s+1} \right \rfloor$. Clearly, $|T' \cap C_{B_{(s,i)}}(j)| \leq 1, \ \forall j \in [t].$
Now consider set $T$ as the union of different $z$ number of sets like $T'$. Then
$$|T \cap C_{B_{(s,i)}}(j)| \leq z, \ \forall j \in [t],$$
and $|T|=z\left \lfloor \frac{t}{s+1} \right \rfloor.$

\end{proof}

\begin{algorithm}
\begin{algorithmic}[1]
\Procedure{Function 1}{$s,z$}
 \For{$j \in [1: z ]$}
\State $E \gets [t], \ T'(j) \gets \emptyset$.
\While{$|E| \geq s+1$}
\State Choose a set $U \subseteq E$ such that $|U|=s+1$ 
\State $T'(j) \gets T'(j) \cup \{(U, i)\}$.
\State $E \gets E \backslash U$.
\EndWhile
\EndFor 
\State $T=T'(1) \cup T'(2) \cup \ldots \cup T'(z)$.
\EndProcedure
\end{algorithmic}
\end{algorithm}

\begin{note}  In Step 5 of Function 1, a set $U \subseteq E$ such that $|U|=s+1$ will exist, for all $j \in [1:z]$, because $z < {t-1 \choose s}$.
\end{note}

Now, we will construct a set $T$ satisfying condition \eqref{T'} for the array $B$ constructed in \eqref{arrayB}. The following theorem and Algorithm \ref{algo2} give a way to construct such a set $T$. Consider a set $W$ in which all the values of $s \in [t-1]$ for which $a_s \neq 0$ are arranged in increasing order. Let
\begin{equation}\label{W}
W=\{ s \in [t-1] \mid a_s \neq 0 \} = \{ s_1, s_2, \ldots, s_w \},
\end{equation}
such that $s_1 \leq s_2 \leq \cdots \leq s_w \leq t-1$.

\begin{theorem} \label{thm6}
For $[K', F', Z', S]$ PDA $B$ and $F \times K$ array $P$ defined in \eqref{arrayB} and \eqref{arrayP}, respectively, there exists a set $T \subseteq S$ such that
\begin{equation}\label{T2}
|T \cap C_{B}(j)| \leq Z-Z', \ \forall \ j \in [t],
\end{equation}
where $S$ is the set of all integers in $B$ and $C_{B}(j)$ denotes the set of integers appearing in the $j$-th column of $B$, and the cardinality of the set $T$ is defined as follows.

Consider the set $W$ defined in \eqref{W}. If $(a+1) {t-1 \choose s_j} + \sum_{b=1}^{j-1} \alpha_b > Z-Z' \geq a {t-1 \choose s_j} + \sum_{b=1}^{j-1} \alpha_b$, for some $j \in \{1,2, \ldots, w\}$ and $a \in \{0,1, \ldots, a_{s_j}-1 \}$, then
\begin{multline*}
|T|= \sum_{b=1}^{j-1} a_{s_b} {t \choose s_b+1} + a {t \choose s_j+1} + \left \lfloor \frac{t}{s_j+1} \right \rfloor \\ \left( Z-Z' - \sum_{b=1}^{j-1} \alpha_b - a {t-1 \choose s_j} \right),
\end{multline*}
where $\alpha_b = a_{s_b} {t-1 \choose s_b}$.\\ 
(A way to find a set $T$ with given cardinality is presented in Algotithm \ref{algo2}.)
\end{theorem}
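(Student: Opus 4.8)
The plan is to build $T$ greedily subarray-by-subarray, exploiting the decomposition $B=\bigcup_{s\in W}\bigcup_{i\in[a_s]}B_{(s,i)}$ established in Lemma \ref{lemma1}, together with the fact proved there that the integer entries of distinct subarrays $B_{(s,i)}$ are pairwise disjoint. Because of this disjointness, for every column $j\in[t]$ the intersection $T\cap C_B(j)$ splits as a disjoint union $\bigcup_{s,i}\big(T\cap C_{B_{(s,i)}}(j)\big)$, so the per-column budget $Z-Z'$ can be allocated independently across the subarrays and the contributions simply add up. Each column of a single subarray $B_{(s,i)}$ contains exactly ${t\choose s}-{t-1\choose s-1}={t-1\choose s}$ integers (as computed in the proof of Lemma \ref{lemma2}); hence fully emptying one copy $B_{(s,i)}$ costs exactly ${t-1\choose s}$ of each column's budget while contributing its entire complement of ${t\choose s+1}$ distinct integers to $T$.

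First I would order the active values of $s$ as $s_1\le s_2\le\cdots\le s_w$ (the set $W$ of \eqref{W}) and process the subarrays in this increasing order; the heuristic justification, as noted before the theorem, is that an integer of $B_{(s,i)}$ occurs $s+1$ times, so deleting integers from smaller-$s$ subarrays consumes the column budget more slowly and yields a larger $T$. Concretely, I would fully empty every copy of each group $s_1,\ldots,s_{j-1}$, consuming $\sum_{b=1}^{j-1}\alpha_b$ of each column's budget (with $\alpha_b=a_{s_b}{t-1\choose s_b}$), and then fully empty $a$ of the $a_{s_j}$ copies of group $s_j$, consuming a further $a{t-1\choose s_j}$. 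The hypothesis $(a+1){t-1\choose s_j}+\sum_{b=1}^{j-1}\alpha_b>Z-Z'\ge a{t-1\choose s_j}+\sum_{b=1}^{j-1}\alpha_b$ guarantees that after these full deletions the residual budget $r:=Z-Z'-\sum_{b=1}^{j-1}\alpha_b-a{t-1\choose s_j}$ satisfies $0\le r<{t-1\choose s_j}$, so the next copy $B_{(s_j,a+1)}$ cannot be fully emptied.

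For this partially used copy I would invoke Lemma \ref{lemma2} with $z=r<{t-1\choose s_j}$: it produces a subset of the integers of $B_{(s_j,a+1)}$ meeting each column in at most $r$ entries and having cardinality $\big\lfloor\tfrac{t}{s_j+1}\big\rfloor r$, and Algorithm \ref{algo2} (via Function 1) assembles exactly this $T$. Verifying the column constraint \eqref{T2} is then immediate: for each $j$,
$$|T\cap C_B(j)|\le\sum_{b=1}^{j-1}\alpha_b+a{t-1\choose s_j}+r=Z-Z',$$
where the disjointness of Lemma \ref{lemma1} lets me add the contributions and Lemma \ref{lemma2} bounds the partial copy by $r$. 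Summing the integer counts — ${t\choose s_b+1}$ per fully emptied copy of group $s_b$ and $\big\lfloor\tfrac{t}{s_j+1}\big\rfloor r$ from the partial copy — then yields the stated value of $|T|$. I expect the main obstacle to be the bookkeeping that keeps the per-column accounting uniform across all $t$ columns simultaneously; this is precisely what the regularity of each $B_{(s,i)}$ and the cross-subarray disjointness from Lemma \ref{lemma1} secure, since every column of a given subarray carries the same number ${t-1\choose s}$ of integers, so a single budget figure governs all columns at once.
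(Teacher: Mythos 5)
Your proposal is correct and follows essentially the same route as the paper's own proof: the greedy processing of the subarrays $B_{(s,i)}$ in increasing order of $s$, fully exhausting copies of groups $s_1,\ldots,s_{j-1}$ and $a$ copies of group $s_j$, then invoking Lemma \ref{lemma2} with the residual budget $z=Z-Z'-\sum_{b=1}^{j-1}\alpha_b-a{t-1 \choose s_j}$ for the partial copy $B_{(s_j,a+1)}$, is exactly the paper's argument. Your treatment is if anything slightly more careful, since you make explicit the cross-subarray disjointness of integers (from Lemma \ref{lemma1}) and the per-column count ${t-1 \choose s}$ that the paper uses implicitly in its column accounting.
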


\begin{proof}
Here, we will always use notation $T$ for the set satisfying condition \eqref{T2}. Let $S_{(s,i)}$ denotes the set of integers in PDA $B_{(s,i)}$. Clearly, $S=\cup_{s\in [t-1], i \in [a_s]} S_{(s,i)}$. As explained before, we start choosing integers for set $T$ from subarray $B_{(s,i)}$ for the minimum value of $s$ for which $a_s\neq 0$, i.e., first, we will start choosing integers from the PDAs $B_{(s_1, 1)}, B_{(s_1, 2)}, \ldots, B_{(s_1, a_{s_1})}$. Therefore, if $Z-Z' < {t-1 \choose s_1}$ then from Lemma \ref{lemma2}, we have a set $T \subseteq S_{(s,i)}$ with $|T| = \left \lfloor \frac{t}{s_1+1} \right \rfloor (Z-Z')$. Now, if $Z-Z' \geq {t-1 \choose s_1}$ then from Lemma \ref{lemma2}, we can select all ${t \choose s_1+1}$ integers of $B_{(s_1, 1)}$ in set $T$, and than $T$ will contain ${t-1 \choose s_1}$ integers from each column of array $B_{(s,i)}$ (as well as $B$). Then we select remaining from next array $B_{(s_1,2)}$ if $a_{s_1}>1$. That means if $2 {t-1 \choose s_1} > Z-Z' \geq {t-1 \choose s_1}$ then we have
$$|T|=\underbrace{{t \choose s_1+1}}_{\text{from} \ B_{(s_1,1)}} + \underbrace{\left \lfloor \frac{t}{s_1+1} \right \rfloor \left(Z-Z'-{t-1 \choose s_1} \right)}_{\text{from} \ B_{(s_1, 2)}}.$$
Continuing this way, we can say that for some $a \in \{0,1, \ldots, a_{s_1}-1\}$ if $(a+1) {t-1 \choose s_1} > Z-Z' \geq a {t-1 \choose s_1}$ then we have
$|T|=a{t \choose s_1+1} + \left \lfloor \frac{t}{s_1+1} \right \rfloor \left(Z-Z'-a{t-1 \choose s_1} \right)$.

If $Z-Z' \geq a_{s_1} {t-1 \choose s_1}$, then we will start selecting integers from the arrays $B_{(s_2,1)}, B_{(s_2, 2)}, \ldots, B_{(s_2, a_{s_2})}$ in a similar manner. Therefore, for $a \in \{0,1, \ldots, a_{s_2}-1\}$ if $(a+1) {t-1 \choose s_2} +\alpha_1 > Z-Z' \geq a {t-1 \choose s_2} + \alpha_1$ then we have
\begin{multline*}
|T|=\underbrace{a_{s_1}{t \choose s_1+1}}_{\text{from} \ B_{(s_1,i)}, i \in[a_{s_1}]} + \underbrace{a{t \choose s_2+1}}_{\text{from} \ B_{(s_2,i)}, i \in [a]} \\
+ \underbrace{ \left \lfloor \frac{t}{s_2+1} \right \rfloor \left(Z-Z'- \alpha_1- a{t-1 \choose s_2} \right) }_{\text{from} \ B_{(s_2,a+1)}} ,
\end{multline*}
where $\alpha_1=a_{s_1} {t-1 \choose s_1}$.

Since we have $w$ elements in the set $W$, we can further continue this process until $s_w$. Therefore, in general If $(a+1) {t-1 \choose s_j} + \sum_{b=1}^{j-1} \alpha_b > Z-Z' \geq a {t-1 \choose s_j} + \sum_{b=1}^{j-1} \alpha_b$, for some $j \in \{1,2, \ldots, w\}$ and $a \in \{0,1, \ldots, a_{s_j}-1 \}$, then
\begin{multline*}
|T|= \underbrace{ \sum_{b=1}^{j-1} a_{s_b} {t \choose s_b+1}}_{\text{from} \ B_{(s_b,i)}, b \in [j-1], i \in[a_{s_b}]} + \underbrace{ a {t \choose s_j+1}}_{\text{from} \ B_{(s_j,i)}, i \in[a]} + \\
\underbrace{ \left \lfloor \frac{t}{s_j+1} \right \rfloor \left( Z-Z' - \sum_{b=1}^{j-1} \alpha_b - a {t-1 \choose s_j} \right)}_{\text{from} \ B_{(s_j,a+1)}},
\end{multline*}
where $\alpha_b = a_{s_b} {t-1 \choose s_b}$.


\end{proof}

\begin{algorithm}
	\caption{Finding a set $T$ described in Theorem \ref{thm6} }
	
	\begin{algorithmic}[1]
\For{$j \in [1: w]$}
\For{$a \in [0:a_{s_j}-1]$}
\State $x \gets \sum_{b=1}^{j-1} a_{s_b} {t-1 \choose s_b}.$
\If{$(a+1) {t-1 \choose s_j} + x > Z-Z' \geq a {t-1 \choose s_j} + x$ }
\State $T_1\gets \bigcup_{b=1}^{j-1}\bigcup_{i=1}^{a_{s_b}} S_{(s_b, i)},$ 
\State $ \qquad  \qquad \qquad\left( |T_1|=\sum_{b=1}^{j-1}a_{s_b} {t \choose s_b+1} \right),$
\State $T_2\gets \bigcup_{i=1}^{a} S_{(s_j, i)},$ 
\State $\qquad \qquad  \qquad \left( |T_2|=a {t \choose s_j+1} \right),$
\State $z\gets Z-Z'-x-a{t-1 \choose s_j},$
\State $T_3=$Function1$(s_j, z),$ 
\State $ \qquad  \qquad \qquad\left( |T_3|=z \left \lfloor \frac{t}{s_j+1} \right \rfloor \right),$
\State $T=T_1 \cup T_2 \cup T_3$.
\EndIf
\EndFor
\EndFor	
	\end{algorithmic}
	\label{algo2}
\end{algorithm}

From Corollary \ref{thm4} and Theorem \ref{thm6}, the following result follows directly.

\begin{corollary}
For $a_s\in \{0,1, \ldots, \lambda_s^t\}, s \in [t-1]$, such that $|\mathcal{R}|> \lambda_1$, HpPDA $(P, \overline{B})$ gives a $(K, K', N)$ hotplug coded caching scheme with cache memory $M$ and rate ${R}$ as follows
$$\frac{M}{N}=\frac{\lambda_1}{\sum_{s=1}^{t-1} a_s {t \choose s}} \ \text{and} \ R=\frac{\sum_{s=1}^{t-1} a_s {t \choose s+1} - |T|}{\sum_{s=1}^{t-1} a_s {t \choose s}},$$
where array $\overline{B}$ is obtained by removing all the integers in set $T$ (defined in Theorem \ref{thm6}) from the array $B$ given in \eqref{arrayB}.
\end{corollary}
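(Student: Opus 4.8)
The plan is to read this corollary off as a direct combination of Corollary~\ref{thm4} (the unimproved $t$-Scheme), Theorem~\ref{thm6} (which supplies the set $T$), and the general rate-reduction argument of Section~\ref{improved}; almost all of the work is bookkeeping of parameters. I would begin from the HpPDA $(P,B)$ built from the $t$-$(v,k,\lambda)$ design in Theorem~\ref{thm3}, recording from Lemma~\ref{lemma1} and Theorem~\ref{thm3} that $F'=\sum_{s=1}^{t-1}a_s\binom{t}{s}$, $S=\sum_{s=1}^{t-1}a_s\binom{t}{s+1}$, $Z'=\sum_{s=1}^{t-1}a_s\binom{t-1}{s-1}$, and $Z=\lambda_1$. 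By Corollary~\ref{thm4} this already yields a $(v,t,N)$ scheme with $M/N=\lambda_1/F'$ and $R=S/F'$; the only remaining task is to account for the deleted transmissions.

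Next I would invoke Theorem~\ref{thm6} to fix a set $T\subseteq S$ of integers of $B$ with $|T\cap C_B(j)|\le Z-Z'$ for every column $j\in[t]$, i.e.\ satisfying \eqref{T2}. The key observation is that \eqref{T2} is exactly the hypothesis \eqref{T'} (equivalently \eqref{T}) required by the improvement method of Section~\ref{improved}, so that method applies verbatim. I would then set $\overline{B}$ equal to $B$ with every integer in $T$ replaced by null.

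I would then verify that $(P,\overline{B})$ is still a legitimate HpPDA so that Algorithm~\ref{algo1} may be run on it. Nulling integers leaves conditions C1 and C3 of the PDA definition intact and keeps each surviving integer of $S\setminus T$ present, so (after a harmless relabelling onto $[1:S-|T|]$) $\overline{B}$ is a $[t,F',Z',S-|T|]$-PDA; and since only integers---never stars---are removed, the star pattern of $\overline{B}$ coincides with that of $B$, whence for each $\tau$ the very same $\zeta$ constructed in the proof of Theorem~\ref{thm3} still gives $[P]_{\zeta\times\tau}\myeq\overline{B}$, so \eqref{c} holds. Applying Theorem~\ref{thm2.1} to $(P,\overline{B})$ then gives subpacketization $F'$, memory ratio $M/N=Z/F'=\lambda_1/F'$---unchanged because $P$, $Z$, and $F'$ are untouched---and rate equal to (number of integers in $\overline{B}$)$/F'=(S-|T|)/F'$. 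Substituting the values of $F'$, $S$, and $Z$ recorded above produces the stated $M/N$ and $R$.

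The one substantive point, and the place I would be most careful, is decodability after deletion, which is where the bound $Z-Z'$ in \eqref{T2} is actually used: by that bound each column $j$ of $\overline{B}$ still contains at least $(F'-Z')-(Z-Z')=F'-Z$ integers, so every active user recovers at least $F'-Z$ coded subfiles of its demanded file from the retained transmissions $X_s$, $s\in S\setminus T$. These sit at row-indices disjoint from the $Z$ coded subfiles already in its cache, giving at least $F'$ distinct coded subfiles, which by the MDS property of the $[F,F']$ code suffice to reconstruct the file. Everything else reduces to substitution of the parameters furnished by Lemma~\ref{lemma1} and Theorem~\ref{thm6}.
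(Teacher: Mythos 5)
Your proposal is correct and follows exactly the route the paper intends: the paper states this corollary with no separate proof, asserting it ``follows directly'' from Corollary~\ref{thm4} and Theorem~\ref{thm6} via the rate-reduction method of Section~\ref{improved}, which is precisely the combination you carry out. Your added verifications---that nulling all occurrences of the integers in $T$ preserves the star pattern (so the same $\zeta$ witnesses \eqref{c}), and that the column bound $Z-Z'$ leaves each user at least $F'-Z$ transmitted plus $Z$ cached coded subfiles for MDS decoding---are exactly the justifications the paper itself supplies in Section~\ref{improved} and Theorem~\ref{thm2.1}, so they fill in the details faithfully rather than taking a different path.
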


\begin{example*}[Example \ref{ex6} continued]
Consider the Case 1) for which $a_1=1$, $a_2=2$ and $|\mathcal{R}|=9$. We get
$\frac{M}{N}=\frac{7}{9}$ and the rate $R=\frac{5}{9}$. 

We have $Z-Z'=2$ and $W=\{s_1, s_2\}=\{1,2\}$. The following condition is statisfied for $j=1$ and $a=1$.
\begin{equation}\label{Z-Z'}
(a+1) {t-1 \choose s_j} + \sum_{b=1}^{j-1} \alpha_b > Z-Z' \geq a {t-1 \choose s_j} + \sum_{b=1}^{j-1} \alpha_b.
\end{equation}
 Hence, from Theorem \ref{thm6}, we have a set $T=\{ (12,1), (13,1), (23,1)\}$ statisfying condition \eqref{T'} with cardinality 
\begin{align*}
|T|&= \sum_{b=1}^{j-1} a_{s_b} {t \choose s_b+1} + a {t \choose s_j+1} + \left \lfloor \frac{t}{s_j+1} \right \rfloor \\ 
&\qquad \qquad\left( Z-Z' - \sum_{b=1}^{j-1} \alpha_b - a {t-1 \choose s_j} \right)\\
&=0+1.{3\choose 2}+\left\lfloor \frac{3}{2} \right \rfloor \left( 2 - 0 - 1.{2 \choose 1} \right) \\
&=3.
\end{align*}
Therefore, the improved rate for this case is $R=\frac{5-3}{9}=\frac{2}{9}$.

Similarly consider the Case 2) for which $a_1=2$, $a_2=1$ and $|\mathcal{R}|=9$. We get
$\frac{M}{N}=\frac{7}{9}$ and the rate $R=\frac{7}{9}$. Here  $Z-Z'=3$, and the condition \eqref{Z-Z'} is statisfied for $j=1$ and $a=1$. From Theorem \ref{thm6}, we have a set $T=\{ (12,1), (13,1), (23,1), (12,2)\}$ statisfying condition \eqref{T'} with cardinality $|T|=4$ and the corresponding improved rate is $R=\frac{7-4}{9}=\frac{3}{9}$.

For the last case (Case 3)), where $a_1=2$, $a_2=2$ and $|\mathcal{R}|=12$. We get
$\frac{M}{N}=\frac{7}{12}$ and the rate $R=\frac{8}{12}$. Here  $Z-Z'=1$, and the condition \eqref{Z-Z'} is statisfied for $j=1$ and $a=0$. Again from Theorem \ref{thm6}, we have a set $T=\{ (12,1)\}$ statisfying condition \eqref{T'} with cardinality $|T|=1$ and the corresponding improved rate is $R=\frac{8-1}{12}=\frac{7}{12}$.
\end{example*}

In Example \ref{ex6}, we got two memory-rate pairs $\left(\frac{M}{N}, R\right)$ for an $(8,3, N)$ hotplug coded caching system which are $\left(\frac{7}{9}, \frac{2}{9}\right)$ and $\left(\frac{7}{12}, \frac{7}{12}\right)$. The first pair $\left(\frac{7}{9}, \frac{2}{9}\right)$ statisfies the cut-set bound given in Lemma \ref{cutset}, and hence it is optimal. The second pair $\left(\frac{7}{12}, \frac{7}{12}\right)$ does not achieve the cut-set bound, but it gives better rate than the other scheme. The rate $R_{base}=\frac{3}{5}=0.6$ is achieved for the memory $\frac{M}{N}=\frac{7}{12}$ by using the memory sharing between two memory-rate pairs $(\frac{1}{2}, \frac{4}{5})$ and $(\frac{5}{8}, \frac{1}{2})$ achieved by the baseline scheme \cite{MT2022}. Further the rate $R_{MT}=\frac{5}{8}=0.625$ is achieved for the memory $\frac{M}{N}=\frac{7}{12}$ by using the memory sharing between the memory-rate pair $(\frac{1}{3}, 1)$ achieved by the MT scheme (\textit{first new scheme} in \cite{MT2022}) and the trivial memory-rate pair $(1,0)$. Clearly, the rate $R=\frac{7}{12}=0.58$ achieved by the improved version of $t$-Scheme is better. The comparison is given in Fig. \ref{EX6} and Table \ref{tab3}. Also, the subpaketization of the $t$-Scheme is $12$, whereas the subpaketization of the baseline scheme for the memory $\frac{M}{N}=\frac{7}{12}$ is $126$.

\begin{table}[!t]
\caption{Example \ref{ex6}:  Comparison of the schemes for a $(8,3,N)$ hotplug coded caching system. \label{tab3}}
\centering
\begin{tabular}{| m{1cm} | m{2cm}| m{1cm} | m{1cm} | m{1cm}| } 
 \hline
 & Parameters & Improved $t$-Scheme & Baseline scheme \cite{MT2022} & MT scheme  \cite{MT2022} \\ 
 \hline
\multirow{2}{4em}{$\frac{M}{N}=\frac{7}{9}$} & Subpacketization & $9$ & $36$ & $4$  \\ 
 & Rate & $0.222$ & $0.25$ & $0.333$ \\
\hline
\multirow{2}{4em}{$\frac{M}{N}=\frac{7}{12}$} & Subpacketization & $12$ & $126$ & $4$  \\ 
 & Rate & $0.58$ & $0.6$ & $0.625$ \\
\hline
\end{tabular}
\end{table}

 \begin{figure*}
  \includegraphics[width=0.83\linewidth]{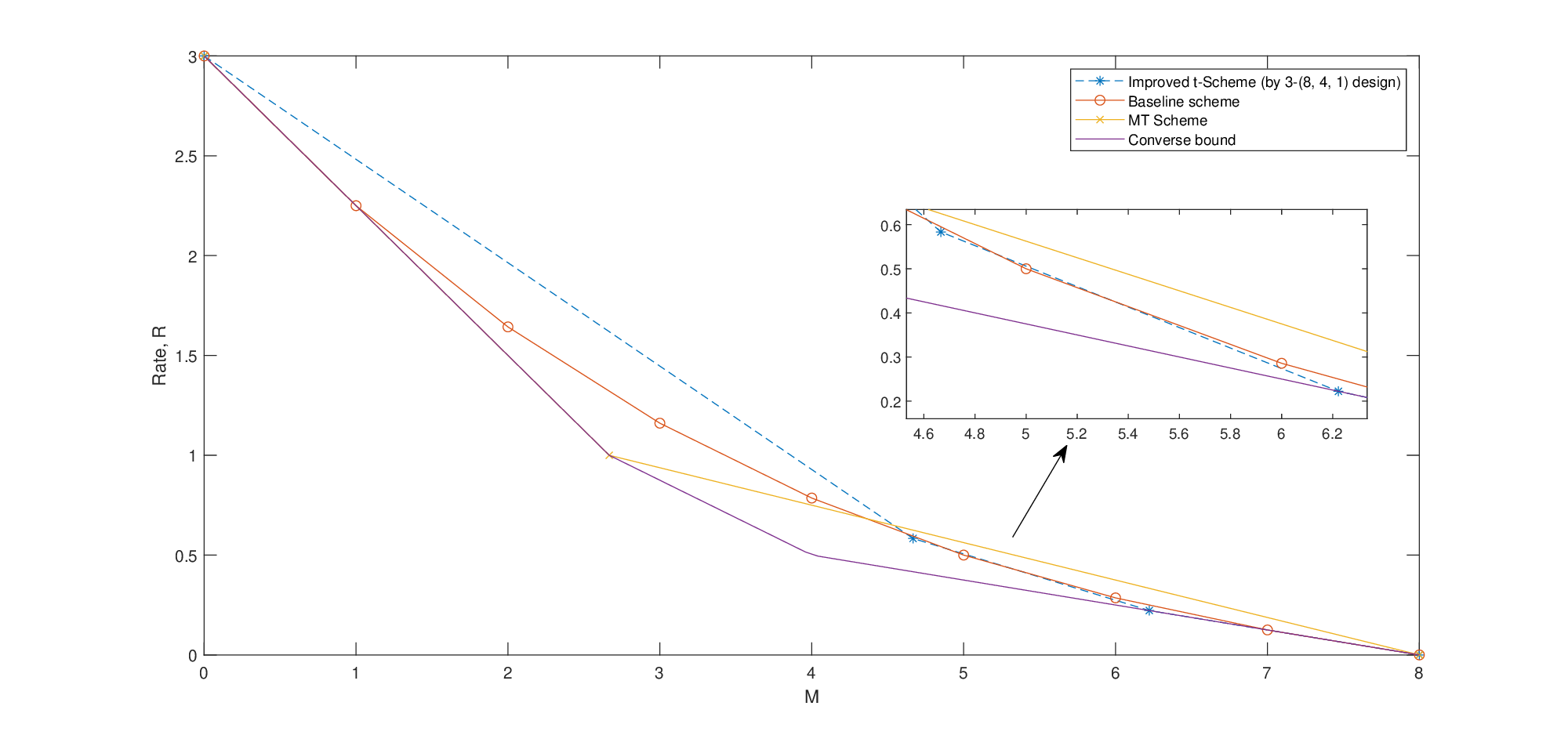}
  \caption{Example 6: (8, 3, 8) hotplug coded caching system.}
  \label{EX6}
\end{figure*}

\section{Numerical evaluations}\label{num}
In this section, we consider several other examples of $t$-designs and show that we are getting better rate using improved $t$-Scheme than the existing schemes in some higher memory ranges. Also, we get some memory-rate points which achieves the cut-set bound. 

There exists a $5$-$(12,6,1)$ design and a construction of this design is given in \cite{S2004}. By using improved $t$-Scheme for this design we get a rate for $(12,5, 12)$ hotplug coded caching system shown in Fig. \ref{EX7}. Further, as described in Remark \ref{rem6} that a $t$-$(v, k, \lambda)$ design $(X, \mathcal{A})$ is also an $s$-$(v, k, \lambda_s)$ design, for all $1 \leq s \leq t$. Therefore, there exists a $4$-$(12, 6, 4)$ design using which we get a $(12, 4, N)$ hotplug coded caching scheme with $4$ number of active users using improved $t$-Scheme, and Fig. \ref{EX8} shows the memory-rate tradeoff corresponding to it.
Similarly, a $3$-$(12, 6, 12)$ design obtained from $5$-$(12,6,1)$ design will correspond to a $(12, 3, N)$ hotplug coded caching scheme for which memory-rate tradeoff is given in Fig. \ref{EX9} with $N=12$. It is clear from Fig. \ref{EX9} that the improved $t$-Scheme achives the cut-set bound from the memory point $M=9.432$, while the baseline scheme achives it from $M=11$. 

 \begin{figure*}
  \includegraphics[width=0.83\linewidth]{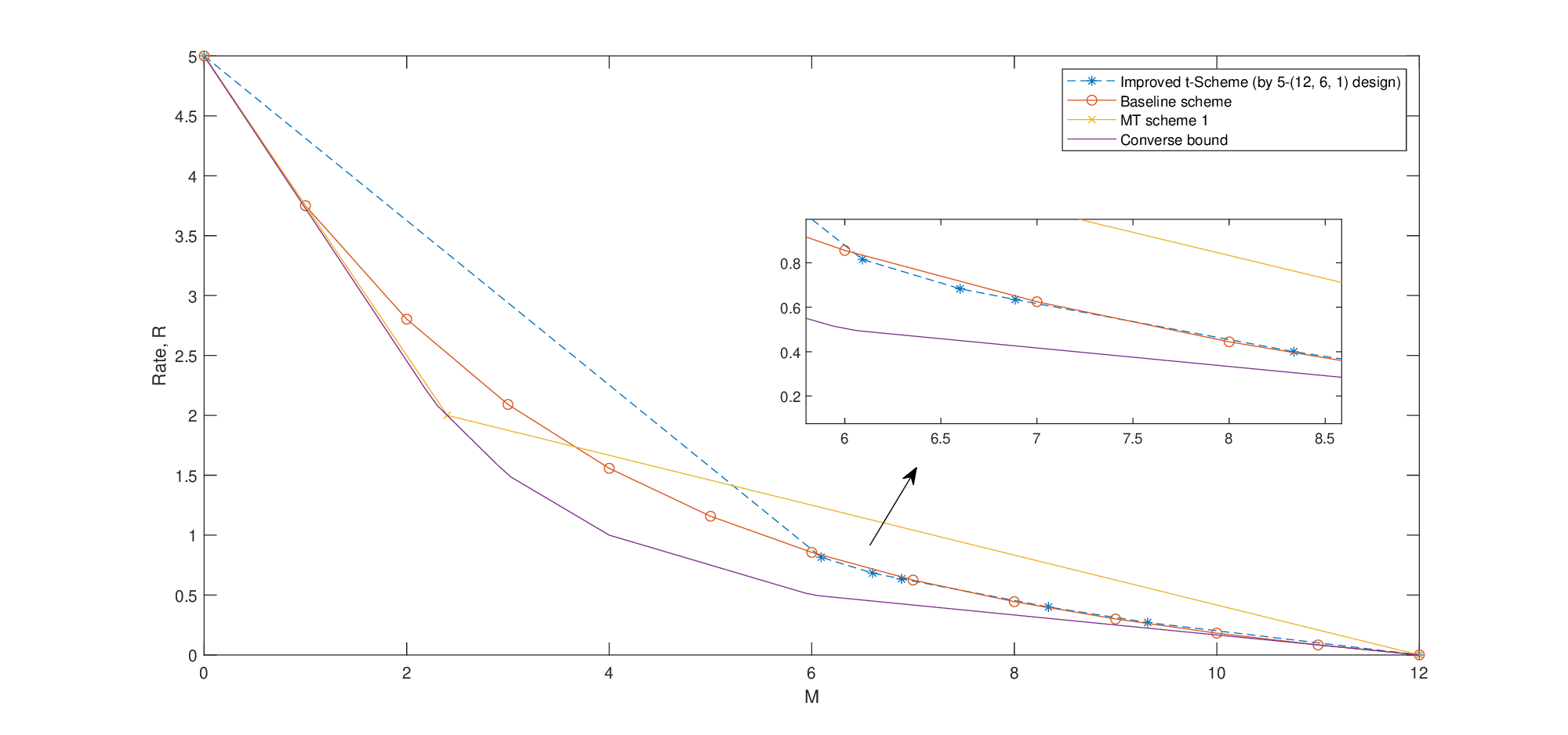}
  \caption{(12, 5, 12) hotplug coded caching system.}
  \label{EX7}
\end{figure*}
 \begin{figure*}
  \includegraphics[width=0.83\linewidth]{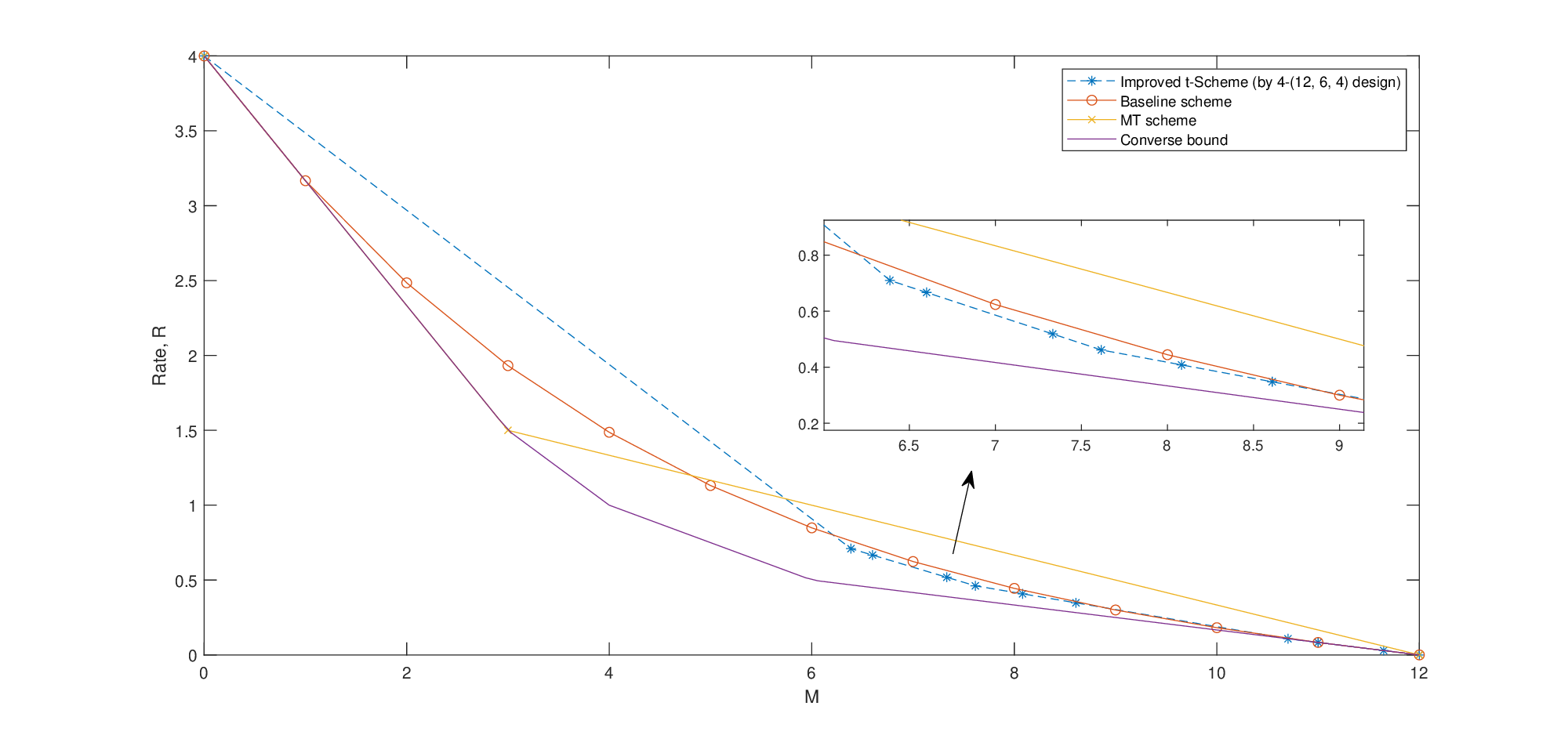}
  \caption{(12, 4, 12) hotplug coded caching system.}
  \label{EX8}
\end{figure*}
\begin{figure*}
  \includegraphics[width=0.83\linewidth]{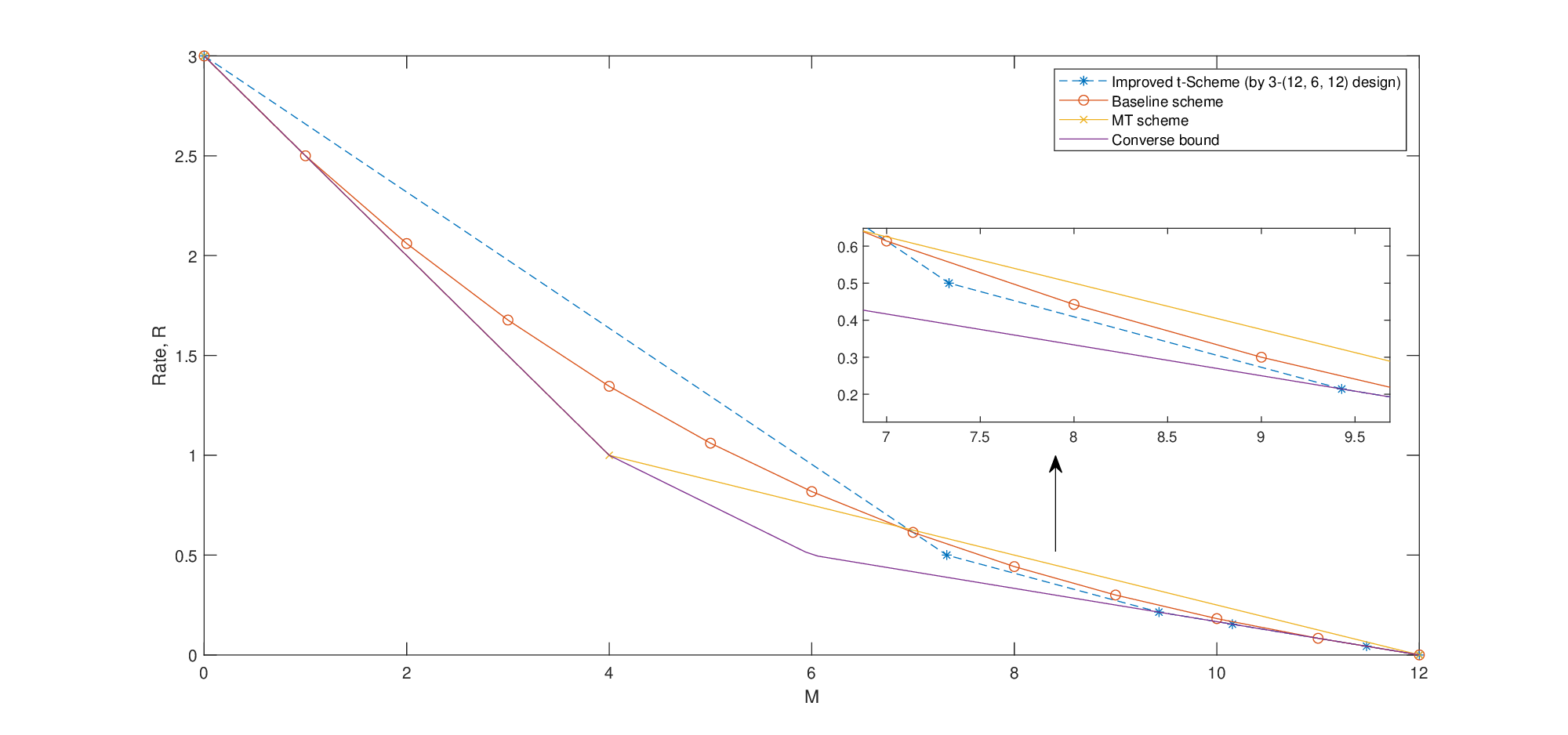}
  \caption{(12, 3, 12) hotplug coded caching system.}
  \label{EX9}
\end{figure*}

There exist another $3$-$(12,4,3)$ design with $v=12$ points and $t=3$, which also corresponds to a $(12, 3, N)$ hotplug coded caching scheme. In Fig. \ref{EX10}, we compare the memory-rate tradeoff of the schemes obtained by $3$-$(12,6,12)$ design and  $3$-$(12,4,3)$ design for a $(12, 3, 12)$ hotplug coded caching system.

\begin{figure*}
  \includegraphics[width=0.83\linewidth]{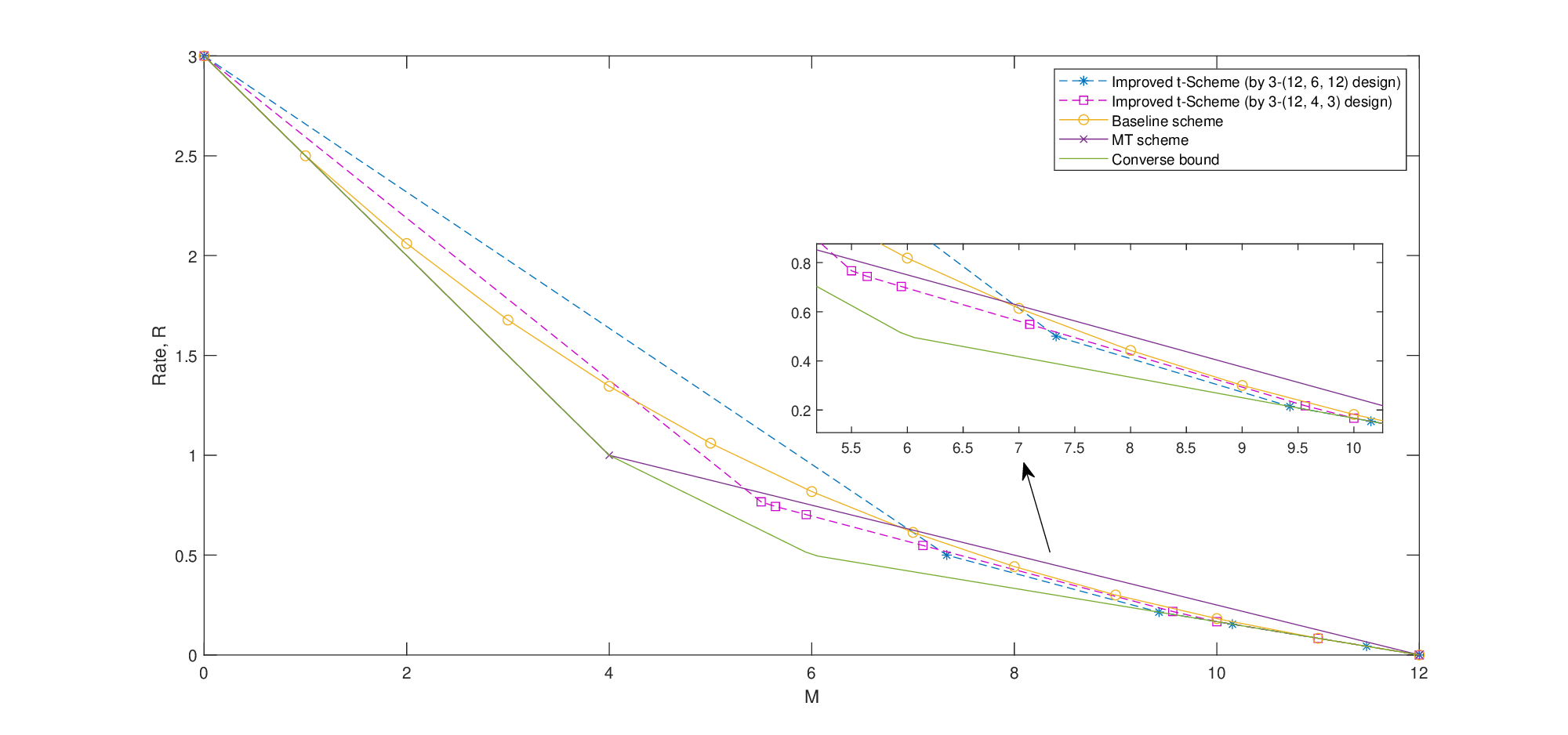}
  \caption{(12, 3, 12) hotplug coded caching system.}
  \label{EX10}
\end{figure*}

In a similar way, we obtain $2$-$(12,6,30)$ design from $5$-$(12,6,1)$ design and a $2$-$(12,4,15)$ design from $3$-$(12,4,3)$ design. The both designs correspond to a $(12, 2, N)$ hotplug coded caching scheme and the comparison of the memory-rate tradeoff of these  schemes is given in Fig. \ref{EX11} for $N=12$. It can be seen that both schemes achieves the cut-set bound in the higher memory ranges. However, the MT scheme \cite{MT2022} achieves the converse bound completely for a $(12, 2, 12)$ hotplug coded caching system.

\begin{figure}
  \includegraphics[width=\linewidth]{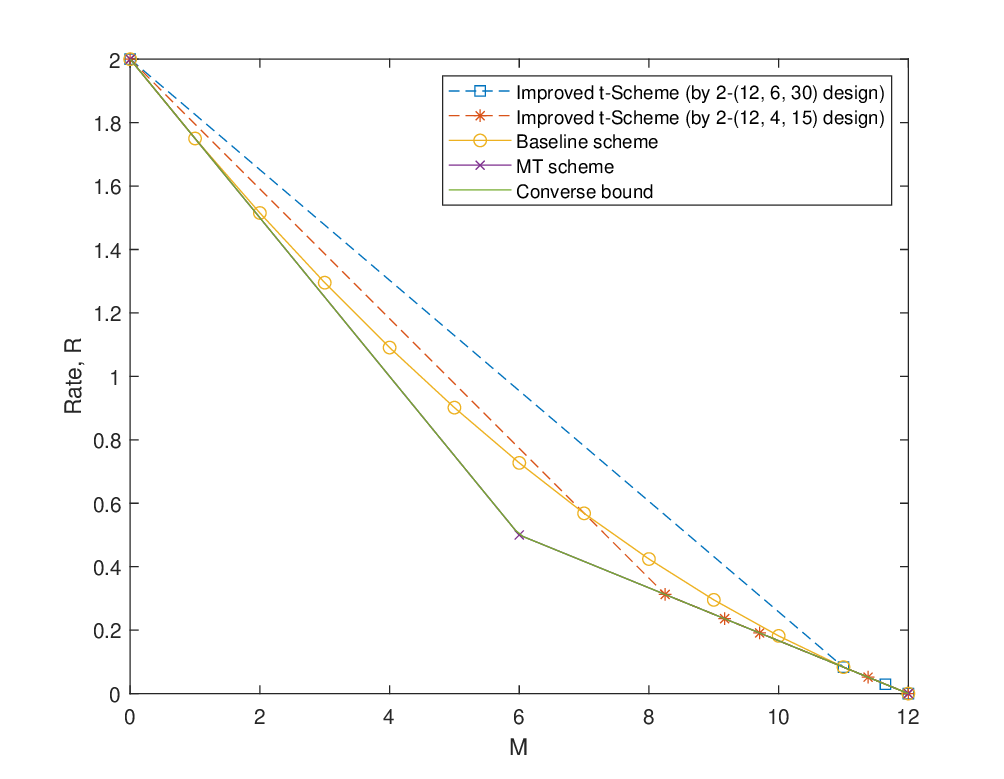}
  \caption{(12, 2, 12) hotplug coded caching system.}
  \label{EX11}
\end{figure}

\section{Optimality} \label{OPTIMAL}
In this section, we focus on the hotplug coded caching schemes obtained using the improved $t$-Scheme from the following two clasess of $t$-designs. 
\begin{itemize}
\item For all even integers $v \geq 6$, there exists a $3$-$(v,4,3)$ design. 
\item For all prime powers $q$, there exists a $3$-$(q^2+1, q+1, 1)$ design.
\end{itemize}
Using these clasess of designs, we get some optimal memory-rate points achieving cut-set bound for $(v, 3, N)$ hotplug coded caching system.
 
\begin{theorem}\label{opti}
For a $(K, 3, N)$ hotplug coded caching system, where $K$ is an even integer and $K \geq 6$, we get the optimal memory-rate points 
$$\left(\frac{M}{N}, R \right)=\left(\frac{{K-1 \choose 2}}{3(a_1+a_2)}, \frac{3(a_1+a_2)-{K-1 \choose 2}}{3(a_1+a_2)}\right),$$
where $0\leq a_1 \leq {K-4 \choose 2}$, $0 \leq a_2 \leq \frac{3}{2}(K-4)$ and 
${K-1 \choose 2}=a+3a_1+2a_2$ for some $0 \leq a < a_2$. 
\end{theorem}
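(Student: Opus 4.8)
The plan is to instantiate the improved $t$-Scheme of Section~\ref{tdesignPDA} with $t=3$ on a $3$-$(K,4,3)$ design, which exists for every even $K\geq 6$ by \cite{S2004}, and then to show that the resulting memory--rate pair collapses onto the $s=1$ branch of the cut-set bound of Lemma~\ref{cutset}.

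First I would record the design parameters. By Theorem~\ref{thm1.1}, a $3$-$(K,4,3)$ design has $\lambda_1=\frac{3\binom{K-1}{2}}{\binom{3}{2}}=\binom{K-1}{2}$, and by \eqref{lambda-ts} it has $\lambda_1^3=\binom{K-4}{2}$ and $\lambda_2^3=\frac{3}{2}(K-4)$; these are exactly the stated upper limits for $a_1$ and $a_2$. Specializing \eqref{parameters1} to $t=3$ gives $F'=3(a_1+a_2)$, $Z'=a_1+2a_2$, and $S=3a_1+a_2$, while Theorem~\ref{thm3} gives $K=v$ and $Z=\lambda_1=\binom{K-1}{2}$. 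Hence Corollary~\ref{thm4} produces $\frac{M}{N}=\frac{Z}{F'}=\frac{\binom{K-1}{2}}{3(a_1+a_2)}$, the claimed memory.

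Next I would evaluate the removable set $T$ through Theorem~\ref{thm6}. The hypothesis $\binom{K-1}{2}=a+3a_1+2a_2$ with $0\leq a<a_2$ gives $Z-Z'=\binom{K-1}{2}-(a_1+2a_2)=a+2a_1$. Taking, for concreteness, $a_1,a_2\geq 1$ so that $W=\{1,2\}$, the defining inequality of Theorem~\ref{thm6} is met at the index $s_j=2$ with local parameter $a$: the bound $a\geq 0$ forces the $s=1$ subarrays to be used up first, while $a<a_2$ keeps the local parameter inside the admissible range $\{0,\dots,a_2-1\}$. The floor contribution then vanishes because $Z-Z'-2a_1-a\binom{2}{2}=0$, leaving $|T|=a_1\binom{3}{2}+a\binom{3}{3}=3a_1+a$ (the same formula is obtained when $a_1=0$, where $W=\{2\}$). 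Substituting into the improved rate $R=\frac{S-|T|}{F'}$ (the corollary following Theorem~\ref{thm6}) gives
$$R=\frac{(3a_1+a_2)-(3a_1+a)}{3(a_1+a_2)}=\frac{a_2-a}{3(a_1+a_2)}=\frac{3(a_1+a_2)-\binom{K-1}{2}}{3(a_1+a_2)},$$
which is the claimed rate; here $a<a_2$ also certifies $|\mathcal{R}|=3(a_1+a_2)>\binom{K-1}{2}=\lambda_1$, so the scheme is well defined, and it forces $a_2\geq 1$.

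Finally, optimality is immediate once $R$ is read as $R=1-\frac{M}{N}$, which is exactly the value of the last display. Taking $s=1$ in Lemma~\ref{cutset} (so that $\lfloor N/1\rfloor=N$) shows that every achievable scheme obeys $R\geq 1-\frac{M}{N}$; since our scheme attains this value it meets the converse with equality, and the point is optimal. I expect the only real obstacle to be the bookkeeping inside Theorem~\ref{thm6}: one must verify that the defining inequality selects the index $s_j=2$ and local parameter $a$, rather than an index inside the $s=1$ block, which is precisely where the hypothesis $0\leq a<a_2$ is used. Once $|T|=3a_1+a$ is pinned down, the collapse $R=1-\frac{M}{N}$ and the resulting match with the cut-set bound are routine.
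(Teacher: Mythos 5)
Your proposal is correct and takes essentially the same route as the paper: the paper first derives the general point \eqref{optipoints} for an arbitrary $3$-$(v,k,\lambda)$ design---using the hypothesis $\lambda_1=a+3a_1+2a_2$ to place Theorem~\ref{thm6} at the index $s_j=2$ with local parameter $a$, computing $|T|=3a_1+a=\lambda_1-2a_2$ so that the improved rate collapses to $R=1-\frac{M}{N}$---and then specializes $\lambda_1=\binom{K-1}{2}$, $\lambda_1^3=\binom{K-4}{2}$, $\lambda_2^3=\frac{3}{2}(K-4)$ for the $3$-$(K,4,3)$ design, exactly as you do. Your explicit check of the $a_1=0$ edge case, the observation that $a<a_2$ certifies $|\mathcal{R}|>\lambda_1$, and the spelled-out appeal to the $s=1$ cut-set bound are minor additions that the paper leaves implicit.
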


\begin{example} \label{ex7}
Consider a $(K, 3, N)$ hotplug coded caching system, where $K=12$. We have $0\leq a_1 \leq 28$ and $0 \leq a_2 \leq 12$. For $a_1=10$ and $a_2=12$, the condition ${K-1 \choose 2}=a+3a_1+2a_2$ is satisfied for $a=1$. Therefore, we have the following optimal memory rate point
$$\left(\frac{M}{N}, R \right)=\left(\frac{5}{6}, \frac{1}{6}\right),$$
which satisfy the cut-set bound as $R=1-\frac{M}{N}$. Similarly, we get the following points for different choices of $a_1$ and $a_2$.
\begin{itemize}
\item For $a_1=11$ and $a_2=11$, we get $\left(\frac{M}{N}, R \right)=\left(\frac{55}{66}, \frac{11}{66}\right)$.\\
\item For $a_1=9$ and $a_2=12$, we get $\left(\frac{M}{N}, R \right)=\left(\frac{55}{63}, \frac{8}{63}\right)$.\\
\item For $a_1=8$ and $a_2=12$, we get $\left(\frac{M}{N}, R \right)=\left(\frac{55}{60}, \frac{5}{60}\right)$.\\
\item For $a_1=7$ and $a_2=12$, we get $\left(\frac{M}{N}, R \right)=\left(\frac{55}{57}, \frac{2}{57}\right)$.
\end{itemize}
The memory-rate tradeoff for this example is given in Fig. \ref{OS12}.
\end{example}

\begin{figure*}
 \includegraphics[width=0.83\linewidth]{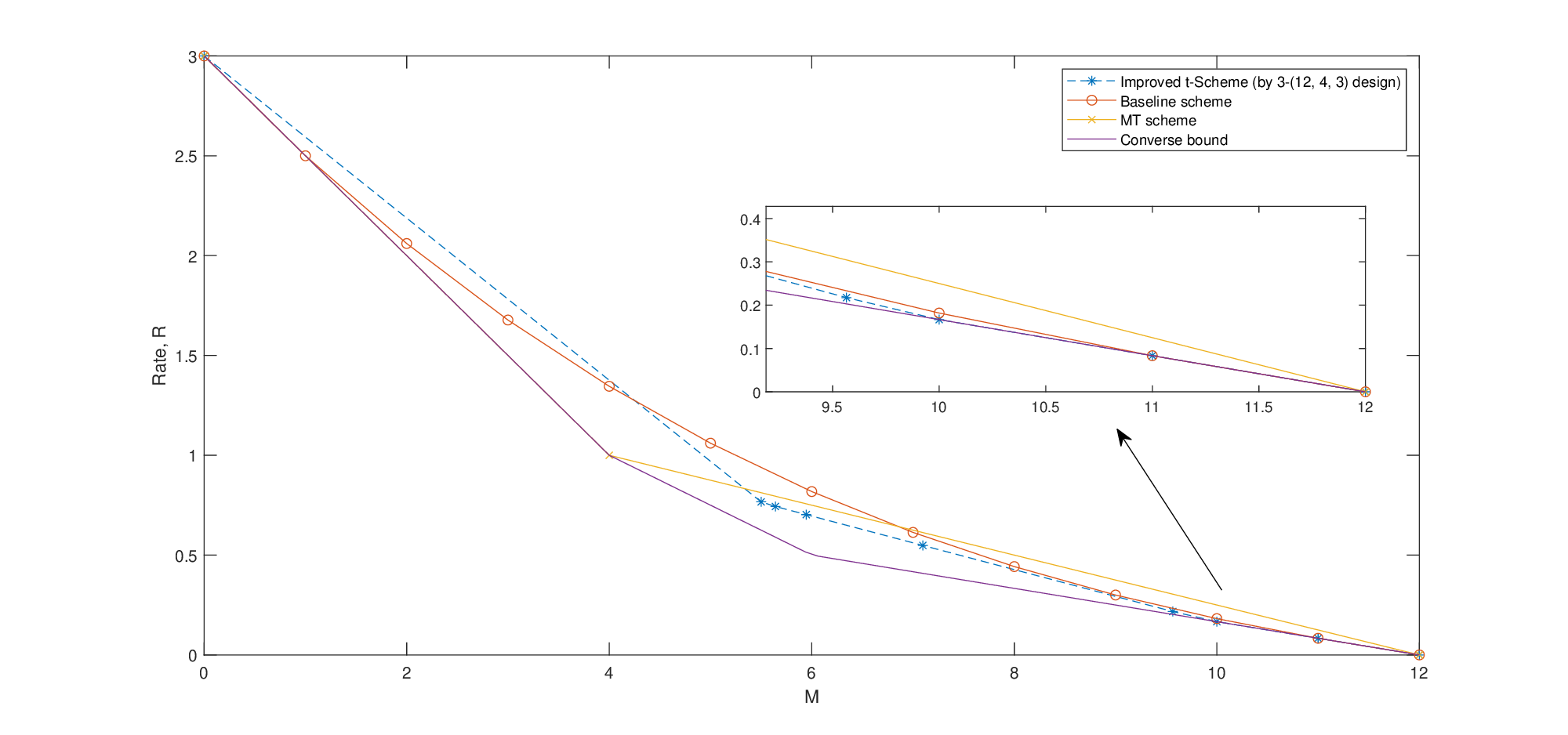}
  \caption{Example \ref{ex7}: (12, 3, 12) hotplug coded caching system.}
  \label{OS12}
\end{figure*}

\begin{theorem}\label{optiprime}
For a $(q^2+1, 3, N)$ hotplug coded caching system, where $q$ is a prime power and $q \geq 3$, we get the optimal memory-rate points 
$$\left(\frac{M}{N}, R \right)=\left(\frac{q(q+1)}{3(a_1+a_2)}, \frac{3(a_1+a_2)-q(q+1)}{3(a_1+a_2)}\right),$$
where $0\leq a_1 \leq q^2-q-1$, $0 \leq a_2 \leq q$ and $q(q+1)=a+3a_1+2a_2$ for some $0 \leq a < a_2$. 
\end{theorem}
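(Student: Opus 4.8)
The plan is to instantiate the improved $t$-Scheme of Section \ref{improvescheme2} with $t=3$ on the $3$-$(q^2+1,q+1,1)$ design guaranteed by \cite[Theorem 9.27]{S2004}, and to show that the resulting rate equals $1-M/N$, which is precisely the cut-set bound of Lemma \ref{cutset} evaluated at $s=1$. First I would record the relevant design constants for $v=q^2+1$, $k=q+1$, $\lambda=1$, $t=3$. Theorem \ref{thm1.1} gives $Z=\lambda_1=\frac{\binom{q^2}{2}}{\binom{q}{2}}=q(q+1)$, and the corollary following Theorem \ref{thm1.2} gives $\lambda_2^3=\frac{\binom{q^2-2}{q-1}}{\binom{q^2-2}{q-2}}=q$ and $\lambda_1^3=\frac{\binom{q^2-2}{q}}{\binom{q^2-2}{q-2}}=q^2-q-1$. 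These are exactly the admissible ranges $0\le a_1\le q^2-q-1$ and $0\le a_2\le q$ in the statement, so for any such $a_1,a_2$ (with $a_2\ge 1$, forced by $a<a_2$) Theorem \ref{thm3} produces a valid HpPDA.

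Next I would read off the basic parameters from Lemma \ref{lemma1} and Corollary \ref{thm4} specialized to $t=3$, so $s$ ranges over $\{1,2\}$: namely $F'=\sum_s a_s\binom{3}{s}=3(a_1+a_2)$, $Z'=\sum_s a_s\binom{2}{s-1}=a_1+2a_2$, and $S=\sum_s a_s\binom{3}{s+1}=3a_1+a_2$. This already yields $\frac{M}{N}=\frac{Z}{F'}=\frac{q(q+1)}{3(a_1+a_2)}$, matching the statement. The heart of the argument is the rate. Here I would invoke the hypothesis $q(q+1)=a+3a_1+2a_2$ with $0\le a<a_2$, which is just $Z=a+3a_1+2a_2$, to get $Z-Z'=(a+3a_1+2a_2)-(a_1+2a_2)=2a_1+a$. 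Since $0\le a<a_2$, the value $Z-Z'=2a_1+a$ lies in $[\,2a_1,\,2a_1+a_2)$, i.e.\ in the branch $j=2$ of Theorem \ref{thm6} (with $\alpha_1=a_1\binom{2}{1}=2a_1$, and the index $a$ in that theorem equal to our $a$). Plugging $s_1=1,\,s_2=2$ into the $|T|$-formula gives $|T|=a_1\binom{3}{2}+a\binom{3}{3}+\lfloor 3/3\rfloor\bigl((2a_1+a)-2a_1-a\bigr)=3a_1+a$.

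Finally I would substitute into the improved-rate formula $R=\frac{S-|T|}{F'}$ to obtain $R=\frac{(3a_1+a_2)-(3a_1+a)}{3(a_1+a_2)}=\frac{a_2-a}{3(a_1+a_2)}$, and simplify $a_2-a=3(a_1+a_2)-(a+3a_1+2a_2)=3(a_1+a_2)-q(q+1)$, so that $R=\frac{3(a_1+a_2)-q(q+1)}{3(a_1+a_2)}=1-\frac{M}{N}$, as claimed. Optimality then follows because any $(q^2+1,3,N)$ hotplug rate is lower bounded by the classical converse with $K'=3$ users (Lemma \ref{cutset}), whose $s=1$ term forces $R\ge 1-\frac{M}{N}$; our point meets this with equality, and $a<a_2$ guarantees $M/N<1$, so the point is nontrivial and in range. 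The main obstacle I anticipate is the bookkeeping in the $|T|$ computation: one must confirm that $Z-Z'$ lands in the $j=2$ branch and that the selection of $T$ from the subarrays $B_{(1,i)}$ and $B_{(2,i)}$ respects the column cap \eqref{T'}. The degenerate case $a_1=0$ (where $W=\{2\}$ and selection begins at $s_1=2$) should be checked separately, but it reduces to $|T|=a$ and the same conclusion $R=1-M/N$.
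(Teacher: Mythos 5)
Your proposal is correct and takes essentially the same route as the paper: the paper likewise derives the general point \eqref{optipoints} for a $3$-$(v,k,\lambda)$ design by computing $Z=\lambda_1$, $F'=3(a_1+a_2)$, $Z'=a_1+2a_2$, $S=3a_1+a_2$, invoking the $j=2$ branch of Theorem \ref{thm6} to get $|T|=\lambda_1-2a_2=3a_1+a$ and hence $R=1-\frac{M}{N}$, and then substitutes $\lambda_1=q(q+1)$, $\lambda_1^3=q^2-q-1$, $\lambda_2^3=q$ for the $3$-$(q^2+1,q+1,1)$ design, with optimality from the $s=1$ cut-set bound of Lemma \ref{cutset}. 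Your explicit verification of the branch condition and of the degenerate case $a_1=0$ is, if anything, slightly more careful than the paper, which treats only the case where both $a_1$ and $a_2$ are nonzero in the displayed computation.
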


\begin{example}\label{ex8}
Consider a $(K, 3, N)$ hotplug coded caching system, where $K=q^2+1$ and $q=4$, i.e., a $(17, 3, N)$ hotplug coded caching system. We have $0\leq a_1 \leq 11$ and $0 \leq a_2 \leq 4$. For $a_1=3$ and $a_2=4$, the condition $q(q+1)=a+3a_1+2a_2$ is satisfied for $a=3$. Therefore, we have the following optimal memory rate point
$$\left(\frac{M}{N}, R \right)=\left(\frac{20}{21}, \frac{1}{21}\right),$$
which satisfy the cut-set bound as $R=1-\frac{M}{N}$. Similarly, we get the following points for different choices of $a_1$ and $a_2$.
\begin{itemize}
\item For $a_1=5$ and $a_2=2$, we get $\left(\frac{M}{N}, R \right)=\left(\frac{20}{21}, \frac{1}{21}\right)$.\\
\item For $a_1=4$ and $a_2=4$, we get $\left(\frac{M}{N}, R \right)=\left(\frac{20}{24}, \frac{4}{24}\right)$.
\end{itemize}
The memory-rate tradeoff for this example is given in Fig. \ref{OSP17}.
\end{example}

\begin{figure*}
 \includegraphics[width=0.83\linewidth]{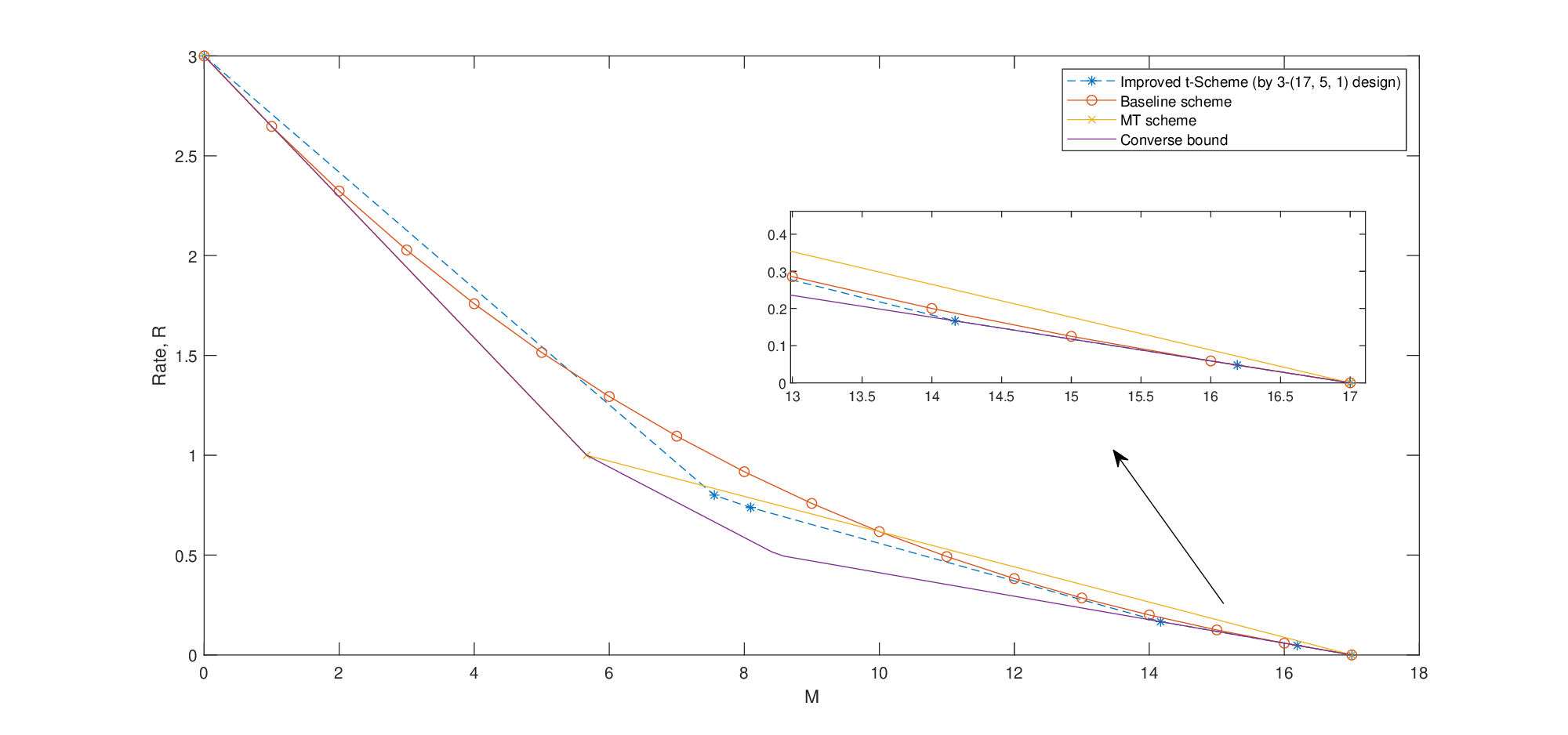}
  \caption{Example \ref{ex8}: (17, 3, 17) hotplug coded caching system.}
  \label{OSP17}
\end{figure*}

In Fig. \ref{OS&OSP26}, we give a compared memory-rate tradeoffs for a $(26, 3, 20)$ hotplug coded caching system, one obtained from a $3$-$(v,4,3)$ design with $v=26$ and other obtained from a $3$-$(q^2+1, q+1, 1)$ design with $q=5$. The Fig. \ref{OS&OSP26} indicates that each design outperforms the other within certain memory ranges.

\begin{figure*}
 \includegraphics[width=0.83\linewidth]{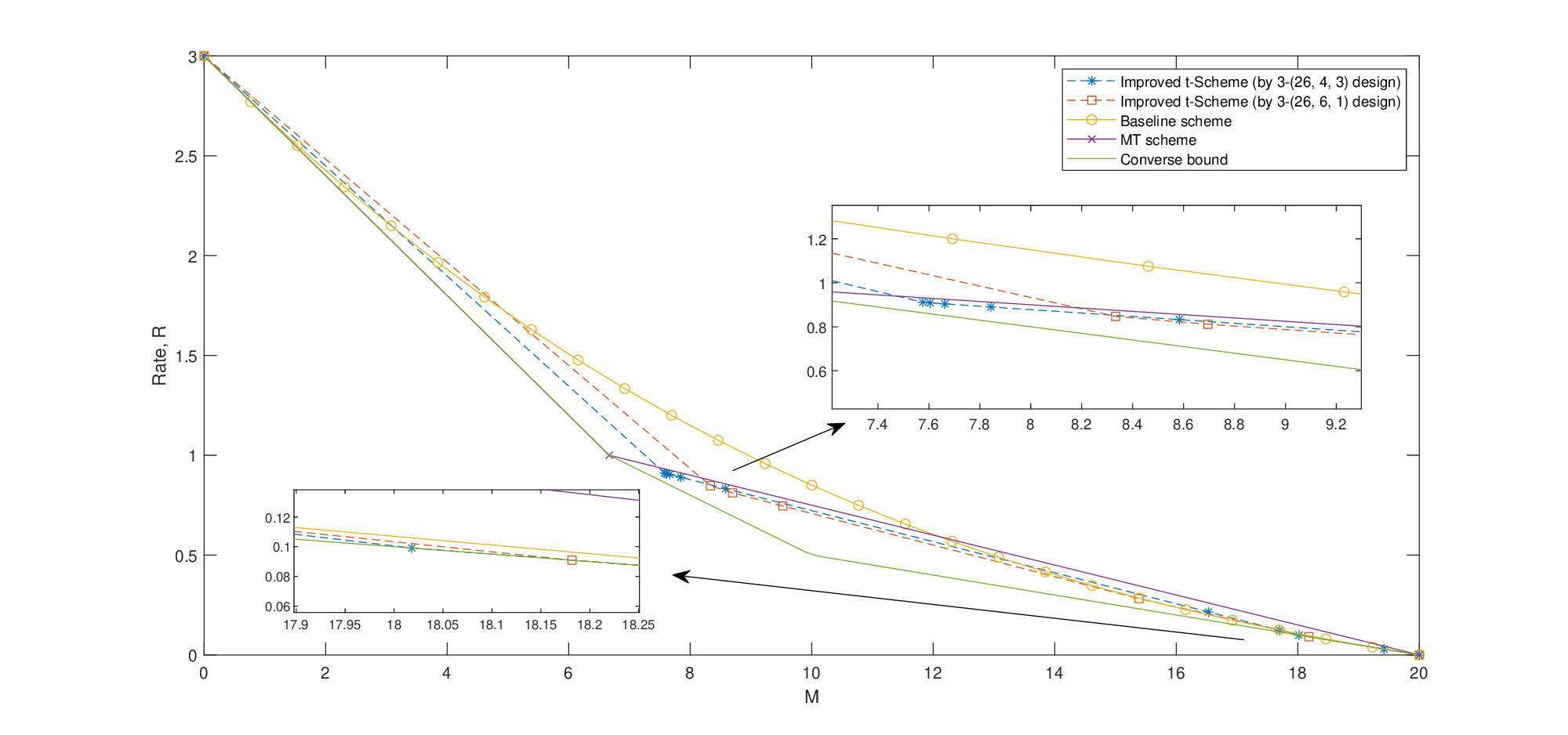}
  \caption{(26, 3, 20) hotplug coded caching system.}
  \label{OS&OSP26}
\end{figure*}

\subsection{\textbf{$(K, 3, N)$ hotplug coded caching scheme from a $3$-$(v, k, \lambda)$ design}}
In this subsection, we find the memory-rate points for a $(K, 3, N)$ hotplug coded caching scheme using a $3$-$(v, k, \lambda)$ design. Using $t$-Scheme for this design, we have
\begin{align*}
Z&=\lambda_1 \\
|\mathcal{R}|&=3a_1+3a_2, \\
Z'&=a_1+2a_2, \\
S&=3a_1+a_2, \\
\frac{M}{N}=\frac{\lambda_1}{3a_1+3a_2} \ &\text{and} \ R=\frac{3a_1+a_2}{3a_1+3a_2},
\end{align*}
where $a_1 \in \{0,1, \ldots, \lambda_1^3\}$ and $a_2 \in \{0,1, \ldots, \lambda_2^3\}$.
Since we have $Z-Z'=\lambda_1-(a_1+2a_2)$, using Theorem \ref{thm6}, we further improve the rate. Considering the case when $a_1$ and $a_2$ both are non-zero, we have $W=\{1,2\}=\{s_1, s_2\}$ and $\alpha_1=a_1 {2 \choose 1}=2a_1$. Now for $j=2$, the condition 
$$(a+1) {t-1 \choose s_j} + \sum_{b=1}^{j-1} \alpha_b > Z-Z' \geq a {t-1 \choose s_j} + \sum_{b=1}^{j-1} \alpha_b$$
given in Theorem \ref{thm6} is satisfied if
$$(a+1) {2 \choose 2} + 2 a_1  >  \lambda_1-(a_1+2a_2) \geq a {2 \choose 2} +  2 a_1 $$
 for some $a \in \{0,1, \ldots, a_2 \}$. That means
$$(a+1) + 2a_1  >  \lambda_1-(a_1+2a_2) \geq a + 2 a_1 $$
 or 
$(a+1) + 3a_1 +2 a_2  >  \lambda_1 \geq a + 3a_1 +2 a_2,$
which is equivalent to 
\begin{equation}
\lambda_1=a + 3a_1 +2 a_2.
\end{equation}
Further, for $j=2$ and $a \in \{0,1, \ldots, a_2 \}$, we have 
\begin{align*}
|T|&= a_1{3 \choose 2} + a {3 \choose 3} + \left \lfloor \frac{3}{3} \right \rfloor \left( Z-Z' - \alpha_1 - a {2 \choose 2} \right)\\
&=3a_1+a+(\lambda_1-(a_1+2a_2)-2a_1 -a)\\
&=\lambda_1 -2a_2.
\end{align*}
Therefore, the improved rate is 
$$R=\frac{S-|T|}{|\mathcal{R}|}=\frac{3a_1+3a_2-\lambda_1}{3a_1+3a_2},$$
and we get the following memory-rate point
\begin{equation}\label{optipoints}
\left(\frac{M}{N}, R \right)=\left(\frac{\lambda_1}{3(a_1+a_2)}, \frac{3(a_1+a_2)-\lambda_1}{3(a_1+a_2)}\right),
\end{equation}
where $0\leq a_1 \leq \lambda_1^3$, $0 \leq a_2 \leq \lambda_2^3$ and $\lambda_1=a+3a_1+2a_2$ for some $0 \leq a < a_2$. 

Using the method given above, we get the proof of Theorem \ref{opti} and Theorem \ref{optiprime} using a $3$-$(v,4,3)$ design, where $v \geq 6$ is an even integer, and 
$3$-$(q^2+1, q+1, 1)$ design, where $q$ is a prime power, respectively.

\begin{proof}[Proof of Theorem \ref{opti}]
Consider $3$-$(v,4,3)$ design, where $v \geq 6$ is an even integer. We have 
\begin{align*}
\lambda_1 &= \frac{\lambda {v-1 \choose t-1}}{{k-1 \choose t-1}}= \frac{3 {v-1 \choose 2}}{{3 \choose 2}} = {v-1 \choose 2}, \\
\lambda_1^3 & = \frac{\lambda {v-t \choose k-1}}{{v-t \choose k-t}}=\frac{3 {v-3 \choose 3}}{{v-3 \choose 1}} = {v-4 \choose 2}, \\
  \lambda_2^3 & = \frac{\lambda {v-t \choose k-2}}{{v-t \choose k-t}} =\frac{3 {v-3 \choose 2}}{{v-3 \choose 1}} = \frac{3}{2}(v-4). 
\end{align*} 
Using these value in equation \eqref{optipoints}, we get the optimal memory-rate points for a $(K, 3, N)$ hotplug coded caching system, where $K=v$ is an even integer and $K \geq 6$.
\end{proof}

\begin{proof}[Proof of Theorem \ref{optiprime}]
Consider $3$-$(q^2+1, q+1, 1)$ design, where $q \geq 3$ is a prime power. We have 
\begin{align*}
\lambda_1 &= \frac{\lambda {v-1 \choose t-1}}{{k-1 \choose t-1}}=  \frac{{q^2 \choose 2}}{{q \choose 2}} = q(q+1), \\
\lambda_1^3 & = \frac{\lambda {v-t \choose k-1}}{{v-t \choose k-t}} = \frac{ {q^2-2 \choose q}}{{q^2-2 \choose q-2}} = q^2-q-1, \\
  \lambda_2^3 & = \frac{\lambda {v-t \choose k-2}}{{v-t \choose k-t}} = \frac{ {q^2-2 \choose q-1}}{{q^2-2 \choose q-2}}  = q. 
\end{align*} 
Using these value in equation \eqref{optipoints}, we get the optimal memory-rate points for a $(q^2+1, 3, N)$ hotplug coded caching system, where  $q \geq 3$ is a prime power.
\end{proof}

\subsection{Comparison with the baseline scheme \cite{MT2022}}
The baseline scheme is a classical YMA scheme \cite{YMA2018} with restricted set of demand vectors $d \in [\min (N, K')]$. For a $(K, K', N)$ hotplug coded caching system, the lower convex envelope of the following memory-rate points are achievable by the baseline scheme.
$$\left(\frac{M_t}{N}, R_t \right) = \left( \frac{{K-1 \choose t-1}}{{K \choose t}}, \frac{{K \choose t+1} - {K-r' \choose t+1}}{{K \choose t}} \right), $$
 for all $t \in \{0,1, \ldots, K\}$, where $r'=\min(N,K')$. This scheme achieves the cut-set bound for $t=K-1, K$, i.e., the baseline scheme start achieving the cut-set bound from the memory $M=N\left(\frac{K-1}{K} \right)$.

Now we show that for a $(K, 3, N)$ hotplug coded caching system, our improved $t$-Scheme start achieving the cut-set bound from the memory less than $N\left(\frac{K-1}{K}\right)$ using Theorem \ref{opti} and Theorem \ref{optiprime}.

For a $3$-$(v,k, \lambda)$ design, $\lambda_1$ is fixed. Therefore, from equation \ref{optipoints}, we get the optimal rate for the memory
$$\frac{M_{min}}{N}=\min\left(\frac{\lambda_1}{|\mathcal{R}|}\right)=\frac{\lambda_1}{\max(|\mathcal{R}|)},$$
where $0\leq a_1 \leq \lambda_1^3$, $0 \leq a_2 \leq \lambda_2^3$ and $\lambda_1=a+3a_1+2a_2$ for some $0 \leq a < a_2$. Since $|\mathcal{R}|=3(a_1+a_2)=\lambda_1+a_2-a$, we have $\max(|\mathcal{R}|)=\lambda_1+\max(a_2)-a=\lambda_1+\lambda_2^3-a$. After choosing maximum value of $a_2$ which is $\lambda_2^3$, we get $a_1=\frac{\lambda_1-2\lambda_2^3-a}{3}$ for some $a, 0 \leq a < a_2$. Here $a_1$ will be an integer for $0 \leq a < 3$. Therefore, we have 
$$\frac{M_{min}}{N}=\frac{\lambda_1}{\lambda_1+\lambda_2^3-a},$$
where $0 \leq a < 3$ such that $a_1=\frac{\lambda_1-2\lambda_2^3-a}{3}$ is an integer.
Now we prove that the memory point $\frac{M_{min}}{N}$ is less than $\frac{K-1}{K}$ for the following cases.
\begin{enumerate}
\item  For a $(K, 3, N)$ hotplug coded caching system using $3$-$(v,4,3)$ design, where $v$ is an even integer and $v \geq 6$, we have
$\lambda_1 = {v-1 \choose 2}, \lambda_1^3 = {v-4 \choose 2}$ and $\lambda_2^3 = \frac{3}{2}(v-4).$  Therefore, we have 
$$\frac{M_{min}}{N}=\frac{{v-1 \choose 2}}{{v-1 \choose 2}+\frac{3}{2}(v-4)-a},$$
where $0\leq a \leq 2$. Since $v=K$ and $a \leq 2$,
\begin{align*}
\frac{M_{min}}{N} &\leq \frac{{K-1 \choose 2}}{{K-1 \choose 2}+\frac{3}{2}(K-4)-2} \\
&=\frac{(K-1)(K-2)}{(K-1)(K-2)+3(K-4)-4}\\
&\leq\frac{(K-1)(K-2)}{(K-1)(K-2)+3(K-2)} \\
&=\frac{K-1}{K+2} <\frac{K-1}{K}.
\end{align*}
\item  For a $(K, 3, N)$ hotplug coded caching system using $3$-$(q^2+1,q+1,1)$ design, where $q$ is a prime power, $q \geq 3$ and $K=q^2+1$, we have
$\lambda_1 = q(q+1), \lambda_1^3 = q^2-q-1$ and $\lambda_2^3 = q.$  Therefore, we have 
$$\frac{M_{min}}{N}=\frac{q(q+1)}{q(q+1)+q-a},$$
where $0\leq a \leq 2$. Since $a \leq 2$,
\begin{align*}
\frac{M_{min}}{N} &\leq \frac{q(q+1)}{q(q+1)+q-2}\\
&=\frac{q(q+1)}{(q+1)^2-3} < \frac{q(q+1)}{(q+1)^2} \\
&=\frac{q}{(q+1)}= \frac{q^2}{q(q+1)}\\
&<  \frac{q^2}{q^2+1}=  \frac{K-1}{K}.
\end{align*}
\end{enumerate}

\section{Conclusion}\label{conclu}
In this paper, we considered a practical scenario of a coded caching model in which some users are inactive at the time of delivery, known as a hotplug coded caching system. We introduced a structure called HpPDA that corresponds to a hotplug coded caching scheme, and then we described the placement and delivery phase of that scheme in Algorithm \ref{algo1}. We provided a method to further improve the rate of the proposed hotplug coded caching scheme using HpPDA. Further, we presented a construction of HpPDAs using $t$-designs and then presented a way to improve the rate. We compare the rate of the scheme obtained from HpPDAs using $t$-designs with the existing schemes. For a hotplug coded caching system with three active users, we proved that the cut-set bound is achieved for some higher memory range. Following are the possible directions for further research.
\begin{enumerate}
\item Two distinct classes of HpPDAs have been introduced in this paper. The first class consists of MAN HpPDAs, and the second class comes from the $t$-designs. However, there exist other HpPDAs which does not fall into these classes, such as Example \ref{ex3}, which can not be obtained by any of these methods. Therefore, exploring further classes of HpPDAs is an interesting problem.
\item Exploring different combinatorial designs from existing literature and using them to create a new class of HpPDAs is an exciting direction.
\item The problem of high subpacketization still remains with the MAN HpPDAs. However, the subpacketization is lower in HpPDAs obtained from $t$-designs compared to the MAN HpPDAs. It will be interesting to find new classes of HpPDAs with lower subpacketization.
\end{enumerate}

\section*{Acknowledgments}
This work was supported partly by the Science and Engineering Research Board (SERB) of the Department of Science and Technology (DST), Government of India, through J.C. Bose National Fellowship to Prof. B. Sundar Rajan and by the C V Raman postdoctoral fellowship awarded to Charul Rajput.

%
%

\begin{thebibliography}{1}
\bibliographystyle{IEEEtran}



\bibitem{MAN2014}
M.~A. Maddah-Ali and U.~Niesen, ``Fundamental limits of caching,'' \emph{IEEE
  Transactions on Information Theory}, vol.~60, no.~5, pp. 2856--2867, 2014.

\bibitem{AYG2016}
M.~M. Amiri, Q.~Yang, and D.~G{\"u}nd{\"u}z, ``Coded caching for a large number
  of users,'' in \emph{2016 IEEE Information Theory Workshop (ITW)}.\hskip 1em
  plus 0.5em minus 0.4em\relax IEEE, 2016, pp. 171--175.

\bibitem{CFL2016}
Z.~Chen, P.~Fan, and K.~B. Letaief, ``Fundamental limits of caching: Improved
  bounds for users with small buffers,'' \emph{IET Communications}, vol.~10,
  no.~17, pp. 2315--2318, 2016.

\bibitem{YMA2017}
Q.~Yu, M.~A. Maddah-Ali, and A.~S. Avestimehr, ``The exact rate-memory tradeoff
  for caching with uncoded prefetching,'' \emph{IEEE Transactions on
  Information Theory}, vol.~64, no.~2, pp. 1281--1296, 2017.

\bibitem{GR2017}
H.~Ghasemi and A.~Ramamoorthy, ``Improved lower bounds for coded caching,''
  \emph{IEEE Transactions on Information Theory}, vol.~63, no.~7, pp.
  4388--4413, 2017.

\bibitem{MKR2021}
P.~N. Muralidhar, D.~Katyal, and B.~S. Rajan, ``Maddah-ali-niesen scheme for
  multi-access coded caching,'' in \emph{2021 IEEE Information Theory Workshop
  (ITW)}.\hskip 1em plus 0.5em minus 0.4em\relax IEEE, 2021, pp. 1--6.

\bibitem{WTP2020}
K.~Wan, D.~Tuninetti, and P.~Piantanida, ``An index coding approach to caching
  with uncoded cache placement,'' \emph{IEEE Transactions on Information
  Theory}, vol.~66, no.~3, pp. 1318--1332, 2020.

\bibitem{MN2014}
M.~A. Maddah-Ali and U.~Niesen, ``Decentralized coded caching attains
  order-optimal memory-rate tradeoff,'' \emph{IEEE/ACM Transactions On
  Networking}, vol.~23, no.~4, pp. 1029--1040, 2014.

\bibitem{KNMD2016}
N.~Karamchandani, U.~Niesen, M.~A. Maddah-Ali, and S.~N. Diggavi,
  ``Hierarchical coded caching,'' \emph{IEEE Transactions on Information
  Theory}, vol.~62, no.~6, pp. 3212--3229, 2016.

\bibitem{NM2016}
U.~Niesen and M.~A. Maddah-Ali, ``Coded caching with nonuniform demands,''
  \emph{IEEE Transactions on Information Theory}, vol.~63, no.~2, pp.
  1146--1158, 2016.

\bibitem{SCK2017}
S.~P. Shariatpanahi, G.~Caire, and B.~H. Khalaj, ``Multi-antenna coded
  caching,'' in \emph{2017 IEEE International Symposium on Information Theory
  (ISIT)}.\hskip 1em plus 0.5em minus 0.4em\relax IEEE, 2017, pp. 2113--2117.

\bibitem{HKD2017}
J.~Hachem, N.~Karamchandani, and S.~N. Diggavi, ``Coded caching for multi-level
  popularity and access,'' \emph{IEEE Transactions on Information Theory},
  vol.~63, no.~5, pp. 3108--3141, 2017.

\bibitem{WC2020}
K.~Wan and G.~Caire, ``On coded caching with private demands,'' \emph{IEEE
  Transactions on Information Theory}, vol.~67, no.~1, pp. 358--372, 2020.

\bibitem{MT2022}
Y.~Ma and D.~Tuninetti, ``On coded caching systems with offline users,'' in
  \emph{2022 IEEE International Symposium on Information Theory (ISIT)}.\hskip
  1em plus 0.5em minus 0.4em\relax IEEE, 2022, pp. 1133--1138.

\bibitem{LX2004}
S.~Ling, and C.~Xing, ``Coding theory: a first course," \emph{Cambridge University Press}, 2004.

\bibitem{LT2021}
J.~Liao and O.~Tirkkonen, ``Fundamental rate-memory tradeoff for coded caching
  in presence of user inactivity,'' \emph{arXiv preprint arXiv:2109.14680},
  2021.


\bibitem{MT2023}
Y.~Ma and D.~Tuninetti, ``Demand privacy in hotplug caching systems,'' \emph{arXiv preprint
  arXiv:2305.06518}, 2023.


\bibitem{YCTC2017}
Q.~Yan, M.~Cheng, X.~Tang, and Q.~Chen, ``On the placement delivery array
  design for centralized coded caching scheme,'' \emph{IEEE Transactions on
  Information Theory}, vol.~63, no.~9, pp. 5821--5833, 2017.

\bibitem{YTCC2017}
Q.~Yan, X.~Tang, Q.~Chen, and M.~Cheng, ``Placement delivery array design
  through strong edge coloring of bipartite graphs,'' \emph{IEEE Communications
  Letters}, vol.~22, no.~2, pp. 236--239, 2017.

\bibitem{SZG2018}
C.~Shangguan, Y.~Zhang, and G.~Ge, ``Centralized coded caching schemes: A
  hypergraph theoretical approach,'' \emph{IEEE Transactions on Information
  Theory}, vol.~64, no.~8, pp. 5755--5766, 2018.

\bibitem{CJYT2019}
M.~Cheng, J.~Jiang, Q.~Yan, and X.~Tang, ``Constructions of coded caching
  schemes with flexible memory size,'' \emph{IEEE Transactions on
  Communications}, vol.~67, no.~6, pp. 4166--4176, 2019.

\bibitem{MW2020}
J.~Michel and Q.~Wang, ``Placement delivery arrays from combinations of strong
  edge colorings,'' \emph{IEEE Transactions on Communications}, vol.~68,
  no.~10, pp. 5953--5964, 2020.

\bibitem{ZCJ2020}
X.~Zhong, M.~Cheng, and J.~Jiang, ``Placement delivery array based on
  concatenating construction,'' \emph{IEEE Communications Letters}, vol.~24,
  no.~6, pp. 1216--1220, 2020.

\bibitem{CJTY2020}
M.~Cheng, J.~Jiang, X.~Tang, and Q.~Yan, ``Some variant of known coded caching
  schemes with good performance,'' \emph{IEEE Transactions on Communications},
  vol.~68, no.~3, pp. 1370--1377, 2019.

\bibitem{ASK2019}
 S.~Agrawal, K.~S.~Sree, and P.~Krishnan, ``Coded caching based on combinatorial designs," in \emph{2019 IEEE International Symposium on Information Theory (ISIT)}, IEEE, 2019, pp. 1227-1231.

\bibitem{LC2022}
J.~Li and Y.~Chang, ``Placement Delivery Arrays Based on Combinatorial Designs," in \emph{IEEE Communications Letters}, vol. 26, no. 2, pp. 296-300, 2022. 

\bibitem{CWEC2023}
M.~Cheng, K.~Wan, P.~Elia, and G.~Caire, ``Coded Caching Schemes for Multiaccess Topologies via Combinatorial Design," arXiv preprint arXiv:2310.20239, 2023.

\bibitem{KMR2021}
D.~Katyal, P.~N.~Muralidhar, and B. S. Rajan, ``Multi-access coded caching schemes from cross resolvable designs," \emph{IEEE Transactions on Communications}, vol.~69, no.~5, pp. 2997-3010, 2021.


\bibitem{YMA2018}
Q.~Yu, M.~A. Maddah-Ali, and A.~S. Avestimehr, ``Characterizing the rate-memory tradeoff in cache networks within a factor of 2,'' \emph{IEEE Transactions on Information Theory},
  vol.~65, no.~1, pp. 647--663, 2018.

\bibitem{S2004}
D.~R. Stinson, \emph{Combinatorial designs: constructions and analysis}.\hskip
  1em plus 0.5em minus 0.4em\relax Springer, 2004, vol. 480.





\end{thebibliography}

\end{document}